\documentclass[aps,superscriptaddress,11pt,twoside]{revtex4}

\usepackage{amsmath,latexsym,amssymb,verbatim,enumerate,graphicx,color,amsthm}

\usepackage{enumitem}
\usepackage{amsfonts}
\usepackage{mathtools}
\usepackage{hyperref}
\usepackage{bm}
\usepackage{standalone}

\usepackage{orcidlink}


\newtheorem{theorem}{Theorem}
\numberwithin{theorem}{section}
\newtheorem{corollary}[theorem]{Corollary}
\newtheorem{lemma}[theorem]{Lemma}
\newtheorem{proposition}[theorem]{Proposition}
\newtheorem{notation}[theorem]{Notation} 

\theoremstyle{definition}
\newtheorem{definition}[theorem]{Definition}
\newtheorem{remark}[theorem]{Remark}

\def\squareforqed{\hbox{\rlap{$\sqcap$}$\sqcup$}}
\def\qed{\ifmmode\squareforqed\else{\unskip\nobreak\hfil
\penalty50\hskip1em\null\nobreak\hfil\squareforqed
\parfillskip=0pt\finalhyphendemerits=0\endgraf}\fi}
\def\endenv{\ifmmode\;\else{\unskip\nobreak\hfil
\penalty50\hskip1em\null\nobreak\hfil\;
\parfillskip=0pt\finalhyphendemerits=0\endgraf}\fi}

\mathchardef\ordinarycolon\mathcode`\:
\mathcode`\:=\string"8000
\def\vcentcolon{\mathrel{\mathop\ordinarycolon}}
\begingroup \catcode`\:=\active
  \lowercase{\endgroup
  \let :\vcentcolon
  }

\newcommand{\nc}{\newcommand}
\nc{\rnc}{\renewcommand} 
\newcommand{\de}{\mathrm{d}}
\nc{\beq}{\begin{equation}}
\nc{\eeq}{{\end{equation}}} 
\newcommand{\so}{\mathrm{SO}}
\newcommand{\midd}{\textup{ s.t. }}

\nc{\diag}{\operatorname{diag}}
\DeclareMathOperator\End{End} 
\DeclareMathOperator\cProj{Pr} 
\DeclareMathOperator\cproj{pr} 
\DeclareMathOperator\cPoj{P} 

\def\s{\sigma}

\nc{\glneq}{{\raisebox{0.6ex}{$>$}  \hspace*{-1.8ex} \raisebox{-0.6ex}{$<$}}}
\nc{\gleq}{{\raisebox{0.6ex}{$\geq$}\hspace*{-1.8ex} \raisebox{-0.6ex}{$\leq$}}}

\nc{\RR}{{{\mathbb R}}}
\nc{\NN}{{{\mathbb N}}}
\nc{\ZZ}{{{\mathbb Z}}}
\nc{\QQ}{{{\mathbb Q}}}
\nc{\UU}{{{\mathbb U}}}

\nc{\cB}{{\cal B}}
\nc{\cR}{{\cal R}}

\nc{\vholder}[1]{\rule{0pt}{#1}}

\nc{\ob}[1]{#1}

\def\beq{\begin {equation}}
\def\eeq{\end {equation}}

\nc{\region}{\cS\cW}

\allowdisplaybreaks 

\usepackage{etoolbox}
\makeatletter
\patchcmd{\frontmatter@RRAP@format}{(}{}{}{}
\patchcmd{\frontmatter@RRAP@format}{)}{}{}{}
\renewcommand\Dated@name{}
\makeatother

\begin{document}

\title{Characterising the Haar measure on the $p$-adic rotation groups \protect\\ via inverse limits of measure spaces}

\author{Paolo Aniello\,\orcidlink{0000-0003-4298-8275}}
 \email{paolo.aniello@unina.it}
  \affiliation{Dipartimento di Fisica  ``Ettore Pancini'', Universit\`a di Napoli  ``Federico II'', Complesso Universitario di Monte S.~Angelo, via Cintia, I-80126 Napoli, Italy}
  \affiliation{Istituto Nazionale di Fisica Nucleare, Sezione di Napoli,\\
Complesso Universitario di Monte S.~Angelo, via Cintia, I-80126 Napoli, Italy}
\author{Sonia L'Innocente\,\orcidlink{0000-0002-9224-7451}}
 \email{sonia.linnocente@unicam.it}
  \affiliation{School of Science and Technology, University of Camerino,
              Via Madonna delle Carceri 9, I-62032 Camerino, Italy}
\author{Stefano Mancini\,\orcidlink{0000-0002-3797-3987}}
 \email{stefano.mancini@unicam.it}
 \affiliation{School of Science and Technology, University of Camerino,
              Via Madonna delle Carceri 9, I-62032 Camerino, Italy}
 \affiliation{Istituto Nazionale di Fisica Nucleare, Sezione di Perugia,\\ via A.~Pascoli, I-06123 Perugia, Italy}              
\author{Vincenzo Parisi\,\orcidlink{0000-0001-9563-6471}}             
 \email{vincenzo.parisi@unicam.it}   
  \affiliation{School of Science and Technology, University of Camerino,
              Via Madonna delle Carceri 9, I-62032 Camerino, Italy}
  \affiliation{Istituto Nazionale di Fisica Nucleare, Sezione di Perugia,\\ via A.~Pascoli, I-06123 Perugia, Italy}
\author{Ilaria Svampa\,\orcidlink{0000-0002-1389-0319}}
 \email{ilaria.svampa@unicam.it}
  \affiliation{School of Science and Technology, University of Camerino,
              Via Madonna delle Carceri 9, I-62032 Camerino, Italy}
  \affiliation{Istituto Nazionale di Fisica Nucleare, Sezione di Perugia,\\ via A.~Pascoli, I-06123 Perugia, Italy}
  \affiliation{Departament de F\'isica: Grup d'Informaci\'o Qu\`antica,
              Universitat Aut\`onoma de Barcelona, ES-08193 Bellaterra (Barcelona), Spain}
\author{Andreas Winter\,\orcidlink{0000-0001-6344-4870}}
 \email{andreas.winter@uab.cat}
 \affiliation{Departament de F\'isica: Grup d'Informaci\'o Qu\`antica,
              Universitat Aut\`onoma de Barcelona, ES-08193 Bellaterra (Barcelona), Spain}
 \affiliation{ICREA---Instituci\'o Catalana de la Recerca i Estudis Avan\c{c}ats, Pg. Llu\'is Companys 23, ES-08001 Barcelona, Spain}
\affiliation{Institute for Advanced Study, Technische Universit\"at M\"unchen,\\  Lichtenbergstra{\ss}e 2a, D-85748 Garching, Germany}


\begin{abstract}
We determine the Haar measure on the compact $p$-adic special orthogonal groups of rotations $\so(d)_p$ in dimension $d=2,3$, by exploiting the machinery of inverse limits of measure spaces, for every prime $p>2$. We characterise the groups $\so(d)_p$ as inverse limits of finite groups, of which we provide parametrisations and orders, together with an equivalent description through a multivariable Hensel lifting. Supplying these finite groups with their normalised counting measures, we get an inverse family of Haar measure spaces for each $\so(d)_p$. Finally, we constructively prove the existence of the so-called inverse limit measure of these inverse families, which is explicitly computable, and prove that it gives the Haar measure on $\so(d)_p$. Our results pave the way towards the study of the irreducible projective unitary representations of the $p$-adic rotation groups, with potential applications to the recently proposed $p$-adic quantum information theory. 
\end{abstract}

\maketitle

\textbf{Keywords:}{
		\textit{Haar measure; inverse/projective limit; profinite group; $p$-adic rotation group}.}


\section{Introduction}
\label{sec:intro}

Among the classical groups, the special orthogonal ones $\so(d)$ --- over $\RR$ --- are undoubtedly the best-known and the most studied. Notable cases are those with $d\in\{2,3\}$, for the variety of their applications in physics, chemistry and engineering, as well as for the easiness of visualising their action on Euclidean space. Besides those over the real numbers, special orthogonal groups over the $p$-adic fields $\QQ_p$ are remarkably interesting, for primes $p\geq2$. They form a multitude of locally compact symmetry groups, one for each non-trivial quadratic form on $\QQ_p^d$. Unlike the real case, definite quadratic forms over $\QQ_p$ (i.e., those representing the zero only trivially) exist only in dimensions two, three and four~\cite{serre}, and lead to compact, hence profinite, groups (unlike the indefinite forms). The classification of the $p$-adic special orthogonal groups, up to isomorphisms, is complete for indefinite quadratic forms in $d=2,3$ and for all the definite forms~\cite{our1st}. Among them, there exists a unique compact $p$-adic special orthogonal group of degree $3$, $\so(3)_p$, for every prime $p\geq2$. This can be thought of as the group of rotations in $\QQ_p^3$, and its geometric features have been explored in~\cite{our1st}. Along the way, it is natural to address the study of the abelian subgroups of rotations in $\QQ_p^3$ around a given axis: Unlike in the real scenario, there are several compact $p$-adic special orthogonal groups of rotations in $\QQ_p^2$, $\so(2)_{p,\kappa}$, where $\kappa\in\QQ_p^\ast/(\QQ_p^\ast)^2$.

The Haar measure on a locally compact group is of central importance in many fields of mathematics such as harmonic analysis and representation theory, among others. Specifically, all irreducible unitary representations of a compact group occur (and can be studied) as sub-representations of the regular representation, according to the celebrated Peter-Weyl theorem. In turn, the Haar measure plays a fundamental role in the study of the regular representation and, more generally, of the irreducible projective unitary representations. On the other hand, the symmetry group $\so(3)_p$ has an intriguing role in $p$-adic quantum mechanics~\cite{Volovich1} (see also ~\cite{Volovich2,VVZ}): According to Volovich's view of a $p$-adic quantum system, the irreducible projective unitary representations of $\so(3)_p$ can be interpreted as a theory of $p$-adic angular momentum and spin~\cite{our2nd}. 
More specifically, in the prospect of a $p$-adic formulation of quantum computation and information theory, the $p$-adic qubit arises as a two-dimensional irreducible representation of $\so(3)_p$ (see also~\cite{aniello2023} for an alternative `purely $p$-adic' approach).
It is precisely for this reason that the present contribution is devoted to the study of the Haar measure on $\so(3)_p$, and along the way on $\so(2)_{p,\kappa}$. 

It is worth recalling that the mathematician V. S. Varadarajan dedicated part of his research to the investigation of a quantum theory over the field of $p$-adic numbers, also motivated~\cite{VaradarajanCit} by Dirac's mathematical modus operandi~\cite{Dirac}: 
The most powerful method for progressing in modern physics is to develop more and more advanced mathematical tools (such as non-Euclidean geometry or non-commutative algebra) to face new challenges, and, only after that the new mathematical structures have shown to be both consistent and effective, one can proceed to the interpretation of these structures as suitable physical entities.

In fact, during his scientific activity, Varadarajan provided important contributions to the development of Volovich's ideas about a $p$-adic (non-Archimedean) spacetime at a sub-Planck scale, by studying, in particular, the $p$-adic Galilean and Poincar\'e groups and their representations, for the structure and classification of elementary particles~\cite{Varadarajan1,Varadarajan2}, just to cite a few. 

Inspired by the profound ideas of Volovich and Varadarajan, and having in mind, in particular, applications to quantum information theory over the field of $p$-adic numbers, in the present contribution we go a step forward in the study of the compact special orthogonal groups on $\QQ_p^2$ and $\QQ_p^3$.

The Haar measure on $\so(2)_{p,\kappa}$ has already been investigated by means of a Lie-group-theoretical approach, relying on the adoption of a suitable atlas of local charts~\cite{our3rd}. Moreover, the Haar integral on $\so(3)_p$ can be expressed through a certain lifting, which involves a topological and group relation between $\so(3)_p$ and the multiplicative group of $p$-adic quaternions. In this paper, we will construct the Haar measure on $\so(2)_{p,\kappa}$ and $\so(3)_p$ through another universal approach: That of inverse limit of measure spaces. 

To date, the direct product measure on an infinite product of measurable spaces is standard measure theory~\cite{vonNeumann,Halmos}. Product measures naturally generalise to the concept of inverse limit of measure spaces, first introduced in~\cite{Bochner} by Bochner. Sufficient conditions for the existence of an inverse limit measure for an inverse family of measure spaces have subsequently been investigated (see~\cite{Choksi}, to cite one of the first). On the other hand, the left (resp.\ right) Haar measure --- conceived in~\cite{seminalHaar} --- is known to exist and be essentially unique on any locally compact group; left and right Haar measures coincide on compact groups. There exists an essentially unique Haar measure on any profinite group, a fact that can also be proven by an inverse limit reasoning, as argued by Fried and Jarden in~\cite{profinitem}. Indeed, for every locally compact group $G$, the inverse limit of left (resp.\ right.) Haar measures on a suitable inverse family of quotient groups is proven to be the left (resp.\ right.) Haar measure on the inverse limit group $G$~\cite{BourInt}. In the present work, we give a concrete and workable realisation of this abstract result, constructing the Haar measure on the groups $SO(2)_{p,\kappa}$ and $SO(3)_p$ as the inverse limit of counting measures on their quotient groups modulo $p^n$, $n\in\NN$. In fact, we consider an inverse family of measure spaces, which enables us to construct a $\sigma$-algebra on each $p$-adic rotation group which is shown to be Borel, and an inverse limit measure on it which is shown to satisfy all the axioms of a Haar measure. Once calculated the cardinality of the quotient groups modulo $p^n$, the Haar measure can be explicitly evaluated.

The rest of the paper has the following structure. After introducing the background notions concerning measure theory, inverse limits and $p$-adic numbers (Section~\ref{sec:basic}), we give an overview of the main features we know about compact $p$-adic special orthogonal groups of rotations in $\QQ_p^2$ and $\QQ_p^3$ (Subsection~II.A
), which are preparatory for this work. We will assume $p>2$, and briefly describe the scenario for the only even prime $p=2$ at the end, since at times it deviates from the case of odd primes and therefore it is more laborious to deal with. In Section~\ref{sec:invlimtopgru} we explicitly provide the characterisation of $\so(2)_{p,\kappa}$ and $\so(3)_p$ as inverse limits of inverse families of suitable discrete finite groups. We show a possible parametrisation of the latter, through which we can compute their orders; we also give an equivalent description of these finite groups, through a lifting of roots \`a la Hensel of the system of special orthogonal conditions. In Section~\ref{sec:invlimme}, we construct the Haar measure on $\so(2)_{p,\kappa}$ and $\so(3)_p$ as an inverse limit of Haar measure spaces, and evaluate it on every clopen ball in a topology base. Section~\ref{sec:conclusion} concerns the discussion of our main results, conclusions, and prospects.

\section{Basic notions}\label{sec:basic}
In this section, we remind notions that are relevant to our results, starting from the concept of inverse family and limit, moving to basic elements of measure theory, and concluding with $p$-adic special orthogonal groups.

\medskip

Let $(I,\leq)$ be a \emph{(right-)directed partially ordered set}. This is a non-empty set $I$ supplied with a partial order (i.e. a reflexive, transitive and antisymmetric binary relation) $\leq$, such that any finite subset of $I$ has upper bounds in $I$. We first recall the definition of inverse family and inverse limit of sets~\cite{BourSet},  and topological groups~\cite{BourTop} (see~\cite{profinite} for a more categorical approach).
\begin{definition}\label{def:invlimsetgroup}
    Let $\{X_i\}_{i\in I}$ be a family of sets (resp.\ topological groups), and $\{f_{ij}\colon X_j\rightarrow X_i\}_{i\leq j,\ i,j\in I}$ a family of maps (resp.\ continuous group homomorphisms) such that
\begin{enumerate}
    \item $f_{ii}$ is the identity map on $X_i$, for every $i \in I$;
    \item $f_{ik}=f_{ij}\circ f_{jk}$, for every $i\leq j\leq k$,\ \ $i,j,k\in I$.
\end{enumerate}
We call $\big\{\{X_i\}_{i\in I}, \{f_{ij}\colon X_j\rightarrow X_i\}_{i\leq j,\ i,j\in I}\big\}\equiv \{X_i,f_{ij}\}_I$ an \emph{inverse family of sets} (resp.\ \emph{of topological groups}). Let now $\prod\limits_{i\in I}X_i$ be the Cartesian product of the family of sets $\{X_i\}_{i\in I}$.
The \emph{inverse} (or \emph{projective}) \emph{limit} of the inverse family of sets $\{X_i,f_{ij}\}_I$ is 
\begin{align}
    &\varprojlim\{X_i,f_{ij}\}_I\coloneqq\left\{x\in \prod_{i\in I}X_i\midd \cProj_i(x)=f_{ij}\circ \cProj_j(x),\textup{ for every }i\leq j,\ i,j\in I\right\}\notag\\
    &= \left\{(x_i)_{i\in I}\in \prod_{i\in I}X_i\midd x_i=f_{ij}(x_j),\textup{ for every }i\leq j,\ i,j\in I\right\}\subseteq \prod_{i\in I}X_i,\label{invlimset}
\end{align}
where $\cProj_i\colon\prod\limits_{i\in I}X_i\rightarrow X_i,\ x=(x_i)_{i\in I}\mapsto x_i$ is the canonical projection on the $i$-th component. 

The \emph{inverse limit} of the inverse family of topological groups $\{X_i,f_{ij}\}_I$ is the subgroup of the direct product group $\prod\limits_{i\in I}X_i$ as in~\eqref{invlimset}, endowed with the coarsest topology for which all $\cProj_i$ are continuous ($i\in I$), coinciding with the topology induced by the product topology of $\prod\limits_{i\in I}X_i$.
\end{definition}

By abuse of notation, we denote by $\cProj_i$ also the restriction of the canonical projection on $\varprojlim\{X_i,f_{ij}\}_I$
. The inverse limit of an inverse family of sets or topological groups always exists in the same category (this is not true in the broader setting of an inverse family in an arbitrary category). In any category, the definition of inverse limit is given by means of a universal property, so that if an inverse limit exists, it is necessarily unique: If $X$ and $X'$ are two inverse limits of the same inverse family, with projection maps $\{\cProj_i\}_i$ and $\{\cProj'_i\}_i$ respectively, then there exists a \emph{unique} isomorphism $f\colon X\rightarrow X'$ such that $\cProj'_i\circ f = \cProj_i$ for every $i\in I$. 

Now we recall the notion of profinite group (cf.~\cite{serre2,profinite} for a thorough discussion).
\begin{definition}\label{def.2}
    A topological group $G$ is said to be \emph{profinite} if it is the inverse limit of an inverse family of finite groups, each given the discrete topology.
\end{definition}
The next result gives a necessary and sufficient condition for a group to be profinite (see Proposition~0 in~\cite{serre2}, or Theorem~1.1.12 in~\cite{profinite}).
\begin{proposition}\label{prop:profequiv}
    A topological group is profinite if and only if its topology  is (Hausdorff) compact and totally disconnected.
\end{proposition}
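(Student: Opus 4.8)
The plan is to prove both implications, the forward one being routine and the converse carrying the bulk of the work. For the forward direction I would start from a presentation $G=\varprojlim\{X_i,f_{ij}\}_I$ with each $X_i$ finite and discrete, hence compact, Hausdorff and totally disconnected. By Tychonoff's theorem the product $\prod_{i\in I}X_i$ is compact, and it is Hausdorff and totally disconnected because each factor is. Since $G$ is cut out inside the product by the conditions $\cProj_i=f_{ij}\circ\cProj_j$, each of which is the equalizer of continuous maps into a Hausdorff space and therefore closed, $G$ is a closed subspace of a compact Hausdorff totally disconnected space, and so inherits all three properties.

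For the converse, let $G$ be a compact Hausdorff totally disconnected topological group. The crux is to show that the open normal subgroups of $G$ form a neighbourhood basis at the identity $e$. First I would invoke the standard fact that a compact Hausdorff totally disconnected space is zero-dimensional, so that its clopen sets form a basis of the topology; in particular every neighbourhood $U$ of $e$ contains a clopen neighbourhood $V$. Using continuity of multiplication together with compactness of $V$, I would extract an open symmetric neighbourhood $A\subseteq V$ of $e$ with $A\cdot V\subseteq V$; then the subgroup $H=\bigcup_{n\geq1}A^{n}$ generated by $A$ is open and, by induction, contained in $V\subseteq U$. An open subgroup of a compact group is automatically closed and of finite index, so $G/H$ is finite; replacing $H$ by its core $\bigcap_{g\in G}gHg^{-1}$, a finite intersection of conjugates and hence still open, yields an open \emph{normal} subgroup contained in $U$, which establishes the basis claim.

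With the open normal subgroups $\mathcal N$ directed by reverse inclusion, I would form the inverse family of finite discrete quotients $\{G/N\}_{N\in\mathcal N}$ with the canonical projections, and consider the natural homomorphism $\Phi\colon G\to\varprojlim_{N}G/N$, $g\mapsto(gN)_N$. It is continuous by the universal property of the inverse limit topology. Its kernel is $\bigcap_{N\in\mathcal N}N=\{e\}$, since the $N$ form a neighbourhood basis at $e$ in a Hausdorff group, so $\Phi$ is injective. Surjectivity is where compactness re-enters: given a compatible family $(x_N)_N$, the associated cosets $\cProj_N^{-1}(x_N)$ are closed and, by compatibility, enjoy the finite intersection property, so their total intersection is non-empty and any element of it is a $\Phi$-preimage. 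Finally $\Phi$ is a continuous bijection from a compact space onto a Hausdorff space, hence a homeomorphism, exhibiting $G$ as an inverse limit of finite discrete groups and therefore as profinite.

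The main obstacle I anticipate is the neighbourhood-basis step: converting the purely topological zero-dimensionality into the algebraic statement that arbitrarily small open normal subgroups exist requires carefully combining the clopen basis with continuity of the group law and compactness to manufacture a subgroup inside a prescribed clopen set. The surjectivity of $\Phi$ is the other point at which compactness is indispensable, again through the finite intersection property.
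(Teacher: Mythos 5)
Your proof is correct and complete: the paper itself does not prove this proposition but cites Proposition~0 in~\cite{serre2} and Theorem~1.1.12 in~\cite{profinite}, and your argument (closedness of the equalizer conditions inside the Tychonoff product for one direction; zero-dimensionality, manufacture of small open normal subgroups via $AV\subseteq V$ and the core, and the compactness/finite-intersection argument for bijectivity of $\Phi$ in the other) is precisely the standard proof given in those references. The only cosmetic point is that $\cProj_N^{-1}(x_N)$ should be read as the coset $x_N$ viewed as a closed subset of $G$ under the quotient map $G\rightarrow G/N$, which is clearly what you intend.
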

Our main aim is to construct an invariant measure on profinite groups, in particular, on compact $p$-adic special orthogonal groups. Therefore, for the reader's convenience, here we collect some basic notions about measure spaces (see, e.g.,~\cite{Halmos,Folland,hewitt1979,Niklas} for both set-theoretical and analytic approaches). We follow the notation and terminology of~\cite{Halmos}.

Let $X$ be a set. By a \emph{ring of sets} $R$ of $X$, we mean a non-empty family of subsets of $X$ closed under finite union and set difference (i.e., relative complementation). We call a family $A$ of subsets of $X$ an \emph{algebra of sets of $X$}, if it is closed under finite union and complementation, i.e., if it is a ring of sets of $X$ and contains $X$ itself. A \emph{$\sigma$-ring} $S$ of $X$ (resp.\ a \emph{$\sigma$-algebra} $\Sigma$ of $X$) is meant to be a ring (resp.\ an algebra) of $X$ closed under countable union, i.e.,  
if $E_\lambda\in S$ (resp.\ $E_\lambda\in\Sigma$), for every $\lambda$ in a countable index set, then $\bigcup_\lambda E_\lambda \in S$ (resp.\ $\bigcup_\lambda E_\lambda \in \Sigma$). If $M$ is a collection of subsets of $X$, then there is a unique smallest $\sigma$-algebra $\Sigma(M)$ containing $M$, namely, the so-called \emph{$\sigma$-algebra generated by $M$}. In particular, if $X$ is a topological space, the $\sigma$-algebra generated by the family of its open sets is called the \emph{Borel $\sigma$-algebra} of $X$, and is usually denoted by $\cB(X)$.

A \emph{measure} $\mu$ on a ring of sets $R$ of $X$ (shortly, a measure $\mu$ on $X$) is a non-negative map $\mu\colon R\rightarrow[0,+\infty]$ such that
\begin{itemize}
    \item[i.] $\mu(\emptyset)=0$;
    \item[ii.] $\mu\left(\bigcup_{\lambda}E_\lambda\right)=\sum_{\lambda}\mu(E_\lambda)$ \quad\mbox{(\emph{$\sigma$-additivity})},\\
    for every countable family $\{E_\lambda\}_{\lambda}$ of pairwise disjoint sets in $R$ such that $\bigcup_{\lambda}E_\lambda\in R$.
\end{itemize}
 We say that $\mu$ is a \emph{probability measure} if it is a measure taking values in $[0,1]$; we say that $\mu$ is a \emph{Borel measure} if it is defined on a Borel $\sigma$-algebra. By a \emph{measurable space} we mean a pair $(X,S)$ where $X$ is a set and $S$ is a covering $\sigma$-ring of $X$; in particular, a \emph{Borel measurable space} is a measurable space with a Borel $\sigma$-algebra as $\sigma$-ring. Moreover, we call $(X,S,\mu)$ a (\emph{Borel}) \emph{measure space} if $(X,S)$ is a (Borel) measurable space and $\mu$ is a (Borel) measure on $S$. 

In our later derivations, Borel measures on locally compact Hausdorff (LCH) spaces will play a prominent role. We recall that a \emph{Radon measure} $\mu$ on a LCH space $X$ is a Borel measure that is finite on all compact sets, \emph{outer regular} on all Borel sets, i.e., 
\beq
 \mu(E)=\inf\left\{\mu(U)\midd E\subseteq U,\,U\textup{ open}\right\},\quad E\in\mathcal{B}(X),
\eeq
and \emph{inner regular} on all open sets~\cite{Folland99}:
\beq
      \mu(U)=\sup\left\{\mu(K)\midd K\subseteq U,\,K\textup{ compact}\right\},\quad \mbox{$U$ open set in $G$}.
\eeq
We come now to the main object of our investigations:
\begin{definition}\label{def:Haarme} 
Let $G$ be a LCH group. A \emph{left} (resp.\ \emph{right}) \emph{Haar measure} on $G$ is a left (resp.\ \emph{right})-invariant \emph{Radon measure} on $G$; namely, a Radon measure $\mu$ on $G$ for which the condition 
\beq \label{eq:condleftrightinv}
\mu(gE)=\mu(E)\quad (\text{resp.}\, \mu(E g)=\mu(E)),
\eeq
holds for every Borel set $E\subseteq G$ and $g\in G$.
\end{definition}

It is a well-known result (see, e.g., Theorem~2.10 and~2.20 in~\cite{Folland}) that any locally compact group $G$ admits an \emph{essentially uniquely defined} left (resp.\ right) Haar measure, i.e., if $\mu$ and $\nu$ are left (resp.\ right) Haar measures on $G$, then there exists  $c\in \mathbb{R}_{>0}$ such that $\mu=c\nu$. Moreover, the left and right Haar measures coincide on every compact group. 

Since our emphasis is on profinite groups, we shall exploit a machinery to construct their Haar measure based on suitable inverse families of measure spaces, defined as follows
(see also Definition~2 in~\cite{Choksi}).
\begin{definition}\label{def:invemeaspac}
    An \emph{inverse family of measure spaces} is a family $\left\{(X_i,S_i,\mu_i),f_{ij}\right\}_I$ of measure spaces such that
    \begin{enumerate}
        \item $\{X_i,f_{ij}\}_I$ is an inverse family of sets;
        \item $f_{ij}$ is \emph{measure preserving}, i.e., for $i<j$, $f_{ij}^{-1}(S_i)\subseteq S_j$ and for $E_i\in S_i$, $\mu_i(E_i) = \mu_j\left(f_{ij}^{-1}(E_i)\right)$.
    \end{enumerate}
\end{definition}

\bigskip

To conclude this section, we now provide a brief account on $p$-adic numbers~\cite{serre,VVZ,Gouvea,Cassels,igusa2000,rooij78}, and reserve the next subsection to recall the main features of $p$-adic special orthogonal groups (see~\cite{our1st,our2nd} for an exhaustive discussion).

For this discussion, let $p\geq2$ be a prime number. Once fixed $p$, we recall that any $x\in\QQ^\ast\equiv\QQ\setminus\{0\}$ can be uniquely written as $x=p^{\nu_p(x)}\frac{a}{b}$, where $\nu_p(x),a,b\in\ZZ$, and $a,b$ are such that $p\nmid ab$; one can then define the so-called \emph{$p$-adic absolute value} on $\QQ$ by setting
\beq\label{eq.3}
|x|_p = p^{-\nu_p(x)}\textup{ for } x\in\QQ^\ast,\qquad |0|_p=0.
\eeq
The space $\QQ_p$ of \emph{$p$-adic numbers} is then defined as  the metric completion of $\QQ$ with respect to (the metric associated with) $|\,\cdot\,|_p$. It is not difficult to show~\cite{Folland} that every $x\in\QQ_p^\ast=\QQ_p\setminus\{0\}$ is (uniquely) represented as a suitable Laurent series of $p$, i.e., 
\beq \label{eq:Lauserp}
x=\sum_{n\geq n_0}x_np^n,
\eeq 
where $n_0\in\ZZ$, and $x_n\in[0,p-1]\cap\ZZ$ for every $n\in\ZZ_{\geq n_0}$. It follows that $\QQ_p$ is in bijective correspondence with the ring $\ZZ/p\ZZ((p))$ 
of the formal Laurent series of $p$ with coefficients in the finite field $\mathbb{F}_p = \ZZ/p\ZZ$, which is in particular the cyclic group of integers modulo $p$. Also, $\QQ_p$ is a field, once supplied with the addition and multiplication operations, defined as for formal series but \textquotedblleft carrying\textquotedblright\ the quotient by $p$ of the coefficient of $p^n$ to the coefficient of $p^{n+1}$. Moreover, the continuous extension of the $p$-adic absolute value $|\cdot|_p$ from $\QQ$ to $\QQ_p$ --- which we still denote with the same symbol --- is given by
\beq
|x|_p=\Bigg|\sum_{n\geq n_0}x_np^n\Bigg|_p=p^{-|n_0|},\quad \forall x\in\QQ_p^\ast,
\eeq
if $x_{n_0}\neq0$. The norm $|\cdot|_p$ satisfies the so-called \emph{strong triangle inequality}, i.e., $|x+y|_p\leq\max\{|x|_p,|y|_p\}$ for every $x,y\in\QQ_p$; then, $(\QQ_p,\,|\cdot|_p)$ is said to be a \emph{non-Archimedean} valued field, and the induced distance is called \emph{ultrametric}. 

In $\QQ_p$, we can single out the so-called \emph{valuation ring} --- with respect to  $|\cdot|_p$ --- of the non-Archimedean field $\QQ_p$; it is the ring
of \emph{$p$-adic integers} $\ZZ_p\coloneqq\{x\in\QQ_p \midd |x|_p\leq 1\}=\{\sum_{n=0}^{+\infty}x_np^n \midd x_n\in[0,p-1]\cap\ZZ\}$,
and is a subring of $\QQ_p$. Also in this case, one has a natural bijective correspondence between $\ZZ_p$ and the ring $\ZZ/p\ZZ[[p]]$ of the formal power series of $p$ with coefficients in $\ZZ/p\ZZ$ (however, the two rings are not isomorphic).
The set $\mathfrak{P}_p\coloneqq\{x\in\QQ_p\midd |x|_p< 1\}=p\ZZ_p\subset\ZZ_p$
is a maximal ideal in $\ZZ_p$ (actually, its unique maximal ideal), and is called the \emph{valuation ideal} of $\QQ_p$ with respect to $|\cdot|_p$. The elements in the set $\UU_p\coloneqq \ZZ_p\setminus\mathfrak{P}_p$  are \emph{precisely} the invertible elements in $\ZZ_p$. Moreover, as is easily shown, $\UU_p$ closes a group, usually referred to as the group of \emph{$p$-adic units}. Any $x\in\QQ_p^\ast$ can be uniquely written like $x=p^{\nu_p(x)}u$, where $u\in\UU_p$ and, by abuse of notation, $\nu_p(x)\in\ZZ$ is the \emph{$p$-adic valuation} of $x$. In particular, the $p$-adic absolute value of $x$ in $\QQ_p$ is then expressed as
\beq 
|x|_p=\begin{cases}
    0, & x=0,\\ 
    p^{-\nu_p(x)}, &  0\neq x = p^{\nu_p(x)}u,\, u\in\UU_p.
\end{cases}
\eeq 
The quotient $\ZZ_p/\mathfrak{P}_p$ is called the \emph{residue class field} of $\QQ_p$ with respect to $|\cdot|_p$ (recall that the quotient of a ring by a maximal ideal is always a field); specifically, $\ZZ_p/\mathfrak{P}_p=\ZZ_p/p\ZZ_p$ is isomorphic to the finite field $\mathbb{F}_p=\ZZ/p\ZZ$.

On $\QQ_p$ we can consider the \emph{(ultra)metric topology} generated by the base of open discs (with respect to the $p$-adic metric induced by the $p$-adic absolute value)
\begin{align}
    D_k(c)&=\{x\in \QQ_p\midd |x-c|_p\leq p^k\} = \{x\in\QQ_p\midd |x-c|_p< p^{k+1}\},
\end{align}
where, in principle, $c\in\QQ_p$ and $k\in\ZZ$. Actually, $\QQ$ is a countable dense subset of the metric space $\QQ_p$, and, hence, $\QQ_p$ has a countable base of open discs with centres $c\in \QQ$. Also, $k$ ranges in a subset of $\ZZ$ without minimum, as all other discs of larger radius will be given by unions of those discs of smaller radius (two discs are either disjoint or one is contained in the other). Hence a base for the (ultra)metric topology on $\QQ_p$ is
\beq \label{eq:topQpbase}
\left\{D_k(c)\midd c\in \QQ,\, k\in\ZZ_{<0}\right\},
\eeq 
and $\QQ_p$ is second countable. Moreover, any disc in $\QQ_p$ (and, a fortiori, in $\QQ_p^d$,  $d\in \NN\equiv\ZZ_{>0}$, equipped with the product topology, coinciding with the topology induced by the product metric) is a \emph{clopen set} --- namely, both open and closed --- each of its elements is a centre, and two discs are either disjoint or one is contained in the other (see Chapter~2 in~\cite{rooij78}, and also Lemma~1 and Corollaries~1,2,3 at p.~6 of~\cite{VVZ}). Then, $\QQ_p$ (and any $\QQ_p^d$), turns out to be a LCH totally disconnected space. In analogy with the real setting, a set $K\subset\QQ_p^d$ is \emph{compact} if and only if it is closed and bounded~\cite{VVZ,rooij78}; e.g., $\ZZ_p$ is compact in $\QQ_p$, since it coincides with the closed disc $D_0(0)$ which has $p$-adic norm bounded by $1$. In fact, each subset $p^m\ZZ_p$, $m\in\ZZ$, of $\QQ_p$ is compact, because $p^m\ZZ_p = \{x\in\QQ_p\midd |x|_p\leq p^{-m}\} = D_{-m}(0)$.

Likewise every normed (or valued) field, $\QQ_p$ is a topological field. The groups $p^m\ZZ_p$, for $m\in\ZZ$, exhaust all the proper closed subgroups of $\QQ_p$. They are topological groups, once given the subspace topology, a base for which is $\{D_k(c)\midd c\in p^m\ZZ,\, k\in J\}$, since $p^m\ZZ$ is a countable dense subset of $p^m\ZZ_p$, and where $J$ is a subset of $\ZZ_{\leq -m}$ without a minimum. In fact, the groups $p^m\ZZ_p$ are profinite by Proposition~\ref{prop:profequiv}. 

The topological (additive) group $\QQ_p$, as well as all its proper closed subgroups, can be characterised as the inverse limit of a suitable inverse family, according to the following result (for a proof, see Appendix~\ref{sec:appinvlimQp}).
\begin{proposition}\label{prop:invlimQpappen}
For every prime $p\geq2$, the following isomorphisms of topological groups hold:
    \beq
\QQ_p\simeq \varprojlim\left\{\QQ_p/p^n\ZZ_p,\,\phi_{nl}\right\}_{\NN}, \qquad p^m\ZZ_p\simeq \varprojlim\left\{p^m\ZZ_p/p^n\ZZ_p,\,\phi_{nl}\right\}_{\ZZ_{> m}},
\eeq 
for every $m\in\ZZ$, where $\QQ_p$ is equipped with the $p$-adic ultrametric topology, $p^m\ZZ_p$ with the subspace topology, and quotient groups with the quotient topologies (coinciding with the discrete topologies); the continuous group homomorphisms $\phi_{nl}$ are defined as
\beq \label{eq:homoaddno}
x_{(l)} + p^l\ZZ_p\mapsto x_{(l)}+p^n\ZZ_p,
\eeq 
for every $l\geq n>m$. 
\end{proposition}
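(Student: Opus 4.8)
The plan is to build the natural candidate isomorphism from the canonical quotient maps, verify that it is a continuous group bijection, and then promote it to a homeomorphism; I would treat both assertions in one stroke, the statement for $p^m\ZZ_p$ being the restriction of the argument for $\QQ_p$ to a closed subgroup. Throughout, write $\pi_n$ for the canonical quotient homomorphism onto $\QQ_p/p^n\ZZ_p$ (resp.\ onto $p^m\ZZ_p/p^n\ZZ_p$), and let $n$ range over the relevant index set ($\NN$, resp.\ $\ZZ_{>m}$).

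First I would confirm that $\{\QQ_p/p^n\ZZ_p,\phi_{nl}\}$ is genuinely an inverse family of topological groups in the sense of Definition~\ref{def:invlimsetgroup}: each $\phi_{nl}$ in~\eqref{eq:homoaddno} is well defined since $l\geq n$ forces $p^l\ZZ_p\subseteq p^n\ZZ_p$, it is a group homomorphism, and it is automatically continuous because the quotients carry the discrete topology; the relations $\phi_{nn}=\mathrm{id}$ and $\phi_{nk}=\phi_{nl}\circ\phi_{lk}$ are immediate from~\eqref{eq:homoaddno}. I would then define $\Phi\colon\QQ_p\to\varprojlim\{\QQ_p/p^n\ZZ_p,\phi_{nl}\}_\NN$ by $\Phi(x)=(x+p^n\ZZ_p)_n$. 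Its image lands in the inverse limit because $\phi_{nl}\circ\pi_l=\pi_n$ provides exactly the coherence condition of~\eqref{invlimset}; it is a homomorphism because each $\pi_n$ is; and it is continuous because the inverse limit carries the coarsest topology making all $\cProj_n$ continuous, while $\cProj_n\circ\Phi=\pi_n$ is continuous for every $n$.

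Next I would establish that $\Phi$ is bijective. For injectivity, the kernel is $\bigcap_n p^n\ZZ_p$, which is $\{0\}$: any $x\neq0$ has finite valuation $\nu_p(x)$ and so fails to lie in $p^n\ZZ_p$ once $n>\nu_p(x)$. For surjectivity I would invoke the completeness of $\QQ_p$, which is the essential analytic input: given a coherent sequence $(y_n+p^n\ZZ_p)_n$, the coherence condition $y_l-y_n\in p^n\ZZ_p$ for all $l\geq n$ reads $|y_l-y_n|_p\leq p^{-n}$, so $(y_n)_n$ is Cauchy and converges to some $x\in\QQ_p$ (resp.\ in the closed, hence complete, subgroup $p^m\ZZ_p$); passing to the limit in $x-y_n\in p^n\ZZ_p$ yields $\cProj_n(\Phi(x))=y_n+p^n\ZZ_p$ for every $n$. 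This surjectivity step is the one place where the reasoning is not purely formal, and it is where I expect the main subtlety to lie, since it is precisely here that completeness together with the ultrametric inequality must be used.

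Finally I would upgrade the continuous bijection $\Phi$ to a homeomorphism of topological groups. For the compact group $p^m\ZZ_p$ this is automatic: each $p^m\ZZ_p/p^n\ZZ_p$ is finite, so the inverse limit is (Hausdorff) compact by Proposition~\ref{prop:profequiv}, and a continuous bijection from a compact space onto a Hausdorff space is a homeomorphism. For $\QQ_p$ itself, which is only locally compact, I would instead check directly that $\Phi$ is open by computing the image of a base disc: since $D_{-n}(c)=c+p^n\ZZ_p$, bijectivity of $\Phi$ gives $\Phi\big(D_{-n}(c)\big)=\cProj_n^{-1}\big(\{c+p^n\ZZ_p\}\big)$, which is open in the inverse-limit topology because the singleton $\{c+p^n\ZZ_p\}$ is open in the discrete quotient and $\cProj_n$ is continuous. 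As the discs $D_{-n}(c)$ form a base of the $p$-adic topology by~\eqref{eq:topQpbase}, $\Phi$ sends open sets to open sets, so $\Phi$ is an isomorphism of topological groups. The very same computation applies verbatim to $p^m\ZZ_p$, giving a unified treatment of both displayed isomorphisms.
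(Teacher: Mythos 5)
Your proposal is correct and follows essentially the same route as the paper's Appendix~\ref{sec:appinvlimQp} proof: the same canonical map $x\mapsto\left(x+p^n\ZZ_p\right)_n$, surjectivity via the coherence-implies-Cauchy argument plus completeness of $\QQ_p$ (resp.\ of the closed subgroup $p^m\ZZ_p$), and openness verified on the base discs. Your only deviations are minor streamlinings --- injectivity via $\bigcap_n p^n\ZZ_p=\{0\}$ instead of the paper's explicitly constructed two-sided inverse $g$, identifying $\Phi\big(D_{-n}(c)\big)$ directly as the fibre $\cProj_n^{-1}\big(\{c+p^n\ZZ_p\}\big)$ rather than the paper's explicit cylinder-set computation followed by translation homeomorphisms, and the compact-to-Hausdorff continuous-bijection shortcut for $p^m\ZZ_p$, where the paper simply repeats the full argument.
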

By abuse of notation, $\phi_{nl}$ denotes the inverse limit homomorphism on $\QQ_p/p^l\ZZ_p$ to $\QQ_p/p^n\ZZ_p$, as well as that on the restriction $p^m\ZZ_p/p^l\ZZ_p$ to $p^m\ZZ_p/p^n\ZZ_p$. In a similar fashion, in comparison with the map~\eqref{eq:absuprib}, we also denote by $\cproj_n$, $n>m$, the projection (continuous group homomorphism)
\beq \label{eq:homocontproj}
\cproj_n\colon p^m\ZZ_p\rightarrow p^m\ZZ_p/p^n\ZZ_p,\quad x=\left(x_{(n)}+p^n\ZZ_p\right)_n\mapsto x_{(n)} +p^n\ZZ_p.
\eeq 
The map $\cproj_n$ is like the restriction of $\cProj_i$ in Definition~\ref{def:invlimsetgroup} to the inverse limit subspace of the product space, composed with the group isomorphism from $p^m\ZZ_p$ to $\varprojlim\left\{p^m\ZZ_p/p^n\ZZ_p,\,\phi_{nl}\right\}_{\ZZ_{> m}}$. 

The notation $x_{(n)}+p^n\ZZ_p$ for a coset in $\QQ_p/p^n\ZZ_p$ is often replaced by the canonical projection of $x_{(n)}\in\QQ_p$ into the quotient, $x_{(n)}\bmod p^n\ZZ_p$. In a similar fashion, a coset $z+p^n\ZZ$ in the cyclic group $\ZZ/p^n\ZZ$ is usually rewritten as $z\bmod p^n$ (which is a shorthand notation for $z\bmod p^n\ZZ$).

We stress that $\ZZ_p$ is a ring, however $p^m\ZZ_p$ is a non-unital ring for $m\in\NN$ and is just an additive group for $m\in\ZZ_{<0}$. 
\begin{remark}\label{rem:invlimZpeq}Specialising Proposition~\ref{prop:invlimQpappen} for $m=0$ provides 
\beq 
\ZZ_p\simeq \varprojlim\left\{\ZZ_p/p^n\ZZ_p,\,\phi_{nl}\right\}_{\NN},
\eeq
and observing that $\ZZ_p/p^n\ZZ_p\simeq \ZZ/p^n\ZZ$, we recover the well-known inverse limit characterisation of $\ZZ_p$ (e.g., as $\ZZ_p$ is defined in~\cite{serre}):
\beq \label{eq:invlimZpeq}
\ZZ_p\simeq \Big\{x=\left(x_{(n)}+p^n\ZZ\right)_n \in\prod_{n\in\NN}\left(\ZZ/p^n\ZZ\right)\midd x_{(n+1)}\equiv x_{(n)}\bmod p^n\Big\}.
\eeq 
This is also known from the inverse limit characterisation of the metric completion of certain topological abelian groups as in~\cite{metricompladic}. Indeed, one considers the topological (additive) group $\ZZ$ with $p$-adic ultrametric topology, which can be defined given a sequence of subgroups $p^n\ZZ$, forming a fundamental system of neighbourhoods for $0\in\ZZ$. Therefore, the metric completion of $\ZZ$ is $\ZZ_p\simeq\varprojlim \ZZ/p^n\ZZ$.

Note that $p^n\ZZ_p$ is an ideal of $\ZZ_p$, and $p^n\ZZ$ is an ideal of $\ZZ$, for every $n\in\NN$. Actually, Eq.~\eqref{eq:invlimZpeq} represents an inverse limit of rings: $\{\ZZ/p^n\ZZ\}_{n\in\NN}$ is a family of topological rings (with discrete topology), $\left\{\Phi_{nl}\colon \ZZ/p^l\ZZ\rightarrow \ZZ/p^n\ZZ,\, \Phi_{nl}\left(x_{(l)}\right) \coloneqq x_{(l)}\mod p^n\right\}$ is a family of continuous ring homomorphisms, and a topological ring isomorphism is defined similarly to Eq.~\eqref{eq:fisotop}. We denote by $\cPoj_n$ the canonical projection 
\beq 
\cPoj_n\colon \ZZ_p\rightarrow \ZZ/p^n\ZZ, \quad x\mapsto x_{(n)}\mod p^n,
\eeq 
which is a (continuous) ring homomorphism, for every $n\in\NN$:
\begin{align}
\cPoj_n(x+y) &= \cPoj_n\left(\left(x_{(n)}+p^n\ZZ\right)_n+\left(y_{(n)}+p^n\ZZ\right)_n\right)=\cPoj_n\left(\left(x_{(n)}+y_{(n)}+p^n\ZZ\right)_n\right)=x_{(n)}+y_{(n)}+p^n\ZZ\notag \\&=\cPoj_n\left(\left(x_{(n)}+p^n\ZZ\right)_n\right)+\cPoj_n\left(\left(y_{(n)}+p^n\ZZ\right)_n\right)=\cPoj_n(x)+\cPoj_n(y);\\
\cPoj_n(xy) &= \cPoj_n\left(\left(x_{(n)}+p^n\ZZ\right)_n\left(y_{(n)}+p^n\ZZ\right)_n\right)=\cPoj_n\left(\left(x_{(n)}y_{(n)}+p^n\ZZ\right)_n\right)=x_{(n)}y_{(n)}+p^n\ZZ\notag \\&=\cPoj_n\left(\left(x_{(n)}+p^n\ZZ\right)_n\right)\cPoj_n\left(\left(y_{(n)}+p^n\ZZ\right)_n\right)=\cPoj_n(x)\cPoj_n(y).
\end{align}
\end{remark}

\subsection{\texorpdfstring{$p$}{Lg}-Adic special orthogonal groups}\label{subsec:basicgroup}
$p$-Adic orthogonal groups can be defined as those groups consisting of linear transformations which preserve a \emph{quadratic form} $Q$ on $\QQ_p^d$~\cite{serre,Cassels,Lam}. For our purposes, a quadratic form is a homogeneous function on the $d$-dimensional $\QQ_p$-vector space $V$, written as $Q(\mathbf{x})=\sum_{i,j=1}^da_{ij}x_ix_j=\mathbf{x}^\top A\mathbf{x}$ where $\mathbf{x}=\sum_{i=1}^dx_i\mathbf{e}_i\in V$ is a (column) vector, $(\mathbf{e}_1,\dots,\mathbf{e}_d)$ is a basis of $V$, and $A$ is an $d\times d$ matrix. The latter is said to be the matrix representation of $Q$ with respect to the basis $(\mathbf{e}_1,\dots,\mathbf{e}_d)$. Throughout this work, we will assume quadratic forms to be \emph{non-degenerate}, i.e., their matrix representations to have maximum rank $d$. Since the characteristic of $\QQ_p$ is different from $2$ --- indeed it is $0$ because, likewise $\RR$, the image of $\ZZ$ in $\QQ_p$ is an integral domain isomorphic to $\ZZ$ --- a quadratic form $Q$ naturally induces a \emph{symmetric bilinear form} $b$ on $V$ given by
\beq\label{eq:sclprodQbil} b(\mathbf{x},\mathbf{y})\coloneqq \frac{1}{2}\big(Q(\mathbf{x}+\mathbf{y})-Q(\mathbf{x})-Q(\mathbf{y})\big), \quad \mbox{for $\mathbf{x},\mathbf{y}\in V$},
\eeq
and, vice versa, a symmetric bilinear form $b$ induces a quadratic form via $Q(\mathbf{x}) = b(\mathbf{x}, \mathbf{x})$. Therefore, we have a bijective correspondence between quadratic forms and symmetric bilinear forms on $V$. The scalar product~\eqref{eq:sclprodQbil} between $\mathbf{x}$ and $\mathbf{y}$ can be written in matrix form as $b(\mathbf{x},\mathbf{y}) = \mathbf{x}^\top A \mathbf{y}$.

\begin{definition}
The \emph{orthogonal group} on $V$ with respect to $Q$ is defined as the set of linear maps on $V$ that are symmetries of the quadratic (equivalently, of the symmetric bilinear) form $Q$:
\begin{align}
    \mathrm{O}(V,Q)&=\left\{L\in\End(V)\midd Q(L\mathbf{x})=Q(\mathbf{x})\ \forall \mathbf{x}\in V \right\}\\
    &=\left\{L\in\End(V)\midd b(L\mathbf{x},L\mathbf{y})=b(\mathbf{x},\mathbf{y})\ \forall \mathbf{x},\mathbf{y}\in V \right\}\\
    &\simeq\left\{L\in \mathsf{M}_{d\times d}(\QQ_p)\midd L^\top AL=A\right\},
\end{align}
where the latter group isomorphism is under the identification of $V$ with $\QQ_p^d$ via the basis $(\mathbf{e}_1,\dots,\mathbf{e}_d)$, $\mathbf{x}\leftrightarrow \begin{pmatrix}
    x_1\\\vdots\\x_d
\end{pmatrix}$, which turns the linear maps on $V$ into $d\times d$ matrices. The subgroup of $\mathrm{O}(V\simeq\QQ_p^d,Q)$ consisting of matrices $L$ with unit determinant is the special orthogonal group, i.e.,
\beq 
\so(\QQ_p^d,Q)=\left\{L\in \mathsf{M}_{d\times d}(\QQ_p)\midd L^\top AL=A,\ \det(L)=1\right\}.
\eeq 
\end{definition}

Any $\so(\QQ_p^d,Q)$ (and, more generally, $\mathrm{O}(\QQ_p^d,Q)$) is a topological group --- the multiplication map has polynomial components, the inversion map is continuous by Cramer's rule
--- once supplied with the subspace topology of $\mathsf{M}_{d\times d}(\QQ_p)\simeq \QQ_p^{d^2}$. This coincides with the topology induced by its $p$-adic product metric, i.e., the metric induced by the $p$-adic norm $\|L\|_p=\max_{i,j=1,\dots,d}|\ell_{ij}|_p$, where $L=\begin{pmatrix}
\ell_{ij}
\end{pmatrix}_{ij}\in \so(\QQ_p^d,Q)$. In other words, the topology considered on $\so(\QQ_p^d,Q)$ is the \emph{ultrametric topology} generated by the base of clopen balls 
\beq 
    B_k(L_0)=\{L\in\so(\QQ_p^d,Q)\midd \|L-L_0\|_p\leq p^k\}=\{L\in\so(\QQ_p^d,Q)\midd \|L-L_0\|_p< p^{k+1}\},
\eeq 
where $k$ ranges in a suitable subset of $\ZZ$ without minimum. 
All the topological properties stated after Eq.~\eqref{eq:topQpbase} apply to $\so(\QQ_p^d,Q)$, as subspaces of $\mathsf{M}_{d\times d}(\QQ_p)\simeq \QQ_p^{d^2}$. In particular, the groups $\so(\QQ_p^d,Q)$ are LCH and totally disconnected. Being Hausdorff metric spaces, there exists a countable dense subset $Y$ of $\so(\QQ_p^d,Q)$ such that a base for its ultrametric topology is given by the balls centred at $Y$, and the groups $\so(\QQ_p^d,Q)$ are second countable.

\begin{remark}\label{rem:isohomgrou}
Let $Q$, $Q'$ be two quadratic forms on $V\simeq \QQ_p^d$, and let $A$, $A'$ be the associated matrix representations with respect to some basis. We say that $Q$ is \emph{linearly equivalent} to $Q'$ if there exists an invertible linear map $f\colon V\rightarrow V$ such that $Q'\circ f = Q$, or equivalently, if there exists an invertible matrix $S\in\mathsf{M}_{d\times d}(\QQ_p)$ such that $A'=S^\top AS$. In this case $\mathrm{O}(\QQ_p^d,Q)\simeq \mathrm{O}(\QQ_p^d,Q')$ and $\so(\QQ_p^d,Q)\simeq \so(\QQ_p^d,Q')$, the isomorphism being $\mathrm{O}(\QQ_p^d,Q)\ni L\mapsto S^{-1}LS\in \mathrm{O}(\QQ_p^d,Q')$. This is also a homeomorphism of topological group (it has polynomial components given by the product with $S$ and $S^{-1}$). Up to linear equivalence (change of basis), we can assume every quadratic form $Q$ to be of the kind $Q(\mathbf{x})=\sum_{i=1}^da_ix_i^2$, i.e., with diagonal matrix representation $A=\diag(a_i)$. We say that $Q'$ is a scaling of $Q$ if there exists $t\in\QQ_p^\ast$ such that $Q'=tQ$. In this case, it is clear that $\mathrm{O}(\QQ_p^d,Q)= \mathrm{O}(\QQ_p^d,Q')$, and $\so(\QQ_p^d,Q)= \so(\QQ_p^d,Q')$. 
\end{remark}

For our purposes, it is useful to distinguish between definite and indefinite quadratic forms. We say that a quadratic form is \emph{definite} if it does not represent zero non-trivially --- i.e., $Q(\mathbf{x})=0$ if and only if $\mathbf{x}=\boldsymbol{0}$); we call a quadratic form \emph{indefinite} if it admits a non-zero \emph{isotropic vector} --- namely, there exists $\mathbf{x}\neq\boldsymbol{0}$ such that $Q(\mathbf{x})=0$. This is relevant, e.g., from the topological point of view, for an orthogonal group to be compact or not.

Concluding, quadratic forms, up to linear equivalence and scaling, lead to isomorphic special orthogonal groups;
hence, to list all such groups, we have first to identify the different classes of (equivalent) quadratic forms on $\QQ_p^d$. It is possible to prove that the rank, the discriminant and the so-called Hasse invariant provide a complete set of invariants through which classifying $p$-adic quadratic forms (see Theorem~$7$, n°~$2.3$, Chap.\ IV in~\cite{serre}). 

\begin{remark}
From here on, we will deal only with odd prime integers $p>2$, to facilitate the readability of this work. Indeed, the situation with $p=2$ is somehow peculiar as this is the only even prime, and often requires a longer separated treatment, which will be briefly described at the end. Also, the focus of the present work is on $d\in\{2,3\}$, and the whole machinery that also applies in higher dimension will be remarked at the end.
\end{remark}

As anticipated, we start recalling the classification of definite quadratic forms on $\QQ_p^2$ and $\QQ_p^3$ (see~\cite{our1st} for an explicit derivation). 

\begin{theorem}\label{th:quadform}
\label{theor:defquadforms}
    There are exactly three definite quadratic forms on $\QQ_p^2$, up to linear equivalence and scaling:
\beq 
Q_{-v}(\mathbf{x}) =x_1^2-vx_2^2,\quad Q_p(\mathbf{x})=x_1^2+px_2^2,\quad Q_{up}(\mathbf{x})=u x_1^2+px_2^2.\label{cev1nom}
\eeq 
There is a unique definite quadratic form on $\QQ_p^3$, up to linear equivalence and scaling:
\beq \label{cev2nom}
    Q_+(\mathbf{x})=x_1^2-vx_2^2+px_3^2.
\eeq 
In Eqs.~\eqref{cev1nom} and~\eqref{cev2nom}, we define
\beq\label{eq:vdef}
v\coloneqq
\begin{cases}
 -1 &\text{if } p\equiv3 \mod4,\\
 -u &\text{if } p\equiv1 \mod4,
\end{cases}
\eeq
for a non-square $u\in\mathbb{U}_p$. 
\end{theorem}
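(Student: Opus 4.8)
The plan is to reduce every non-degenerate quadratic form to a diagonal one and then classify the diagonal forms by the complete set of invariants over $\QQ_p$, tracking how linear equivalence and scaling act. By Remark~\ref{rem:isohomgrou} each form is linearly equivalent to some $Q(\mathbf{x})=\sum_i a_i x_i^2$, and since rescaling a single coordinate $x_i\mapsto c\,x_i$ replaces $a_i$ by $c^2 a_i$, each coefficient matters only modulo squares. For $p>2$ the square-class group is $\QQ_p^\ast/(\QQ_p^\ast)^2=\{1,u,p,up\}$, a Klein four-group, so there are only finitely many diagonal forms to examine, namely $\langle a_1,\dots,a_d\rangle$ with $a_i\in\{1,u,p,up\}$. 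I would then invoke the classification theorem over $\QQ_p$ (Chap.~IV of~\cite{serre}): two non-degenerate forms are linearly equivalent iff they share rank, discriminant $d=\prod_i a_i\in\QQ_p^\ast/(\QQ_p^\ast)^2$ and Hasse invariant $\epsilon=\prod_{i<j}(a_i,a_j)_p$, together with the isotropy criterion from the same source and the fact that anisotropic forms over $\QQ_p$ have dimension at most $4$, so definite forms do occur for $d=2,3$.

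\emph{Binary case.} A form $\langle a,b\rangle$ is anisotropic (definite) precisely when $-ab$ is a non-square, i.e.\ when its discriminant $d=ab$ avoids the single class $-1\in\QQ_p^\ast/(\QQ_p^\ast)^2$; this leaves exactly three admissible discriminants. I would check that these are realised by $Q_{-v},Q_p,Q_{up}$, whose discriminants are $-v$, $p$, $up$. Using that $v$ is a non-square unit (indeed $-v$ equals $1$ when $p\equiv 3\bmod 4$ and $u$ when $p\equiv 1\bmod 4$, matching the excluded class $-1$ in each case), these three discriminants are pairwise distinct and exhaust the admissible classes. To pass from ``three discriminants'' to ``three forms up to scaling'', I would note that for an even-rank form, scaling by $t$ leaves the discriminant unchanged but sends $\epsilon\mapsto (t,-d)_p\,\epsilon$ (by bilinearity of the Hilbert symbol and $(t,t)_p=(t,-1)_p$); since $-d$ is a non-square, $(\,\cdot\,,-d)_p$ is a non-trivial character and some $t$ flips $\epsilon$. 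Hence the two Hasse classes attached to each admissible discriminant are identified by scaling, and exactly three classes survive, represented by $Q_{-v},Q_p,Q_{up}$.

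\emph{Ternary case.} By the isotropy criterion, a rank-$3$ form with discriminant $d$ is anisotropic iff its Hasse invariant equals $-(-1,-d)_p$; thus for each of the four discriminants there is exactly one anisotropic class, giving four anisotropic ternary forms up to linear equivalence. Now scaling by $t$ sends the discriminant of an odd-rank form to $t\,d$, acting on $\{1,u,p,up\}$ by translation, hence transitively; since scaling preserves anisotropy (it does not change the zero locus) and there is a unique anisotropic class per discriminant, all four classes lie in one scaling orbit. This yields exactly one definite ternary form up to linear equivalence and scaling. It then remains to exhibit $Q_+(\mathbf{x})=x_1^2-v x_2^2+p x_3^2$ as a concrete anisotropic representative, which I would verify by a direct Hensel-type descent: after clearing denominators one may assume a putative non-trivial zero has coordinates in $\ZZ_p$ with at least one unit; reducing modulo $p$ forces $x_1^2\equiv v x_2^2$, and since $v$ is a non-square unit this forces $p\mid x_1,x_2$; substituting and dividing by $p$ then forces $p\mid x_3$, contradicting the unit assumption.

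The conceptual steps are short; the main effort is the Hilbert-symbol bookkeeping --- establishing the isotropy criteria and the transformation law $\epsilon\mapsto (t,-d)_p\,\epsilon$ of the Hasse invariant under scaling --- together with the explicit anisotropy check for $Q_+$. One must also confirm, via the existence part of the classification theorem, that every prescribed pair (discriminant, admissible Hasse invariant) is actually realised, so that the counts ``three'' and ``one'' are exact rather than mere upper bounds. A self-contained alternative avoiding invariants altogether would replace each equivalence and inequivalence claim by explicit Hensel arguments as in the ternary case, at the cost of substantially more computation.
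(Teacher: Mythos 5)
Your proposal is correct: the binary count via admissible discriminants $d\neq -1$ together with the scaling law $\epsilon\mapsto(t,-d)_p\,\epsilon$, the ternary count via the anisotropy criterion $\epsilon=-(-1,-d)_p$ plus transitivity of scaling on discriminants, and the Hensel-type descent showing $Q_+$ is anisotropic all check out (including the realisability caveats you flag). The paper itself states this theorem without proof, deferring to~\cite{our1st}, but the route it indicates — classification by rank, discriminant and Hasse invariant via Theorem~7, Chap.~IV of~\cite{serre} — is exactly the one you carried out, so your argument matches the intended derivation.
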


The matrix representations of the definite quadratic forms in Theorem~\ref{theor:defquadforms} with respect to the canonical basis are as follows: 
\begin{align}
    &A_{-v}=\diag(1,-v),\quad A_p=\diag(1,p),\quad A_{up}=\diag(u,p),\label{cev1}\\
    &A_+=\diag(1,-v,p)\label{cev2}.
\end{align}

Now we list the special orthogonal groups associated with these $p$-adic definite quadratic forms. These groups turn out to be \emph{all and the only} compact $p$-adic special orthogonal groups of degree two and three.
\begin{corollary}\label{cor:tuttisolicomp}
There are three (up to isomorphisms) compact special orthogonal groups on $\QQ_p^2$ for every prime $p>2$:
\beq
\so(2)_{p,\kappa}\coloneqq\so(\QQ_p^2,Q_\kappa)=\left\{L\in \mathsf{M}_{2\times2}(\QQ_p)\midd L^\top A_\kappa L=A_\kappa,\ \det(L)=1\right\},
\eeq 
where $\kappa\in\{-v,p,up\}$.\\
There is a unique (up to isomorphism) compact special orthogonal group on $\QQ_p^3$ for every prime $p>2$:
\beq 
    \so(3)_p\coloneqq\so(\QQ_p^3,Q_+)=\left\{L\in \mathsf{M}_{3\times3}(\QQ_p)\midd L^\top A_+L=A_+,\ \det(L)=1\right\}.
\eeq 
Indeed,
\begin{align}
    &\so(2)_{p,\kappa}<\mathrm{SL}(2,\ZZ_p),
    \quad \textup{for }\kappa\in \left\{-v,p,up\right\},\label{eq:inclsSO2}\\ 
    &\so(3)_p< \mathrm{SL}(3,\ZZ_p).\label{eq:inclsSO3}\
\end{align}
\end{corollary}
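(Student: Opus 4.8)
The statement has two distinct parts: the enumeration of the compact groups (exactly three in degree two and one in degree three, up to isomorphism) and the inclusions $\so(2)_{p,\kappa}<\mathrm{SL}(2,\ZZ_p)$, $\so(3)_p<\mathrm{SL}(3,\ZZ_p)$. The enumeration is essentially a restatement of Theorem~\ref{theor:defquadforms}: by Remark~\ref{rem:isohomgrou} linearly equivalent forms yield isomorphic special orthogonal groups and scalings yield literally the same group, so the isomorphism class of $\so(\QQ_p^d,Q)$ depends only on the class of $Q$ modulo equivalence and scaling. Since a special orthogonal group is compact exactly when $Q$ is definite (a definite form represents $\mathbf{0}$ only trivially, which forces boundedness; an indefinite form contains a hyperbolic plane, hence an unbounded split torus), the compact groups are precisely those attached to the definite forms of Theorem~\ref{theor:defquadforms}: three classes on $\QQ_p^2$ and one on $\QQ_p^3$. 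I would cite~\cite{our1st} for the fact that these classes are pairwise non-isomorphic and exhaust all compact degree-two and degree-three groups.

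The substance is the inclusions, which amount to showing that every $L=(\ell_{ij})\in\so(\QQ_p^d,Q)$ already has entries in $\ZZ_p$, i.e.\ $\|L\|_p\leq1$ (the unit-determinant condition being built into the definition). The plan is to read $L^\top A L=A$ columnwise. Writing $\mathbf{c}_1,\dots,\mathbf{c}_d$ for the columns of $L$ and using $b(\mathbf{x},\mathbf{y})=\mathbf{x}^\top A\mathbf{y}$, the $(i,j)$ entry of $L^\top A L$ is $b(\mathbf{c}_i,\mathbf{c}_j)$, so the relation says the columns are pairwise $b$-orthogonal and in particular satisfy $Q(\mathbf{c}_j)=a_j$, where $a_j\in\{1,-v,u,p\}$ is the $j$-th diagonal entry of $A$; only these diagonal relations are needed below. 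Everything then reduces to a single norm estimate for the definite diagonal forms at hand:
\beq
|Q(\mathbf{x})|_p=\max_{1\leq i\leq d}|a_i|_p\,|x_i|_p^2 .
\eeq
Granting this, $|Q(\mathbf{c}_j)|_p=|a_j|_p$ forces $|a_i|_p\,|\ell_{ij}|_p^2\leq|a_j|_p$ for every $i$; since each $|a_i|_p\in\{1,p^{-1}\}$ we get $|\ell_{ij}|_p^2\leq|a_j|_p/|a_i|_p\leq p$, and because $|\ell_{ij}|_p$ is an integral power of $p$ this yields $|\ell_{ij}|_p\leq1$. Hence $L\in\mathsf{M}_{d\times d}(\ZZ_p)$ and, with $\det L=1$, $L\in\mathrm{SL}(d,\ZZ_p)$; as a bonus this recovers compactness, $\mathrm{SL}(d,\ZZ_p)$ being closed in the compact space $\ZZ_p^{d^2}$.

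The hard part is therefore the norm estimate, i.e.\ ruling out cancellation below the maximal term, and this is exactly where definiteness and the hypothesis $p>2$ enter. The inequality $|Q(\mathbf{x})|_p\leq\max_i|a_i|_p|x_i|_p^2$ is just the strong triangle inequality; for the reverse I would separate the monomials by the parity of $\nu_p(a_ix_i^2)$. The unit-coefficient terms ($a_i\in\{1,-v,u\}$) have even valuation $-2\nu_p(x_i)$, whereas the $p$-coefficient term ($a_i=p$) has odd valuation $-1-2\nu_p(x_i)$; these two parity classes can never share a valuation, so no cancellation occurs between them and the overall norm is the maximum of the norms of the two blocks. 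Within the unit block the only nontrivial case is the binary anisotropic form $x_1^2-vx_2^2$ with $|x_1|_p=|x_2|_p$: there $x_1^2-vx_2^2=x_2^2\big((x_1/x_2)^2-v\big)$, and since $p>2$ every unit congruent to $1$ modulo $p$ is a square by Hensel's lemma, so a unit is a square precisely when its residue is a quadratic residue; as $v$ is a non-square unit, $(x_1/x_2)^2-v$ is a unit and no cancellation occurs. Assembling these cases gives the estimate for each of $Q_{-v},Q_p,Q_{up}$ and $Q_+$, completing the proof.
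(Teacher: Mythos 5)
Your proof is correct, but its key step is organised genuinely differently from the paper's. The paper argues by contradiction: it writes each entry as $p^{\nu_{ij}}u_{ij}$ with $u_{ij}\in\UU_p$, assumes a minimal negative valuation, divides the diagonal orthogonality relation by the largest power of $p$ and reduces modulo $p$, where definiteness forces the trivial solution $(u_{11},u_{21})\equiv(0,0)$, contradicting $u_{ij}\in\UU_p$; moreover it carries this out explicitly only for $\so(2)_{p,-v}$ and defers the three-dimensional case to Theorem~5 of~\cite{our1st}. You instead isolate a single quantitative lemma --- the exact ultrametric identity $|Q(\mathbf{x})|_p=\max_i|a_i|_p|x_i|_p^2$ for the four definite diagonal forms --- proved by separating the monomials by the parity of their valuations (the coefficient $p$ gives odd valuation, the unit coefficients even, so the two blocks can never cancel) and, inside the unit block, by the fact that $v$ reduces to a quadratic non-residue modulo $p$ (which is where Hensel and $p>2$ enter). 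Applied columnwise via $Q(\mathbf{c}_j)=a_j$, this gives $|\ell_{ij}|_p^2\leq p$ and hence $|\ell_{ij}|_p\leq 1$ by integrality of the exponent, directly and uniformly for $d=2,3$. The arithmetic input is the same in both treatments (non-residue $v$, the valuation shift from the coefficient $p$), but your version buys something: it is self-contained where the paper cites~\cite{our1st}, it dispenses with the contradiction scaffolding, and the identity it establishes is a reusable quantitative form of anisotropy of $Q$ over $\QQ_p$. One cosmetic slip: the valuation of $a_ix_i^2$ for a unit coefficient is $+2\nu_p(x_i)$, not $-2\nu_p(x_i)$; the parity argument is unaffected. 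The enumeration half and the non-compactness of the indefinite case are treated at the same (citation/sketch) level of detail in both your proposal and the paper.
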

\begin{proof}
The three groups $\so(2)_{p,\kappa}$, and the group $\so(3)_p$ are the special orthogonal groups induced by the quadratic forms in Theorem~\ref{th:quadform}. These groups are indeed compact, as they are closed and bounded as subsets of $\mathsf{M}_{d\times d}(\QQ_p)\simeq \QQ_p^{d^2}$, for $d=2,3$ respectively. In particular, Theorem~5 in~\cite{our1st} shows that the entries of the matrices in $\so(3)_p$ are $p$-adic integers. This goes by contradiction, and consists of solving modular congruences $\mod p^n$ involving the underlying quadratic form: Since the latter is definite, there exists $n\in\NN$ (actually $n=1$) such that the modular congruences have only a trivial solution. A similar proof holds for $\so(2)_{p,\kappa}$ for every $\kappa\in\{-v,p,up\}$. For example, when $\kappa=-v$, the orthogonality condition for $L=\begin{pmatrix}p^{\nu_{ij}}u_{ij}\end{pmatrix}_{i,j=1,2}$ where $\nu_{ij}\in\ZZ,u_{ij}\in\mathbb{U}_p$, leads to 
\beq
\begin{cases}
p^{2\nu_{11}}u_{11}^2-vp^{2\nu_{21}}u_{21}^2=1,\\
p^{2\nu_{12}}u_{12}^2-vp^{2\nu_{22}}u_{22}^2=-v.
\end{cases}
\eeq 
If, by contradiction, $\min\{\nu_{11},\nu_{21}\}=\nu_{11}<0$ (similarly for $\nu_{21}$), then the first equation yields $u_{11}^2-vp^{2(\nu_{21}-\nu_{11})}u_{21}^2\equiv p^{-2\nu_{11}}\mod p^n$. In particular, $u_{11}^2-vp^{2(\nu_{21}-\nu_{11})}u_{21}^2\equiv 0\mod p$, whose only solution is $(u_{11},u_{21})\equiv (0,0)\in(\ZZ/p\ZZ)^2$: We get a contradiction since $u_{ij}\in\mathbb{U}_p$ means $u_{ij}\not\equiv0\mod p$. The same is argued in the second equation, hence $\nu_{ij}\geq0$ for every $i,j=1,2$.

On the other hand, indefinite quadratic forms admit non-trivial roots, hence they lead to unbounded, whence, non-compact special orthogonal groups. 
\end{proof}

\begin{notation}\label{not1}
In what follows, for the sake of conciseness, we will denote all compact $p$-adic special orthogonal groups of degree two and three for primes $p>2$ by $\so(d)_{p(,\kappa)}$, where $d=2,\kappa\in\{-v,p,up\}$ or $d=3$.
\end{notation}

\bigskip

We now recall a useful parametrisation of $\so(2)_{p,\kappa}$ known from Theorem~12 in~\cite{our1st}. An element of $\so(2)_{p,\kappa}$ takes the following matrix form with respect to the canonical basis of $\mathbb{Q}_p^2$:
\beq\label{genelso2}
\mathcal{R}_\kappa(\s )=\begin{pmatrix}
\frac{1-\alpha_\kappa\s ^2}{1+\alpha_\kappa\s ^2} & -\frac{2\alpha_\kappa\s }{1+\alpha_\kappa\s ^2}\\ \frac{2\s }{1+\alpha_\kappa\s ^2} & \frac{1-\alpha_\kappa\s ^2}{1+\alpha_\kappa\s ^2}
\end{pmatrix},\quad \s \in\mathbb{Q}_p\cup\{\infty\},
\eeq
where $\alpha_\kappa\in\{-v,p,\frac{p}{u}\}$ respectively for $\kappa\in\{-v,p,up\}$. Furthermore,
\beq \label{eq:invertparam}
\cR_\kappa\left(-\frac{1}{\alpha_\kappa\s }\right)=-\cR_\kappa(\s ),\qquad \cR_\kappa(\infty)=-\cR_\kappa(0)=-\mathrm{I}_{2\times2}.
\eeq

\begin{remark}\label{rem:paraminter}
Eq.~\eqref{eq:inclsSO2} is confirmed by Remark~14 in~\cite{our1st}, which shows another parametrisation for $\so(2)_{p,\kappa}$, in terms of only integer parameters. In fact, if $\s \in\ZZ_p$, then $1+\alpha_\kappa\s ^2\not\equiv0\mod p$, for every $\alpha_\kappa\in\{-v,p,\frac{p}{u}\}$. In these cases $|1+\alpha_\kappa\s ^2|_p=1$, i.e.,  $(1+\alpha_\kappa\s ^2)^{-1}\in\mathbb{U}_p$ which, multiplied by $1-\alpha_\kappa\s ^2,2\s ,-2\alpha_\kappa\s \in\ZZ_p$, gives $p$-adic integer matrix entries in parametrisation~\eqref{genelso2}.

Now we can distinguish two branches for the parameter $\s \in \QQ_p \cup \{\infty\}$: Either $\s \in\ZZ_p$ or $\s ^{-1}\in p\ZZ_p$ (including $\infty$ formally when $\s =0$). Since we want to exploit Eq.~\eqref{eq:invertparam} for a non-integer parameter, we consider either
$\s \in\ZZ_p$ or: 
\begin{itemize}
    \item[$\diamond$] $\s =\frac{1}{v\tau}$, $\tau\in p\ZZ_p$, for $\alpha_\kappa=-v$, yielding 
    \begin{align}
    \so(2)_{p,-v} &= \left\{\mathcal{R}_{-v}(\s )\midd \s \in\ZZ_p\right\}\cup\left\{-\mathcal{R}_{-v}(\s )\midd \s \in p\ZZ_p\right\};\label{eq:paraminteri1}
\end{align} 
    \item[$\diamond$] $\s  = -\frac{1}{\alpha_\kappa\tau},\,\tau\in \ZZ_p$ for $\alpha_\kappa\in\{p,\frac{p}{u}\}$, yielding
    \beq \label{eq:paraminteri2}
\so(2)_{p,\kappa} = \left\{\pm\mathcal{R}_\kappa(\s )\midd \s \in\ZZ_p\right\}.
\eeq
\end{itemize} 
\end{remark}

Moving to the three dimensional case, Theorem~19 in~\cite{our1st} tells that a rotation of $\so(3)_p$ around $\mathbf{n}\in \QQ_p^3\setminus\{\boldsymbol{0}\}$ takes the following matrix form with respect to an orthogonal basis $(\mathbf{g},\mathbf{h},\mathbf{n})$ of $\QQ_p^3$:
\beq \label{eq:param3x3gen}
\mathcal{R}_{\mathbf{n}}(\s )=\begin{pmatrix}
\frac{1-\alpha\s ^2}{1+\alpha\s ^2} & -\frac{2\alpha\s }{1+\alpha\s ^2}&0\\ \frac{2\s }{1+\alpha\s ^2} & \frac{1-\alpha\s ^2}{1+\alpha\s ^2}   &0\\0&0&1 
\end{pmatrix},
\eeq 
where $\s \in\QQ_p\cup\{\infty\}$ and $\alpha=Q_+(\mathbf{h})/Q_+(\mathbf{g})$ is proportional to $Q_+(\mathbf{n})$. In particular, the rotations around the reference axes of $\QQ_p^3$, are given by choosing different orderings of the canonical basis $(\mathbf{e}_1,\mathbf{e}_2,\mathbf{e}_3)$. A rotation around $\mathbf{e}_1\equiv x$ is located with respect to the basis $(\mathbf{e}_2,\mathbf{e}_3,\mathbf{e}_1)$ by $\alpha=-\frac{p}{v}$, coinciding with $\alpha_p=p$ when $p\equiv3\mod 4$, and with $\alpha_{up}=\frac{p}{u}$ for $p\equiv1\mod 4$; a rotation around $\mathbf{e}_2\equiv y$ is located with respect to $(\mathbf{e}_1,\mathbf{e}_3,\mathbf{e}_2)$ by $\alpha=p=\alpha_p$; and a rotation around $\mathbf{e}_3\equiv z$ is located with respect to $(\mathbf{e}_1,\mathbf{e}_2,\mathbf{e}_3)$ by $\alpha=-v=\alpha_{-v}$. Therefore, the rotations of $\so(3)_p$ around the reference axes with respect to the canonical basis are given by
\beq \label{eq:refaxisSO3}
\cR_x(\xi)=\begin{pmatrix}
    1&0&0\\0&\frac{1+\frac{p}{v}\xi^2}{1-\frac{p}{v}\xi^2} & \frac{2\frac{p}{v}\xi}{1-\frac{p}{v}\xi^2} \\
    0 & \frac{2\xi}{1-\frac{p}{v}\xi^2} 
    & \frac{1+\frac{p}{v}\xi^2}{1-\frac{p}{v}\xi^2}
\end{pmatrix}, \quad \cR_y(\eta)=\begin{pmatrix}
    \frac{1-p\eta^2}{1+p\eta^2} & 0&-\frac{2p\eta}{1+p\eta^2}\\0&1&0\\
    \frac{2\eta}{1+p\eta^2} & 0 & \frac{1-p\eta^2}{1+p\eta^2}
\end{pmatrix},\quad \cR_z(\zeta)=\begin{pmatrix}
    \frac{1+v\zeta^2}{1-v\zeta^2} & \frac{2v\zeta}{1-v\zeta^2} & 0\\\frac{2\zeta}{1-v\zeta^2} & \frac{1+v\zeta^2}{1-v\zeta^2} & 0\\0&0&1
\end{pmatrix},
\eeq 
for $\xi,\eta,\zeta\in\QQ_p\cup\{\infty\}$.
\begin{remark}\label{param:perSO3}
    The set $\so(3)_{p,\mathbf{n}}$ of rotations around a given $\mathbf{n}\in \QQ_p^3\setminus\{\boldsymbol{0}\}$ forms an abelian subgroup of $\so(3)_p$. In particular, $\so(3)_{p,x}\simeq \so(2)_{p,\kappa}$ with $\kappa=p$ for $p\equiv3\mod 4$ and $\kappa=up$ for $p\equiv 1\mod4$, $\so(3)_{p,y}\simeq \so(2)_{p,p}$ and $\so(3)_{p,z}\simeq \so(2)_{p,-v}$, for which Eqs.~\eqref{eq:paraminteri1},~\eqref{eq:paraminteri2} can be used just replacing $\pm\cR_\kappa(\s )$ with $\cR_{\mathbf{n}}(\s ),\cR_{\mathbf{n}}(\infty)\cR_{\mathbf{n}}(\s )$. We have
\beq\label{eq:paramSO3xy}
\so(3)_{p,\mathbf{n}} = \left\{\mathcal{R}_{\mathbf{n}}(\s )\midd \s \in\ZZ_p\right\}\cup \left\{\cR_{\mathbf{n}}(\infty)\mathcal{R}_{\mathbf{n}}(\s )\midd \s \in\ZZ_p\right\},
\eeq
for the $x$- and $y$-axes, while for the $z$-axis,
\beq\label{eq:paramSO3z}
    \so(3)_{p,z} = \left\{\mathcal{R}_z(\s )\midd \s \in\ZZ_p\right\}\cup\left\{\mathcal{R}_z(\infty)\mathcal{R}_z(\s )\midd \s \in p\ZZ_p\right\}.
\eeq 
\end{remark}
Last, in parallel to the real orthogonal case, only certain principal \textquotedblleft angle\textquotedblright\ decompositions of rotations around the reference axes hold for $\so(3)_p$. We recall just one of them (cf.\ Corollary~23 and Theorem~32 in~\cite{our1st}), which will be useful in our later work.
\begin{theorem} \label{theor:cardanorep}
For every prime $p>2$, every $L \in \so(3)_p$ can be written as the following Cardano (aka nautical or Tait–Bryan) type composition,
\beq \label{eq:cardanorep}
\cR_x\cR_y\cR_z,
\eeq 
for some $\cR_\mathbf{n}\in \so(3)_{p,\mathbf{n}}$, $\mathbf{n}\in\{x,y,z\}$.
Moreover, every $L$ has exactly two distinct Cardano decompositions of such kind: 
\beq  \label{eq:cardanorepDOPPIA} 
L=\cR_x(\xi)\cR_y(\eta)\cR_z(\zeta) = \cR_x(\infty)\cR_x(\xi)\, \cR_y(\infty)\cR_y(-\eta)\,\cR_z(\infty)\cR_z(\zeta),
\eeq
for some parameters $\xi,\eta,\zeta\in\QQ_p\cup\{\infty\}$. 
\end{theorem}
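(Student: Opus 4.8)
The plan is to recover the three parameters from specific entries of $L=(\ell_{ij})$, using the block form of $\cR_x,\cR_y,\cR_z$ in Eq.~\eqref{eq:refaxisSO3}. First I would record three structural facts about the product $P(\xi,\eta,\zeta):=\cR_x(\xi)\cR_y(\eta)\cR_z(\zeta)$. Since the first row of $\cR_x$ is $(1,0,0)$ and the third column of $\cR_z$ is $(0,0,1)^\top$, the $(1,3)$-entry of $P$ equals that of $\cR_y(\eta)$ alone, namely $-\tfrac{2p\eta}{1+p\eta^2}$; moreover its third column depends only on $(\xi,\eta)$ and its first row only on $(\eta,\zeta)$. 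Hence, given $L$, I would solve $\ell_{13}=-\tfrac{2p\eta}{1+p\eta^2}$ for $\eta$, then read off $\xi$ from the remaining third-column entries $\ell_{23},\ell_{33}$ and $\zeta$ from the remaining first-row entries $\ell_{11},\ell_{12}$ (each of the latter amounting to inverting an $\so(2)$-type parametrisation on the corresponding axis subgroup of Remark~\ref{param:perSO3}).

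The hard part is the solvability of the middle equation, i.e.\ of the quadratic $p\,\ell_{13}\,\eta^2+2p\,\eta+\ell_{13}=0$, in $\QQ_p\cup\{\infty\}$. Its discriminant is $4p\,(p-\ell_{13}^2)$, and I claim it is always a square, so two solutions always exist. Here I would invoke Corollary~\ref{cor:tuttisolicomp}, that $\so(3)_p<\mathrm{SL}(3,\ZZ_p)$, so all $\ell_{ij}\in\ZZ_p$, together with the orthogonality relation $\ell_{13}^2-v\ell_{23}^2+p\ell_{33}^2=p$ for the third column. Reducing modulo $p$ gives $\ell_{13}^2\equiv v\ell_{23}^2$, and since $v$ is a non-square unit (by Eq.~\eqref{eq:vdef} and Theorem~\ref{th:quadform}) this forces $p\mid\ell_{13}$ and $p\mid\ell_{23}$; writing $\ell_{23}=p\ell_{23}'$ one gets $p-\ell_{13}^2=-v\ell_{23}^2+p\ell_{33}^2$, whence $p(p-\ell_{13}^2)=p^2\!\left(\ell_{33}^2-vp\,\ell_{23}'^2\right)$ with $\ell_{33}$ forced to be a unit. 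As $\ell_{33}^2-vp\,\ell_{23}'^2\equiv\ell_{33}^2\not\equiv0\pmod p$, Hensel's lemma makes it a square in $\ZZ_p^\ast$, so the whole discriminant is a square. This is precisely the step where definiteness (the anisotropy encoded by $v$ non-square) and integrality enter, and I expect it to be the main obstacle.

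With $\eta$ fixed I would observe that there is no $p$-adic gimbal lock: $c_\eta:=\tfrac{1-p\eta^2}{1+p\eta^2}$ never vanishes, since $c_\eta=0$ would require $\eta^2=1/p$, impossible as $\nu_p(1/p)=-1$ is odd. Thus $\xi$ and $\zeta$ are genuinely determined. Setting $L'=P(\xi,\eta,\zeta)$ for the parameters so obtained, $L'$ agrees with $L$ on its entire third column and entire first row by construction. To upgrade this to $L'=L$, I would use that a special orthogonal transformation for the anisotropic form $Q_+$ fixing two linearly independent vectors must be the identity (its invariant complement is a line on which it acts by $+1$, by the determinant constraint). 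Concretely, $M:=L'^{-1}L\in\so(3)_p$ fixes $\mathbf{e}_3$ (equal third columns); from $M^\top=A_+M^{-1}A_+^{-1}$ and $A_+\mathbf{e}_3=p\mathbf{e}_3$ one gets $M^\top\mathbf{e}_3=\mathbf{e}_3$ as well; and the equal first rows give $\mathbf{r}M=\mathbf{r}$ for $\mathbf{r}=\mathbf{e}_1^\top L'$, i.e.\ $M^\top$ fixes $\mathbf{r}^\top$. Since $\mathbf{r}$ has a nonzero $(x,y)$-component (the pair $\ell'_{11},\ell'_{12}$ cannot both vanish), $\mathbf{e}_3$ and $\mathbf{r}^\top$ are independent, forcing $M^\top=M=\mathrm{I}_{3\times3}$ and hence $L=L'$. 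This settles existence.

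For the counting I would note that the discriminant $4p(p-\ell_{13}^2)$ never vanishes, because $\ell_{13}^2=p$ is impossible by parity of valuation; so the $\eta$-equation has \emph{exactly} two roots, $\eta$ and $\eta'=\tfrac{1}{p\eta}$, and $c_{\eta'}=-c_\eta\neq c_\eta$ shows $\cR_y(\eta)\neq\cR_y(\eta')$. Since any Cardano decomposition has middle parameter a root of this equation, and the uniqueness argument above shows each root extends to a single triple, there are exactly two decompositions. Finally I would identify the second with the stated form: grouping $\cR_x(\infty)\cR_x(\xi)\,\cR_y(\infty)\cR_y(-\eta)\,\cR_z(\infty)\cR_z(\zeta)$ into the three axis subgroups and applying the inversion relations (the $3$-dimensional analogue of Eq.~\eqref{eq:invertparam}) collapses the middle pair to $\cR_y(\infty)\cR_y(-\eta)=\cR_y(\eta')$, i.e.\ the other root. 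That this six-factor product indeed returns $L$ is seen directly: writing $D_\mathbf{n}=\cR_\mathbf{n}(\infty)$ one has the commuting sign matrices $D_x=\diag(1,-1,-1)$, $D_y=\diag(-1,1,-1)$, $D_z=\diag(-1,-1,1)$ with $D_xD_yD_z=\mathrm{I}_{3\times3}$, each $D_\mathbf{n}$ commuting with $\cR_\mathbf{n}$, and $D_z\,\cR_y(-\eta)\,D_z=\cR_y(\eta)$; substituting these reduces the six-factor product back to $\cR_x(\xi)\cR_y(\eta)\cR_z(\zeta)=L$, completing the proof.
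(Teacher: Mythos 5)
Your proposal is correct, and it takes a genuinely different route from the paper's. The paper does not actually prove Theorem~\ref{theor:cardanorep} in-house: it quotes Corollary~23 and Theorem~32 of~\cite{our1st}, and the closest internal argument is the modulo-$p^n$ analogue (Theorem~\ref{theor:cardanodoppiamod}) in Appendix~\ref{sec:app3}, which is a brute-force branch analysis: the two integer-parameter branches of each axis subgroup give $36$ ways of equating two Cardano products, each resolved by entry-by-entry congruences. You instead argue structurally: the middle parameter is pinned down by the single entry $\ell_{13}$ through a quadratic whose discriminant $4p(p-\ell_{13}^2)$ you show is a nonzero square by combining integrality ($\so(3)_p<\mathrm{SL}(3,\ZZ_p)$, Corollary~\ref{cor:tuttisolicomp}) with anisotropy ($v$ a non-square unit forces $p\mid\ell_{13},\ell_{23}$ and $\ell_{33}\in\UU_p$, then Hensel); existence and uniqueness of $\xi$ and $\zeta$ then come from inverting the axis-subgroup parametrisations, and the upgrade from "same first row and third column" to equality uses the rigidity of isometries of an anisotropic form fixing two independent vectors. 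This yields both existence and the exact count of two in one stroke, and gives a conceptual reason for the duplicity (the two square roots, with $c_{\eta'}=-c_\eta$), whereas the paper's case-analysis method has the advantage of working verbatim over $\ZZ/p^n\ZZ$, which is what the Haar-measure computation actually needs. Two small points you leave implicit but which do check out: the read-off pairs must land in the axis subgroups, i.e.\ $a^2-\tfrac{p}{v}c^2\equiv 1$ and $l^2-vm^2\equiv1$, and these follow from the orthogonality relations for the third column, $\ell_{13}^2-v\ell_{23}^2+p\ell_{33}^2=p$, and for the first row, $\ell_{11}^2-\ell_{12}^2/v+\ell_{13}^2/p=1$, together with $p-\ell_{13}^2=p\,e(\eta)^2$; and when $\ell_{13}=0$ the quadratic degenerates, so the two roots are $\eta\in\{0,\infty\}$ in $\QQ_p\cup\{\infty\}$, which is consistent with your relation $\eta'=\tfrac{1}{p\eta}$ read projectively. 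With those two sentences added, your argument is a complete, self-contained alternative proof.
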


\section{Inverse limit characterisation of the compact \texorpdfstring{$p$}{Lg}-adic special orthogonal groups \texorpdfstring{$\so(2)_{p,\kappa}$}{Lg} and \texorpdfstring{$\so(3)_p$}{Lg}}\label{sec:invlimtopgru}
This section is devoted to deriving technical tools, which will be fundamental for the construction of the Haar measure on $\so(2)_{p,\kappa}$ and $\so(3)_p$ from an inverse limit, and for its explicit computability.

In the previous section, we argued that any $p$-adic special orthogonal group is LCH, totally disconnected and second countable, once given the $p$-adic topology. Also, Corollary~\ref{cor:tuttisolicomp} states that $\so(d)_{p(,\kappa)}$ (i.e. $\so(2)_{p,\kappa}$ for $d=2$, $\kappa\in\{-v,p,up\}$, and $\so(3)_p$ for $d=3$; see Notation~\ref{not1}) exhaust all compact $p$-adic special orthogonal groups of degree $d\in\{2,3\}$, for every prime $p>2$. As they are compact (Hausdorff) and totally disconnected, the groups $\so(d)_{p(,\kappa)}$ are profinite (Proposition~\ref{prop:profequiv}), i.e., they are inverse limits of suitable inverse families of finite discrete groups. These inverse families will be indexed by a countable totally ordered set, as $\so(d)_{p(,\kappa)}$ are second countable (Corollary~1.1.13 in~\cite{profinite}).

Recall that the elements of $\ZZ_p$ can be projected $\mod p^n$, $n\in\NN$, via the canonical projection $\cPoj_n$ as in Remark~\ref{rem:invlimZpeq}. Then, in $\mathsf{M}_{d\times d}(\ZZ_p)$ the matrix product is defined through sums and products of entries, for which $\cPoj_n$ are homomorphisms. Therefore, the map
\beq \label{eq:grouphoprpi}
\pi_n(M)=\pi_n\left(\begin{pmatrix}m_{ij} \end{pmatrix}_{ij}\right)\coloneqq \begin{pmatrix} \cPoj_n(m_{ij}) \end{pmatrix}_{ij}=\begin{pmatrix} m_{ij}\mod p^n \end{pmatrix}_{ij}
\eeq 
is a group homomorphism on any group $H$ contained in $\mathsf{M}_{d\times d}(\ZZ_p)$ to some other group $\pi_n(H)$ contained in $\mathsf{M}_{d\times d}(\ZZ/p^n\ZZ)$. Note that $\pi_n(H)$ is a finite group, since the order of $\mathsf{M}_{d\times d}(\ZZ/p^n\ZZ)$ is $\left\lvert\mathsf{M}_{d\times d}(\ZZ/p^n\ZZ)\right\rvert = (p^n)^{d^2}$. An equivalent way of describing this scenario is by considering the normal subgroups $\ker(\pi_n)=\left(\mathrm{I}_{d\times d}+p^n\mathsf{M}_{d\times d}(\ZZ_p)\right)\cap H$ of $H$, and by taking the quotients $H/\ker(\pi_n)=\pi_n(H)$. This argument applies in particular to each $\so(d)_{p(,\kappa)}$, as its matrices have $p$-adic integer entries [see Eqs.~\eqref{eq:inclsSO2},~\eqref{eq:inclsSO3}]. 
\begin{theorem}\label{prop:liminvfacile}
For every prime $p>2$, $d\in\{2,3\}$, and $\kappa\in\{-v,p,up\}$, we have the following topological group isomorphism:
\beq 
\so(d)_{p(,\kappa)}\simeq \varprojlim\left\{G_{(\kappa,)p^n},\,\varphi_{nl}\right\}_{\NN},
\eeq
where $G_{(\kappa,)p^n} \coloneqq \so(d)_{p(,\kappa)}\big/\left(\left(\mathrm{I}_{d\times d}+p^n\mathsf{M}_{d\times d}(\ZZ_p)\right)\cap \so(d)_{p(,\kappa)}\right)$ has discrete topology, $\so(d)_{p(,\kappa)}$ has $p$-adic topology, and with continuous group homomorphism $\varphi_{nl}$ defined as 
\beq
\varphi_{nl}:G_{(\kappa,)p^l}\rightarrow G_{(\kappa,)p^n},\quad (\ell_{ij}\mod p^l)_{ij}\mapsto (\ell_{ij}\mod p^n)_{ij}, \label{eq:mapinverslimit} 
\eeq
for every $n\leq l$, $n,l\in \NN$.
\end{theorem}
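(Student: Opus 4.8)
The plan is to exhibit the canonical entrywise reduction map as the claimed isomorphism and verify it directly, rather than merely invoking the abstract profinite characterisation (Proposition~\ref{prop:profequiv}), so as to keep the construction explicit and in line with the constructive emphasis of the paper. First I would check that $\{G_{(\kappa,)p^n},\varphi_{nl}\}_\NN$ is genuinely an inverse family of finite discrete groups in the sense of Definition~\ref{def:invlimsetgroup}. Writing $N_n\coloneqq(\mathrm{I}_{d\times d}+p^n\mathsf{M}_{d\times d}(\ZZ_p))\cap\so(d)_{p(,\kappa)}$ for the normal subgroups involved, the maps $\varphi_{nl}$ are well defined because the kernels are nested, $N_l\subseteq N_n$ for $n\le l$, so reduction $\bmod\,p^n$ factors through reduction $\bmod\,p^l$; they are group homomorphisms since each $\pi_n$ of~\eqref{eq:grouphoprpi} is, and the two conditions of Definition~\ref{def:invlimsetgroup}, namely $\varphi_{nn}=\mathrm{id}$ and $\varphi_{nk}=\varphi_{nl}\circ\varphi_{lk}$, are immediate from the transitivity of modular reduction. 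Continuity for the discrete topologies is automatic.

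Next I would define the candidate map $\Phi\colon\so(d)_{p(,\kappa)}\to\varprojlim\{G_{(\kappa,)p^n},\varphi_{nl}\}_\NN$ by $\Phi(L)=(\pi_n(L))_{n\in\NN}$. It lands in the inverse limit because $\varphi_{nl}(\pi_l(L))=\pi_n(L)$, and it is a group homomorphism because every $\pi_n$ is. Injectivity follows from the separation property $\bigcap_{n}p^n\mathsf{M}_{d\times d}(\ZZ_p)=\{0\}$: if $\pi_n(L)=\pi_n(L')$ for all $n$, then $L-L'\in p^n\mathsf{M}_{d\times d}(\ZZ_p)$ for every $n$, forcing $L=L'$.

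The main obstacle, and the heart of the proof, is surjectivity. Given a compatible thread $(g_n)_n$ in the inverse limit, I would choose for each $n$ an arbitrary representative $M_n\in\so(d)_{p(,\kappa)}$ with $\pi_n(M_n)=g_n$. The compatibility $\varphi_{nl}(g_l)=g_n$ translates, via $\varphi_{nl}(\pi_l(M_l))=\pi_n(M_l)$, into $\pi_n(M_l)=\pi_n(M_n)$ for all $l\ge n$, i.e.\ $M_l-M_n\in p^n\mathsf{M}_{d\times d}(\ZZ_p)$, so that $\|M_l-M_n\|_p\le p^{-n}$ and $(M_n)_n$ is Cauchy in $\mathsf{M}_{d\times d}(\ZZ_p)$. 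Since $\ZZ_p$ is complete, so is $\mathsf{M}_{d\times d}(\ZZ_p)\simeq\ZZ_p^{d^2}$, whence $M_n\to L$ for some $L\in\mathsf{M}_{d\times d}(\ZZ_p)$. Because $\so(d)_{p(,\kappa)}$ is compact (Corollary~\ref{cor:tuttisolicomp}), hence closed in $\mathsf{M}_{d\times d}(\ZZ_p)$, the limit $L$ lies in $\so(d)_{p(,\kappa)}$; and since $\pi_n(M_l)=g_n$ for all $l\ge n$ while $p^n\mathsf{M}_{d\times d}(\ZZ_p)$ is closed, passing to the limit gives $L-M_n\in p^n\mathsf{M}_{d\times d}(\ZZ_p)$, i.e.\ $\pi_n(L)=g_n$, so $\Phi(L)=(g_n)_n$.

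Finally, to upgrade the algebraic isomorphism to a topological one, I would note that $\Phi$ is continuous: each component $\cProj_n\circ\Phi=\pi_n$ is continuous into the discrete group $G_{(\kappa,)p^n}$, since its fibres are cosets of the open subgroup $N_n$, so continuity into the inverse limit follows from the universal property endowing it with the coarsest topology making the $\cProj_n$ continuous. As $\so(d)_{p(,\kappa)}$ is compact and the inverse limit is Hausdorff, being a closed subspace of a product of finite discrete groups, the continuous bijection $\Phi$ is automatically a homeomorphism, which completes the proof.
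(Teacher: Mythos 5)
Your proof is correct, and its backbone coincides with the paper's: the paper also takes the entrywise reduction map $F_{p(,\kappa)}(L)=(\pi_n(L))_n$ and proves it is a topological group isomorphism, though it does so by deferring to the argument of Appendix~\ref{sec:appinvlimQp} (the proof of Proposition~\ref{prop:invlimQpappen} for $\QQ_p$ and $p^m\ZZ_p$) ``applied entry-wise''. Where you genuinely deviate is in two places, both to your credit. First, for surjectivity you make explicit a point the paper leaves implicit in the phrase ``applied entry-wise'': the Cauchy limit $L$ of the representatives $M_n$ lives a priori only in $\mathsf{M}_{d\times d}(\ZZ_p)$, and one needs $\so(d)_{p(,\kappa)}$ to be closed there (which you correctly extract from compactness, Corollary~\ref{cor:tuttisolicomp}) to conclude $L\in\so(d)_{p(,\kappa)}$ --- the analogue in the appendix is the remark that $p^m\ZZ_p$ is complete as a closed subspace of $\QQ_p$. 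Second, for continuity of the inverse, the appendix performs an explicit computation showing that the image of each basic disc is an open cylinder set (necessary there, since $\QQ_p$ is not compact), whereas you exploit compactness of $\so(d)_{p(,\kappa)}$ and Hausdorffness of the inverse limit to get the homeomorphism for free from the continuous bijection; this soft argument is valid precisely in the compact setting at hand and is shorter than transplanting the appendix computation. Minor remarks: closedness of the inverse limit in the product is not needed for Hausdorffness (any subspace of a Hausdorff product is Hausdorff), and your injectivity argument via $\bigcap_n p^n\mathsf{M}_{d\times d}(\ZZ_p)=\{0\}$ matches the separation property used implicitly throughout the paper.
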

\begin{proof}
Specialising the argument around Eq.~\eqref{eq:grouphoprpi} for $\so(d)_{p(,\kappa)}$, we get a group homomorphism
\begin{align}
 \pi_n\colon \so(d)_{p(,\kappa)}&\rightarrow G_{(\kappa,)p^n}=\pi_n\big(\so(d)_{p(,\kappa)}\big) =\so(d)_{p(,\kappa)}\mod p^n,\notag\\
 L=\begin{pmatrix}
\ell_{ij}\end{pmatrix}_{ij}&\mapsto \pi_n(L)\coloneqq \begin{pmatrix}\ell_{ij}\mod p^n\end{pmatrix}_{ij},\label{eq:pin}
\end{align}
for every $n\in\NN$. We have $G_{(\kappa,)p^n} = \so(d)_{p(,\kappa)}\big/\ker(\pi_n)$ where $\ker(\pi_n)= \left(\mathrm{I}_{d\times d}+p^n\mathsf{M}_{d\times d}(\ZZ_p)\right)\cap \so(d)_{p(,\kappa)}$; we supply the finite group $G_{(\kappa,)p^n}$ with the quotient topology, which coincides with the discrete topology (as seen in Appendix~\ref{sec:appinvlimQp} for $\QQ_p/p^n\ZZ_p$) since, e.g., $\pi_n^{-1}(\{\mathrm{I}_{d\times d}+p^n\mathsf{M}_{d\times d}(\ZZ_p)\}\cap \so(d)_{p(,\kappa)})=\ker(\pi_n)=B_{-n}(\mathrm{I}_{d\times d})$ is the open ball of radius $p^{-n+1}$ centred at $\mathrm{I}_{d\times d}$. By construction, $G_{(\kappa,)p^n}$ is a topological group and $\pi_n$ is continuous.

The map $\varphi_{nn}$ as in Eq.~\eqref{eq:mapinverslimit} is the identity map on $G_{(\kappa,)p^n}$. The map $\varphi_{nl}$ is a group homomorphism, as $\Phi_{nl}$ is a ring homomorphism (Remark~\ref{rem:invlimZpeq}), and it is continuous, as its domain has discrete topology. It is easy to check that the maps $\varphi_{nl}$ are coherent, in the sense of axiom 2. in Definition~\ref{def:invlimsetgroup}. Therefore, $\left\{G_{(\kappa,)p^n},\varphi_{nl}\right\}_{\NN}$ is an inverse family of topological groups. Consider the map 
\beq \label{eq:mapsiotop}
F_{p(,\kappa)}\colon \so(d)_{p(,\kappa)}\rightarrow \varprojlim\left\{G_{(\kappa,)p^n},\,\varphi_{nl}\right\}_{\NN},\quad L\mapsto \big(\pi_n(L)\big)_n.
\eeq 
This resembles the map~\eqref{eq:fisotop} applied entry-wise to the matrices of $\so(d)_{p(,\kappa)}$, and with a similar argument to that in Appendix~\ref{sec:appinvlimQp}, one proves that $F$ is a topological group isomorphism. 
\end{proof}

Setting $\pi_n= \varphi_{n\infty}$, then $\pi_n\equiv\varphi_{nl}\circ\pi_l$ for every $n\leq l$. An element of $G_{(\kappa,)p^n}$ will be denoted equivalently by either $\pi_n(L)$ or $L\mod p^n$ (understanding that the reduction $\mod p^n$ is entry-wise), for some $L\in \so(d)_{p(,\kappa)}$. 

\medskip

We conclude this section with the orders of the projected groups $G_{(\kappa,)p^n}$, which will be fundamental to compute the Haar measure of a Borel set of $\so(d)_{p(,\kappa)}$.

\begin{remark}\label{rem:liftstescard}
    The maps $\varphi_{nl}\colon G_{(\kappa,)p^l}\rightarrow G_{(\kappa,)p^n}$ in Eq.~\eqref{eq:mapinverslimit} 
    are surjective (but not injective for $n<l$) homomorphisms of finite groups. The preimages of each of the elements in $G_{(\kappa,)p^n}$ are in bijective correspondence with each other. Indeed, let $N=\varphi_{nl}^{-1}(\mathrm{I}\mod p^n)$: If $\varphi_{nl}(L\mod p^l)=L\mod p^n$ then $\varphi_{nl}^{-1}(L\mod p^n)=N(L\mod p^l)$ of cardinality $\left\lvert \varphi_{nl}^{-1}(L\mod p^n)\right\rvert= \left\lvert N(L\mod p^l) \right\rvert=\left\lvert N\right\rvert$ for every $\pi_n(L)\in G_{(\kappa,)p^n}$.
\end{remark}
We first provide the orders of the finite projections of the compact $p$-adic special orthogonal groups of degree two. 
\begin{proposition}\label{prop:cardparam}
For every prime $p>2$, $\kappa\in\{-v,p,up\}$ and $n\in\NN$, 
\beq \label{eq:cardgpkk} 
\left\lvert G_{\kappa,p^n}\right\rvert=2p^n,\ \   \kappa\in\left\{p,up\right\}, \qquad \qquad
\left\lvert G_{-v,p^n}\right\rvert = p^{n-1}(p+1).
\eeq 
\end{proposition}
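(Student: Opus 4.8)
The plan is to count the reductions modulo $p^n$ of the matrices of $\so(2)_{p,\kappa}$ directly from the integer parametrisation recalled in Remark~\ref{rem:paraminter}. Since $G_{\kappa,p^n}=\pi_n\big(\so(2)_{p,\kappa}\big)$ by Theorem~\ref{prop:liminvfacile}, the order $\left\lvert G_{\kappa,p^n}\right\rvert$ is exactly the number of distinct matrices $\pi_n(\mathcal{R}_\kappa(\sigma))$ and $\pi_n(-\mathcal{R}_\kappa(\sigma))$ obtained as $\sigma$ ranges over the subsets of $\ZZ_p$ prescribed in Eqs.~\eqref{eq:paraminteri1} and~\eqref{eq:paraminteri2}. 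First I would observe that for $\sigma\in\ZZ_p$ the denominator $1+\alpha_\kappa\sigma^2$ is a $p$-adic unit (as already noted in Remark~\ref{rem:paraminter}), so every entry of $\mathcal{R}_\kappa(\sigma)$ is a $p$-adic integer whose class modulo $p^n$ depends only on $\sigma\bmod p^n$; thus $\sigma\mapsto\pi_n(\mathcal{R}_\kappa(\sigma))$ factors through $\ZZ_p/p^n\ZZ_p$.

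The crux is the injectivity of this induced map. Denoting by $a=\frac{1-\alpha_\kappa\sigma^2}{1+\alpha_\kappa\sigma^2}$ and $b=\frac{2\sigma}{1+\alpha_\kappa\sigma^2}$ the top-left and bottom-left entries, I would compute $1+a=\frac{2}{1+\alpha_\kappa\sigma^2}$, which is a unit because $p>2$, together with the identity $b=\sigma(1+a)$; hence $\sigma=b/(1+a)$ is recovered from the entries, and the induced map is injective modulo $p^n$. Consequently $\left\{\pi_n(\mathcal{R}_\kappa(\sigma))\midd\sigma\in\ZZ_p\right\}$ has exactly $p^n$ elements, while restricting $\sigma$ to $p\ZZ_p$ (whose image in $\ZZ_p/p^n\ZZ_p$ has $p^{n-1}$ elements) contributes exactly $p^{n-1}$ matrices.

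It remains to assemble the two $\pm$-branches and to verify that they stay disjoint modulo $p^n$. For $\kappa\in\{p,up\}$ one has $\alpha_\kappa\in p\ZZ_p$, so $a\equiv1\bmod p$; an equality $\pi_n(\mathcal{R}_\kappa(\sigma))=\pi_n(-\mathcal{R}_\kappa(\sigma'))$ would force, comparing top-left entries, $a+a'\equiv0\bmod p$, i.e.\ $2\equiv0\bmod p$, which is impossible. This yields $\left\lvert G_{\kappa,p^n}\right\rvert=p^n+p^n=2p^n$. For $\kappa=-v$ one has $a\equiv1\bmod p$ on the $\sigma\in p\ZZ_p$ branch, whereas on the $\sigma\in\ZZ_p$ branch $a=\frac{1+v\sigma^2}{1-v\sigma^2}$ never equals $-1$ modulo $p$ (again because $2\not\equiv0\bmod p$), so a negated matrix $-\mathcal{R}_{-v}(\sigma')$, whose top-left entry is $\equiv-1\bmod p$, can never coincide with a matrix $\mathcal{R}_{-v}(\sigma)$ of the first branch; this gives $\left\lvert G_{-v,p^n}\right\rvert=p^n+p^{n-1}=p^{n-1}(p+1)$.

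The main obstacle I anticipate is exactly this disjointness bookkeeping: one must carefully keep track of the two different ranges of $\sigma$ (all of $\ZZ_p$ versus $p\ZZ_p$) occurring in the parametrisations~\eqref{eq:paraminteri1}--\eqref{eq:paraminteri2}, and repeatedly invoke $p>2$ so that $2$ is a unit. No individual step is computationally heavy once the recovery formula $\sigma=b/(1+a)$ and the reduction $a\bmod p$ of the top-left entry are in hand.
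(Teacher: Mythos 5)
Your proof is correct, and it follows the same overall route as the paper: project the integer parametrisation of Remark~\ref{rem:paraminter} modulo $p^n$, count each branch by injectivity in the parameter, and check that the two branches stay disjoint. Within that common skeleton your execution is slightly slicker at both key steps, so a brief comparison is worthwhile. For injectivity, the paper equates matrix entries and solves the pair of congruences \eqref{eq:unanecessaria}--\eqref{eq:unaltranecessaria} to conclude $\sigma\equiv\tau\mod p^n$, whereas you recover the parameter outright from the identities $1+a=\frac{2}{1+\alpha_\kappa\sigma^2}$ (a unit, since $p>2$ and $1+\alpha_\kappa\sigma^2\in\UU_p$ by Remark~\ref{rem:paraminter}) and $b=\sigma(1+a)$, giving $\sigma\equiv b\,(1+a)^{-1}\mod p^n$; this one-line recovery also makes transparent why the map factors through $\ZZ/p^n\ZZ$, and it transfers to the negated branch for free since negation is a bijection. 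For disjointness, the paper derives the necessary condition $\alpha_\kappa^2\sigma^2\tau^2\equiv1\mod p^n$ from Eq.~\eqref{eq:disjointsign} and analyses when it can hold, while you only reduce the top-left entry modulo $p$: it is $\equiv1$ on the positive branch whenever $p\mid\alpha_\kappa$, and for $\kappa=-v$ it is never $\equiv-1$ when $\sigma\in\ZZ_p$, so a negated matrix (top-left $\equiv-1\mod p$) cannot coincide with one from the first branch --- in both cases the obstruction is just $2\not\equiv0\mod p$. What the paper's finer analysis buys, and your coarser mod-$p$ test does not, is the explicit relation $-\cR_{-v}(\sigma)\equiv\cR_{-v}\!\left(\frac{1}{v\sigma}\right)\mod p^n$ for $\sigma\not\equiv0\mod p$, which explains \emph{why} the second branch in \eqref{eq:paraminteri1} must be restricted to $p\ZZ_p$; since that restriction is already built into the parametrisations \eqref{eq:paraminteri1}--\eqref{eq:paraminteri2} that you quote, your argument is complete as it stands, and the counts $2p^n$ and $p^n+p^{n-1}=p^{n-1}(p+1)$ follow exactly as you state.
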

\begin{proof}
We exploit the parametrisation in Remark~\ref{rem:paraminter}, which can be projected modulo $p^n$, as the matrix entries and parameters are all in $\ZZ_p$:
\begin{align} 
G_{\kappa,p^n} &= \left\{\pm\mathcal{R}_\kappa(\s)\mod p^n\midd \s\in\ZZ_p\right\}\nonumber\\ &
= \left\{\mathcal{R}_\kappa(\s)\mod p^n\midd \s\in\ZZ/p^n\ZZ\right\}\cup \left\{-\mathcal{R}_\kappa(\s)\mod p^n\midd \s\in\ZZ/p^n\ZZ\right\},\label{eq:semplicep2k}
\end{align}
for $\kappa\in\{p,up\}$
,  while 
\beq \label{eq:semplicepvk}
    G_{-v,p^n} =\left\{\mathcal{R}_{-v}(\s)\mod p^n\midd \s\in \ZZ/p^n\ZZ\right\}\cup\left\{-\mathcal{R}_{-v}(\s)\mod p^n\midd \s\in p(\ZZ/p^n\ZZ)\right\}.
\eeq 
Now the calculus is by integer numbers modulo $p^n$. 

From Eq.~\eqref{eq:semplicep2k} it follows that $\left\lvert G_{\kappa,p^n}\right\rvert\leq 2p^n$ when $\kappa\in\{p,up\}$, while from Eq.~\eqref{eq:semplicepvk} we get $\left\lvert G_{-v,p^n}\right\rvert\leq p^{n-1}(p+1)$. We have $\cR_\kappa(\s)\equiv \cR_\kappa(\tau)\mod p^n$ if and only if both the following conditions are satisfied:
\begin{align}
    & \frac{1-\alpha_\kappa\s^2}{1+\alpha_\kappa\s ^2}\equiv \frac{1-\alpha_\kappa\tau^2}{1+\alpha_\kappa\tau^2}\mod p^n,\quad \textup{equivalent to}\quad \alpha_\kappa\s ^2\equiv \alpha_\kappa\tau^2\mod p^n; \label{eq:unanecessaria}\\
& \frac{2\s }{1+\alpha_\kappa\s ^2}\equiv \frac{2\tau}{1+\alpha_\kappa\tau^2} \mod p^n.\label{eq:unaltranecessaria}
\end{align} 
Plugging~\eqref{eq:unanecessaria} into~\eqref{eq:unaltranecessaria}, we get $\s \equiv \tau\mod p^n$. This means that the matrices within the set $\{\cR_\kappa(\s )\mod p^n\}$ are all distinct by varying the parameter $\s $, hence $\left\lvert \{\mathcal{R}_\kappa(\s )\mod p^n\midd \s \in\ZZ/p^n\ZZ\}\right\rvert =\left\lvert \ZZ/p^n\ZZ\right\rvert=p^n$ for every $\kappa\in\left\{-v,p,up\right\}$, as well as $\left\lvert \{-\mathcal{R}_\kappa(\s )\mod p^n\midd \s \in\ZZ/p^n\ZZ\}\right\rvert =p^n$ for $\kappa\in\left\{p,up\right\}$, while $\left\lvert \left\{-\mathcal{R}_{-v}(\s )\mod p^n\midd \s \in p(\ZZ/p^n\ZZ)\right\} \right\rvert =\left\lvert p(\ZZ/p^n\ZZ)\right\rvert= p^{n-1}$.

On the other hand, a necessary condition for $\cR_\kappa(\s )\equiv -\cR_\kappa(\tau)\mod p^n$ is
\beq\label{eq:disjointsign}
\frac{1-\alpha_\kappa\s ^2}{1+\alpha_\kappa\s ^2}\equiv -\frac{1-\alpha_\kappa\tau^2}{1+\alpha_\kappa\tau^2}\mod p^n,\quad \textup{equivalent to}\quad 
\alpha_\kappa^2\s ^2\tau^2\equiv 1\mod p^n.
\eeq 
This is impossible when $p\mid \alpha_\kappa\s \tau$, i.e., it is always impossible for $\kappa\in\left\{p,up\right\}$, and it is only possible for $\kappa=-v$ when $\s ,\tau\not\equiv0\mod p$, where $\cR_z\left(\tau:=\frac{1}{v\s }\right)=-\cR_z(\s )\mod p^n$ is in the first set of the union in Eq.~\eqref{eq:semplicepvk}. Therefore, the two sets of the unions in Eqs.~\eqref{eq:semplicep2k},~\eqref{eq:semplicepvk} are disjoint, and the order of $G_{\kappa,p^n}$ is given by the sum of the orders of those two sets.
\end{proof}

In order to reach a similar result in $d=3$, we want to make use of an analogous Cardano representation for $G_{p^n}$, like that in Eq.~\eqref{eq:cardanorepDOPPIA} for $\so(3)_p$.
\begin{theorem}\label{theor:cardanodoppiamod}
For every prime $p>2$ and $n\in\NN$, every $\pi_n(L)\in G_{p^n}$ has exactly two distinct Cardano decompositions of the kind $\cR_x\cR_y\cR_z\mod p^n$.
\end{theorem}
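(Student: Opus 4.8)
The plan is to transfer the two-fold Cardano decomposition of $\so(3)_p$ (Theorem~\ref{theor:cardanorep}) down to the finite quotient $G_{p^n}$: existence and the lower bound of \emph{two} decompositions come by projecting the $\QQ_p$-level result, while the upper bound comes from an explicit entry-wise inversion of the Cardano product. Throughout I write $G_{\mathbf n,p^n}\coloneqq\pi_n(\so(3)_{p,\mathbf n})$ for the projected axis subgroups ($\mathbf n\in\{x,y,z\}$), whose orders are governed by Remark~\ref{param:perSO3} and Proposition~\ref{prop:cardparam}.

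First I would establish that $\pi_n(L)$ admits at least two distinct decompositions. Since $\pi_n$ is a group homomorphism (Theorem~\ref{prop:liminvfacile}) and every axis rotation has $p$-adic integer entries by Eq.~\eqref{eq:inclsSO3}, applying $\pi_n$ to the two $\QQ_p$-decompositions of any lift $L\in\so(3)_p$ in Eq.~\eqref{eq:cardanorepDOPPIA} gives
\beq
\pi_n(L)=\cR_x(\xi)\cR_y(\eta)\cR_z(\zeta)=\cR_x(\infty)\cR_x(\xi)\,\cR_y(\infty)\cR_y(-\eta)\,\cR_z(\infty)\cR_z(\zeta)\pmod{p^n},
\eeq
two Cardano products modulo $p^n$. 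To see they remain distinct, note $\cR_z(\infty)=\diag(-1,-1,1)$, whence $\cR_z(\infty)\not\equiv\mathrm{I}_{3\times3}\bmod p^n$ since $-1\not\equiv1\bmod p^n$ for $p>2$; right multiplication by the invertible $\cR_z(\zeta)$ preserves this, so the two triples already differ in their $z$-factor.

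The core of the argument is the upper bound, which I would obtain by inverting the Cardano map entry-wise. Writing $M\coloneqq\cR_x\cR_y\cR_z\bmod p^n$ for an arbitrary decomposition, I would first record that every element of $G_{x,p^n}$ fixes $\mathbf e_1$ (first row and column equal $\mathbf e_1$), every element of $G_{z,p^n}$ fixes $\mathbf e_3$, and every element of $G_{y,p^n}$ fixes $\mathbf e_2$; this is immediate from Eq.~\eqref{eq:refaxisSO3} together with $\cR_\mathbf{n}(\infty)$ being diagonal. Consequently the $(1,3)$-entry collapses to that of the middle factor, $M_{13}=(\cR_y)_{13}$, which I denote $S$; the first row of $M$ equals $\big((\cR_y)_{11}(\cR_z)_{11},\,(\cR_y)_{11}(\cR_z)_{12},\,S\big)$ and the third column equals $\big(S,\,(\cR_y)_{33}(\cR_x)_{23},\,(\cR_y)_{33}(\cR_x)_{33}\big)$. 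Setting $C\coloneqq(\cR_y)_{11}=(\cR_y)_{33}$, the invariance of $Q_+$ under $\cR_y$ (which preserves $x_1^2+px_3^2$ while fixing $x_2$) forces, via $(\cR_y)_{13}=-p\,(\cR_y)_{31}$, the single relation $C^2=1-S^2/p$. As $S=M_{13}\in p(\ZZ/p^n\ZZ)$, the quantity $S^2/p$ is well defined modulo $p^n$ and $C^2\equiv1\bmod p$, so $C$ is determined by $M$ up to its sign, i.e.\ in at most two ways, the two square roots being distinct modulo $p^n$ for $p>2$.

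Finally, for each admissible $C$ the remaining factors are forced: from the first row, $\big((\cR_z)_{11},(\cR_z)_{12}\big)=C^{-1}(M_{11},M_{12})$ pins down $\cR_z\in G_{z,p^n}$ uniquely (a special-orthogonal $2\times2$ block is fixed by its first row), and from the third column $\big((\cR_x)_{33},(\cR_x)_{23}\big)=C^{-1}(M_{33},M_{23})$ pins down $\cR_x\in G_{x,p^n}$; then $\cR_y$ is fixed by the pair $(S,C)$. Hence $M$ has at most two Cardano decompositions, and together with the lower bound, exactly two. I expect the main obstacle to be the bookkeeping of $p$-adic valuations in $C^2=1-S^2/p$ — justifying the division by $p$ and the recovery of $C$ as a genuine square root modulo $p^n$ — alongside confirming the injectivity of the parametrisations of $G_{x,p^n},G_{y,p^n},G_{z,p^n}$ (whose cardinalities follow from Proposition~\ref{prop:cardparam}) so that the reconstructed first rows and third columns determine the factors without ambiguity.
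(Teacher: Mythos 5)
Your lower bound is correct, and your overall route is genuinely different from the paper's: the paper (Appendix~\ref{sec:app3}) proves the theorem by an exhaustive case analysis over the $36$ ways of equating two Cardano products with respect to the branch triples of $G_{x,p^n}\times G_{y,p^n}\times G_{z,p^n}$, solving each resulting system of congruences, whereas you project Eq.~\eqref{eq:cardanorepDOPPIA} for existence and attempt an entry-wise inversion of the Cardano map for the upper bound. The inversion idea is attractive, and your structural observations (the $(1,3)$-entry collapses to the middle factor, $C\coloneqq(\cR_y)_{11}=(\cR_y)_{33}$, the relation $C^2\equiv 1-S^2/p \bmod p^n$ with $S^2/p$ well defined, and $C$ determined up to sign since $p$ is odd) all check out. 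But the final reconstruction step, which you flag as mere bookkeeping, actually \emph{fails} as stated, and this is a genuine gap.

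The problem is that the off-diagonal entries of the nontrivial $2\times2$ blocks of $\cR_x$ and $\cR_y$ are related by a factor of $p$: one has $(\cR_x)_{23}=\frac{p}{v}(\cR_x)_{32}$ and $(\cR_y)_{13}=-p\,(\cR_y)_{31}$. Consequently, knowing the third column of $M$, i.e.\ $(\cR_x)_{23}$ and $(\cR_x)_{33}$ modulo $p^n$, determines $(\cR_x)_{32}$ only modulo $p^{n-1}$: replacing $\xi$ by $\xi+kp^{n-1}$ leaves $a(\xi)$ and $\frac{p}{v}c(\xi)$ unchanged modulo $p^n$ while changing $c(\xi)$, so exactly $p$ distinct elements of $G_{x,p^n}$ share the same third-column data; likewise exactly $p$ elements of $G_{y,p^n}$ share a given pair $(S,C)$ (consistent with $\lvert G_{x,p^n}\rvert=\lvert G_{y,p^n}\rvert=2p^n$ from Proposition~\ref{prop:cardparam}). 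Only $\cR_z$ is pinned down by your data, since its block $\bigl(\begin{smallmatrix}a&vb\\ b&a\end{smallmatrix}\bigr)$ involves no factor of $p$. As written, your argument therefore bounds the number of decompositions by $2p^2$, not $2$. The gap is repairable within your framework: with the sign of $C$ fixed, the so-far unused entries $M_{31}=(\cR_x)_{32}(\cR_z)_{21}+(\cR_x)_{33}(\cR_y)_{31}(\cR_z)_{11}$ and $M_{32}=(\cR_x)_{32}(\cR_z)_{22}+(\cR_x)_{33}(\cR_y)_{31}(\cR_z)_{12}$ form a linear system in the two ambiguous unknowns $(\cR_x)_{32},(\cR_y)_{31}$, whose determinant is $-(\cR_x)_{33}$ times the determinant of the $z$-block, i.e.\ $-(\cR_x)_{33}\equiv\mp1\bmod p$, a unit; hence these entries are uniquely determined modulo $p^n$ and the count of exactly two follows. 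The paper sidesteps this issue entirely because its congruence-system analysis uses all nine matrix entries from the outset (compare the role of Eqs.~\eqref{eq:ultima1}--\eqref{eq:ultima2} there); until you add the step above, your proof is incomplete at precisely the point where the paper's case analysis does real work.
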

The proof of this result is in Appendix~\ref{sec:app3}, and exploits Remark~\ref{param:perSO3}. The twofold Cardano decomposition of $L\in G_{p^n}$ is given in one (depending on $L$) of the six possibilities in Remark~\ref{rem:6cardequality}, and essentially coincides with that in~\eqref{eq:cardanorepDOPPIA} once suitably projected via $\pi_n$.

\begin{proposition}\label{prop:ordinigrup3dmodpk}
For every prime $p>2$, and $n\in\NN$, 
\beq 
\lvert G_{p^n}\rvert =2p^{3n-1}(p+1).
\eeq 
\end{proposition}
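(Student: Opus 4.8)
The plan is to count $\lvert G_{p^n}\rvert$ by combining the twofold Cardano decomposition from Theorem~\ref{theor:cardanodoppiamod} with the orders of the one-parameter projected subgroups already computed in Proposition~\ref{prop:cardparam}. First I would set up the counting map. Each $\pi_n(L)\in G_{p^n}$ admits exactly two distinct decompositions of the form $\cR_x\cR_y\cR_z\bmod p^n$, so the surjective map sending a triple $(\cR_x,\cR_y,\cR_z)$ to its product $\cR_x\cR_y\cR_z\bmod p^n$ is exactly two-to-one from the set of such triples onto $G_{p^n}$. Hence
\beq
\lvert G_{p^n}\rvert = \frac{1}{2}\,\bigl\lvert \{\cR_x\bmod p^n\}\bigr\rvert\cdot\bigl\lvert \{\cR_y\bmod p^n\}\bigr\rvert\cdot\bigl\lvert \{\cR_z\bmod p^n\}\bigr\rvert,
\eeq
provided the three factors are independent, i.e.\ the product set is genuinely the image of the Cartesian product of the three axis-rotation sets under reduction modulo $p^n$. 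This is where the precise content of the Cardano theorem does the work: it guarantees both that every $\pi_n(L)$ is hit and that the multiplicity is uniformly two, so no further Burnside-type averaging is needed.

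Next I would identify the three factors with the groups $G_{(\kappa,)p^n}$ of Proposition~\ref{prop:cardparam} via the isomorphisms recorded in Remark~\ref{param:perSO3}. Reduction modulo $p^n$ of the axis-rotation sets $\so(3)_{p,\mathbf{n}}$ gives the projected groups $G_{\kappa,p^n}$ of the corresponding two-dimensional forms: the $z$-axis gives $G_{-v,p^n}$, while the $x$- and $y$-axes each give a group of order $2p^n$ (the $y$-axis is $G_{p,p^n}$; the $x$-axis corresponds to $\kappa=p$ or $\kappa=up$ according to $p\bmod 4$, but in either case $\lvert G_{\kappa,p^n}\rvert=2p^n$). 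Plugging in the orders from Proposition~\ref{prop:cardparam} yields the product
\beq
\lvert\{\cR_x\bmod p^n\}\rvert\cdot\lvert\{\cR_y\bmod p^n\}\rvert\cdot\lvert\{\cR_z\bmod p^n\}\rvert = (2p^n)\cdot(2p^n)\cdot p^{n-1}(p+1) = 4p^{3n-1}(p+1),
\eeq
and dividing by the multiplicity $2$ gives $\lvert G_{p^n}\rvert = 2p^{3n-1}(p+1)$, as claimed.

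The main obstacle is justifying that the counting really reduces to a clean factor-of-two overcount, and in particular that reducing the axis-rotation subgroups modulo $p^n$ does not collapse distinct parameter values in a way that breaks the independence of the three factors. Concretely, one must check that the sets $\{\cR_x\bmod p^n\}$, $\{\cR_y\bmod p^n\}$, $\{\cR_z\bmod p^n\}$ have exactly the cardinalities of the corresponding $G_{\kappa,p^n}$ (which is precisely what Proposition~\ref{prop:cardparam} establishes, so the reduction is well-controlled), and that the ``exactly two decompositions'' count of Theorem~\ref{theor:cardanodoppiamod} is uniform over all of $G_{p^n}$ rather than varying with $L$. The proof in the appendix of Theorem~\ref{theor:cardanodoppiamod} supplies exactly this uniformity, reducing the present statement to the arithmetic above; the remaining verification is the bookkeeping of matching each axis to the correct $\alpha_\kappa$ via Remark~\ref{param:perSO3}, which is routine.
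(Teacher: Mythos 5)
Your proposal is correct and follows essentially the same route as the paper: the paper likewise identifies $\pi_n\big(\so(3)_{p,x}\big)$, $\pi_n\big(\so(3)_{p,y}\big)$, $\pi_n\big(\so(3)_{p,z}\big)$ with the groups of Proposition~\ref{prop:cardparam} via Remark~\ref{param:perSO3} and divides the product $(2p^n)(2p^n)p^{n-1}(p+1)$ by $2$, citing the exact twofold multiplicity of Theorem~\ref{theor:cardanodoppiamod}. Your additional remarks on surjectivity and the uniformity of the multiplicity are precisely the content that theorem (proved in Appendix~\ref{sec:app3}) supplies, so nothing is missing.
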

\begin{proof}
We recall from Remark~\ref{param:perSO3} that $\so(3)_{p,x}\simeq \so(2)_{p,\kappa}$ with $\kappa=p$ for $p\equiv3\mod 4$ and $\kappa=up$ for $p\equiv 1\mod4$, $\so(3)_{p,y}\simeq \so(2)_{p,p}$ and $\so(3)_{p,z}\simeq \so(2)_{p,-v}$. The respective isomorphisms hold for the images of these groups with respect to $\pi_n$. Therefore, according to Proposition~\ref{prop:cardparam},
\beq 
\lvert \pi_n\left(\so(3)_{p,x}\right)\rvert=\lvert \pi_n\left(\so(3)_{p,y}\right)\rvert = 2p^n, \qquad \lvert \pi_n\left(\so(3)_{p,z}\right)\rvert= p^{n-1}(p+1).
\eeq 
As a direct consequence of the duplicity of the Cardano decomposition in Theorem~\ref{theor:cardanodoppiamod}, we have
\beq 
\lvert G_{p^n}\rvert =\frac{1}{2}\lvert \pi_n\left(\so(3)_{p,x}\right)\rvert \lvert \pi_n\left(\so(3)_{p,y}\right)\rvert \lvert \pi_n\left(\so(3)_{p,z}\right)\rvert.
\eeq 
\end{proof}
Note from Propositions~\ref{prop:cardparam},~\ref{prop:ordinigrup3dmodpk} that
\beq \label{eq:liftings2d}
|G_{\kappa,p^{n+1}}| =p |G_{\kappa,p^n}|,\qquad \quad \lvert G_{p^{n+1}}\rvert=p^3\lvert G_{p^n}\rvert,
\eeq 
for every $n\in\NN$. According to Remark~\ref{rem:liftstescard}, this means that each element of $G_{\kappa,p^n}$ (resp.\ $G_{p^n}$) has a preimage of cardinality $p$ (resp.\ $p^3$) with respect to $\varphi_{n,n+1}$.
\begin{remark}\label{rem:henselift}
For every prime $p>2$, $n\in\NN$ and $\kappa\in\{-v,p,up\}$, let us introduce the groups
\begin{align}
\widetilde{G}_{\kappa,p^n} &
      \coloneqq \left\{\widetilde{L}\in \mathsf{M}_{2\times 2}\left(\ZZ/p^n\ZZ\right)\midd  
            \widetilde{L}^\top \pi_n\left(A_\kappa\right)\widetilde{L} \equiv \pi_n\left(A_\kappa\right)\bmod p^n,\ \det \widetilde{L}\equiv 1 \mod p^n \right\},\label{eq:altrogrequiv2x2}\\
\widetilde{G}_{p^n} &
      \coloneqq \left\{\widetilde{L}\in \mathsf{M}_{3\times 3}\left(\ZZ/p^n\ZZ\right)\midd  
            \widetilde{L}^\top \pi_n\left(A_+\right)\widetilde{L} \equiv \pi_n\left(A_+\right)\bmod p^n,\ \det \widetilde{L}\equiv 1 \mod p^n \right\},\label{eq:altrogrequiv}
\end{align}
of solutions modulo $p^n$ of the defining conditions of $\so(d)_{p(,\kappa)}$ ($d\in\{2,3\}$). It is always true that
\beq \label{eq:unainclusionesem}
G_{(\kappa,)p^n}  \subseteq \widetilde{G}_{(\kappa,)p^n},
\eeq
since if the entries of $L$ form a solution of the defining conditions of $\so(d)_{p(,\kappa)}$ over $\ZZ_p$, then the entries of $\pi_n(L)$ form a solution of the same conditions over $\ZZ/p^n\ZZ$. Conversely, does each solution modulo $p^n$ lift to a $p$-adic integer solution? It is equivalent to asking whether the groups $G_{p^n}$ and $\widetilde{G}_{p^n}$ coincide or not. This question, already put forward in~\cite{our2nd}, has the character of Hensel’s lemma, and here we provide a positive answer (see Appendix~\ref{sec:appliftalaHens}). By Proposition~\ref{prop:HenselSO(2)K} (resp.~\ref{prop:HenselSO(3)}), each element of $\widetilde{G}_{\kappa,p^n}$ (resp.\ $\widetilde{G}_{p^n}$) lifts to exactly $p$ elements (resp.\ $p^3$) in $\widetilde{G}_{\kappa,p^{n+1}}$ (resp.\ $\widetilde{G}_{p^{n+1}}$) --- in agreement with Eq.~\eqref{eq:liftings2d} --- so that (cf.\ Corollary~\ref{cor:HenselSO(3)})
\beq\label{eq:unauguaglsem}
G_{(\kappa,)p^n}= \widetilde{G}_{(\kappa,)p^n}.
\eeq
Eq.~\eqref{eq:unauguaglsem} provides an alternative description of $G_{(\kappa,)p^n}$ --- besides the already known Eq.~\eqref{genelso2} and Cardano decomposition modulo $p^n$--- as in Eqs.~\eqref{eq:altrogrequiv2x2},~\eqref{eq:altrogrequiv} by solving the system of special orthogonal conditions modulo $p^n$, or by lifting solutions [cf.\ systems~\eqref{pv1lift},~\eqref{pv2lift},~\eqref{puplift},~\eqref{eq:solift0},~\eqref{eq:solift0n}].
\end{remark}

\section{Construction of the Haar measure on \texorpdfstring{$\so(2)_{p,\kappa}$}{Lg} and \texorpdfstring{$\so(3)_p$}{Lg} from an inverse limit of measure spaces}\label{sec:invlimme}
The groups $\so(d)_{p(,\kappa)}$ are compact, for every prime $p>2$, $d\in\{2,3\}$, and $\kappa\in\{-v,p,up\}$, hence they admit an essentially unique (left and right) Haar measure. 
In this section, we finally get to construct it, by exploiting the machinery of inverse limit of measure spaces. Before starting, we recall to the reader the following result (Proposition~7 in VII.15 of~\cite{BourInt}), providing another proof of the existence of the Haar measure in terms of inverse limits.
\begin{proposition}\label{prop:genHaarinvBour}
Let $G$ be a locally compact group. Let $(K_i)_{i\in I}$ be a decreasing directed family of compact normal subgroups of $G$ with intersection $\{e\}$. Set $G_i\coloneqq G/K_i$; let $\cProj_i\colon G\rightarrow G_i$ and $f_{ij}\colon G_j\rightarrow G_i$ ($i\leq j$) be the canonical homomorphisms. Then $G$ can be identified with the inverse limit group $\varprojlim\{G_i,f_{ij}\}_{i\in I}$, and the canonical mapping of the latter to $G_i$ is identified with $\cProj_i$. For every $i\in I$, let $\mu_i$ be the left (resp.\ right) Haar measure on $G_i$. Suppose that $f_{ij}(\mu_j) =\mu_i$ for $i\leq j$. Then, there \emph{exists a unique} measure $\mu$ on $G$ such that $\cProj_i(\mu)=\mu_i$ for all $i\in I$; $\mu$ is said to be the \emph{inverse limit measure} of the $\mu_i$s; $\mu$ is the left (resp.\ right) Haar measure on $G$.
\end{proposition}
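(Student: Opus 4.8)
The plan is to prove the statement in three movements: the topological identification $G\simeq\varprojlim\{G_i,f_{ij}\}_I$, the construction of a candidate measure $\mu$ as a positive linear functional on $C_c(G)$, and the verification that $\cProj_i(\mu)=\mu_i$, that $\mu$ is left (resp.\ right) invariant, and that it is unique. Throughout I work under the standing convention that a locally compact group is Hausdorff, so that each $K_i$ is closed and its cosets are compact.

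For the identification I would set $\Phi\colon G\to\varprojlim\{G_i,f_{ij}\}_I$, $g\mapsto(\cProj_i(g))_{i\in I}$; it is a continuous homomorphism landing in the inverse limit because $f_{ij}\circ\cProj_j=\cProj_i$, and it is injective since $\ker\Phi=\bigcap_i K_i=\{e\}$. For surjectivity I take a coherent thread $(g_iK_i)_i$, note that coherence gives $g_jK_i=g_iK_i$ for $i\le j$, and observe that the compact cosets $\{g_iK_i\}_i$ enjoy the finite intersection property (directedness yields an upper bound $j$ with $g_jK_j\subseteq\bigcap_k g_{i_k}K_{i_k}$); compactness then produces $g\in\bigcap_i g_iK_i$ with $\Phi(g)=(g_iK_i)_i$. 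The essential topological input is that the $K_i$ form a neighbourhood basis at $e$: for open $U\ni e$ the closed sets $K_i\cap(G\setminus U)$ have empty total intersection $(\bigcap_i K_i)\cap(G\setminus U)=\emptyset$, so by compactness some $K_i\subseteq U$. Using this, for any open $U\ni e$ I choose $i$ and open $U'\ni e$ with $U'K_i\subseteq U$; then $\cProj_i^{-1}(\cProj_i(U'))=U'K_i\subseteq U$ is open (as $\cProj_i$ is an open map), so $\Phi(U)$ contains the basic inverse-limit neighbourhood $\Phi(U'K_i)$ of $e$. Hence $\Phi$ is open, and by homogeneity a topological isomorphism; from here on I identify $G$ with the inverse limit and $\cProj_i$ with its canonical projection.

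For the measure, let $\nu_i$ be the normalised Haar probability measure on the compact group $K_i$ and define the averaging operator $(A_if)(g)=\int_{K_i}f(gk)\,\de\nu_i(k)$ on $C_c(G)$. Since $A_if$ is $K_i$-invariant it factors as $A_if=\bar f_i\circ\cProj_i$ with $\bar f_i\in C_c(G_i)$, and I set $I(f):=\int_{G_i}\bar f_i\,\de\mu_i$. The crux --- and the main obstacle --- is to show $I(f)$ is independent of $i$. For $i\le j$ the identity $A_if=A_i(A_jf)$ (using $K_j\subseteq K_i$ and right invariance of $\nu_i$) descends to $\bar f_i\circ f_{ij}=P_{ij}\bar f_j$, where $P_{ij}$ averages over the compact subgroup $\ker f_{ij}=K_i/K_j$ of $G_j$. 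Then the hypothesis $f_{ij}(\mu_j)=\mu_i$, applied as a change of variables, together with the $\ker f_{ij}$-invariance of the Haar measure $\mu_j$ (an elementary instance of Weil's integration formula), gives $\int_{G_i}\bar f_i\,\de\mu_i=\int_{G_j}(\bar f_i\circ f_{ij})\,\de\mu_j=\int_{G_j}P_{ij}\bar f_j\,\de\mu_j=\int_{G_j}\bar f_j\,\de\mu_j$. This is precisely where the compatibility hypothesis is consumed, and the delicate point is tracking the normalisations so that averaging over the fibre leaves the integral unchanged.

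Granting well-definedness, $I$ is positive and linear, so by the Riesz representation theorem it is integration against a Radon measure $\mu$. To check $\cProj_i(\mu)=\mu_i$ I take $h\in C_c(G_i)$: then $h\circ\cProj_i\in C_c(G)$ is already $K_i$-invariant, so $\overline{(h\circ\cProj_i)}_i=h$ and $\int_{G_i}h\,\de(\cProj_i(\mu))=I(h\circ\cProj_i)=\int_{G_i}h\,\de\mu_i$. For left invariance, writing $L_gf(x)=f(g^{-1}x)$, the relation $A_i(L_gf)=(A_if)(g^{-1}\cdot)$ gives $\overline{(L_gf)}_i=L_{\cProj_i(g)}\bar f_i$, whence $I(L_gf)=\int_{G_i}L_{\cProj_i(g)}\bar f_i\,\de\mu_i=I(f)$ by left invariance of $\mu_i$; the right-invariant case is obtained symmetrically by left averaging. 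Thus $\mu$ is a left (resp.\ right) invariant Radon measure, i.e.\ a Haar measure on $G$. Finally, for uniqueness I let $\mu'$ be any measure with $\cProj_i(\mu')=\mu_i$ for all $i$: since each $f\in C_c(G)$ is uniformly continuous and $K_i\downarrow\{e\}$, the averages $A_if=\bar f_i\circ\cProj_i$ converge to $f$ uniformly with supports eventually inside a fixed compact set, and as $\mu'$ is finite on compacta, $\int f\,\de\mu'=\lim_i\int_{G_i}\bar f_i\,\de(\cProj_i(\mu'))=\lim_i\int_{G_i}\bar f_i\,\de\mu_i=I(f)$, so $\mu'=\mu$. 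This identifies $\mu$ as the unique inverse limit measure and as the Haar measure on $G$ normalised by $\cProj_i(\mu)=\mu_i$.
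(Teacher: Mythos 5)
Your proof is correct, but note a peculiarity of the comparison: the paper does not prove Proposition~\ref{prop:genHaarinvBour} at all --- it imports it verbatim from Bourbaki (Proposition~7 in VII.15 of~\cite{BourInt}) and then, in Section~\ref{sec:invlimme}, builds a concrete substitute for the special case at hand. There the groups are profinite, the quotients $G_{(\kappa,)p^n}$ are finite with normalised counting measures, and the construction is purely set-theoretic: preimage algebras $\Sigma_n^\ast=\pi_n^{-1}(\Sigma_n)$, the union algebra $A=\bigcup_n\Sigma_n^\ast$, a direct verification that the compatible set function $\mu$ on $A$ is $\sigma$-additive, the unique Carath\'eodory--Hahn extension $\overline{\mu}_{p(,\kappa)}$ to $\sigma(A)$, and then separate proofs that $\sigma(A)$ is the Borel $\sigma$-algebra and that the extension is bi-invariant (the latter by an extension-uniqueness trick structurally similar to your uniqueness step). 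Your argument instead proves the general statement in its full Bourbaki strength --- arbitrary locally compact $G$, arbitrary directed family $(K_i)_i$, non-discrete quotients --- via the functional-analytic route: the FIP/compactness identification $G\simeq\varprojlim\{G_i,f_{ij}\}_I$, the averaging operators $A_i$, consistency of $I(f)=\int_{G_i}\bar f_i\,\de\mu_i$ from $f_{ij}(\mu_j)=\mu_i$ together with invariance under the compact kernel, Riesz representation, and uniqueness from $A_if\to f$ uniformly with controlled supports. What each approach buys is clear: yours actually establishes the quoted proposition (something the paper delegates to~\cite{BourInt}), while the paper's finite-quotient machinery trades generality for explicit computability --- the values of $\overline{\mu}_{p(,\kappa)}$ on basic balls drop out of the orders $\lvert G_{(\kappa,)p^n}\rvert$, which is the point of the whole exercise there.

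Three details you gloss over are standard but worth naming to make the argument airtight: (i) both $\bar f_i\in C_c(G_i)$ and $h\circ\cProj_i\in C_c(G)$ rest on $\cProj_i$ being open \emph{and proper} (preimages of compact sets are compact precisely because $K_i$ is compact); (ii) your ``$P_{ij}$ averages over $\ker f_{ij}$'' implicitly uses that the pushforward of $\nu_i$ under $K_i\rightarrow K_i/K_j$ is the normalised Haar measure of the quotient; (iii) the step $\int_{G_j}\bar f_j(xu)\,\de\mu_j(x)=\int_{G_j}\bar f_j\,\de\mu_j$ for $u\in K_i/K_j$ is exactly the triviality of the modular function of $G_j$ on a compact subgroup (its image is a compact subgroup of $\RR_{>0}$, hence $\{1\}$) --- that is the precise justification, rather than an appeal to Weil's integration formula, whose normalisation bookkeeping you flag as delicate but which never actually enters. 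None of these is a gap; the proof stands.
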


In what follows, we give a concrete realisation of this abstract result for $\so(2)_{p,\kappa}$ and $\so(3)_p$. We specifically construct an inverse family of Haar measure spaces over the inverse family $\left\{G_{(\kappa,)p^n},\,\varphi_{nl}\right\}_{\NN}$ of topological groups characterising $\so(d)_{p(,\kappa)}$ (see Theorem~\ref{prop:liminvfacile}). This relies on the counting (i.e. Haar) measure on the power set of the finite groups $G_{(\kappa,)p^n}=\so(d)_{p(,\kappa)}\mod p^n$. Indeed, $G_{(\kappa,)p^n}$ is the quotient group of $\so(d)_{p(,\kappa)}$ by $\left(\mathrm{I}_{d\times d}+p^n\mathsf{M}_{d\times d}(\ZZ_p)\right)\cap \so(d)_{p(,\kappa)}$, and $\left\{\left(\mathrm{I}_{d\times d}+p^n\mathsf{M}_{d\times d}(\ZZ_p)\right)\cap \so(d)_{p(,\kappa)}\right\}_{n\in\NN}$ is a decreasing directed family of compact normal subgroups of $\so(d)_{p(,\kappa)}$ with intersection $\{\mathrm{I}_{d\times d}\}$, as in Proposition~\ref{prop:genHaarinvBour}. The inverse family of measure spaces is then used to define a measure on those particular subsets of $\so(d)_{p(,\kappa)}$ which are preimages (or lifts) of subsets of $G_{(\kappa,)p^n}$ for some $n\in\NN$. Finally, the crucial point is the possibility to extend the latter measure to the $\sigma$-algebra of $\so(d)_{p(,\kappa)}$ generated by those sets. We will explicitly verify that the generated $\sigma$-algebra is Borel (knowing the non-Archimedean topology of the $p$-adic rotation groups), and that the inverse limit measure on it is the Haar measure on $\so(d)_{p(,\kappa)}$. Our result will be concrete and workable: The technical tools developed in Section~\ref{sec:invlimtopgru} allow us to calculate the cardinality of $G_{(\kappa,)p^n}$, so as to be able to explicitly evaluate the Haar measure on $\so(d)_{p(,\kappa)}$.

\bigskip

Since the result $\so(d)_{p(,\kappa)}\simeq\varprojlim\left\{G_{(\kappa,)p^n},\,\varphi_{nl}\right\}_{\NN}$, we avoided writing the dependence of the maps $\varphi_{nl}$ and $\pi_n$ on $p$, $d$ and in case $\kappa$, not to overload the notation. We carry this choice forward throughout the section, and introduce those indices back just for the final mathematical objects. $\left\{G_{(\kappa,)p^n},\,\varphi_{nl}\right\}_{\NN}$ is in particular an inverse family of sets, at the basis of Definition~\ref{def:invemeaspac} of inverse family of measure spaces, which we want to construct over the former. Every $G_{(\kappa,)p^n}$ is a finite topological group supplied with discrete topology, i.e., the topology coinciding with the power set of $G_{(\kappa,)p^n}$. Maintaining that structure, any finite set can be turned into a measure space, by taking its power set as $\sigma$-algebra, and the counting measure on it. Then, let 
\beq \label{eq:sigmaalmodpk}
\Sigma_n\coloneqq\{E_n\subseteq G_{(\kappa,)p^n}\}
\eeq 
be the power set of the finite group $G_{(\kappa,)p^n}$. Clearly $\Sigma_n$ is finite, of cardinality $|\Sigma_n|=2^{|G_{(\kappa,)p^n}|}$. The normalised counting measure on any finite group $G_{(\kappa,)p^n}$ is the probability measure
\beq \label{eq:objCoutmeasure}
\mu_n\colon\Sigma_n\rightarrow[0,1],\quad \mu_n(E_n)=\frac{\lvert E_n\rvert}{\lvert G_{(\kappa,)p^n}\rvert},
\eeq 
where $\lvert G_{(\kappa,)p^n}\rvert$ is known from Propositions~\ref{prop:cardparam},~\ref{prop:ordinigrup3dmodpk}, and $\mu_n$ turns out to be the essentially unique (left and right) Haar measure on $(G_{(\kappa,)p^n},\Sigma_n)$, for every $n\in\NN$.

Since $\varphi_{nl}$ is a surjective group homomorphism, then Remark~\ref{rem:liftstescard} holds true and
\beq \label{eq:measurepreserving}
\mu_{(\kappa,)p^l}\big(\varphi_{nl}^{-1}(E_n)\big)=\frac{\lvert \varphi_{nl}^{-1}(E_n)\rvert}{\lvert G_{(\kappa,)p^l}\rvert} =
\frac{\lvert E_n\rvert }{\lvert G_{(\kappa,)p^n}\rvert }
= \mu_n(E_n),
\eeq 
for every $E_n\subseteq G_{(\kappa,)p^n}$ and every $n\leq l$:
\beq 
\mu_l\circ\varphi_{nl}^{-1}=\mu_n.
\eeq 
This means that $\varphi_{nl}$ is measure preserving, hence the family $\left\{\left(G_{(\kappa,)p^n},\,\Sigma_n,\mu_n\right),\varphi_{nl}\right\}_{\NN}$ of probability spaces is an inverse system of (Haar) measure spaces, according to Definition~\ref{def:invemeaspac}. At this point, we run through Section~1 of~\cite{Choksi}.
\begin{definition}\label{def:liftSigma}
For every $n\in\NN$, let $\Sigma_n^\ast$ be the 
preimage of $\Sigma_n$ [as in~\eqref{eq:sigmaalmodpk}] with respect to $\pi_n$ [as in~\eqref{eq:pin}], namely, we set:
\beq \label{eq:def:liftSigma}
\Sigma_n^\ast\coloneqq\pi^{-1}_n(\Sigma_n)=\left\{\pi_n^{-1}(E_n)\midd E_n\in \Sigma_n\right\}.
\eeq 
\end{definition}

It is easy to prove that if $f\colon X\rightarrow Y$ is a surjective map from the set $X$ onto the set $Y$, and $\Sigma$ is a $\sigma$-algebra of $Y$, then $f^{-1}(\Sigma)$ is a $\sigma$-algebra of $X$. As a consequence, $\Sigma_n^\ast$ is a finite ($\sigma$-)algebra of $\so(d)_{p(,\kappa)}$, as is shown explicitly by means of the following two points:
\begin{itemize}
    \item for every (finite) family $\left\{E_\lambda=\pi_n^{-1}\big({E_n}^{(\lambda)}\big)\right\}_\lambda$ of sets in $\Sigma_n^\ast$, $\bigcup_\lambda E_\lambda=    \pi_n^{-1}\left(\bigcup_\lambda E_n^{(\lambda)}\right)\in\Sigma_n^\ast$, because $\bigcup_\lambda E_n^{(\lambda)}\in \Sigma_n$;
    \item for every $E=\pi_n^{-1}(E_n)\in \Sigma_n^\ast$, $ \so(d)_{p(,\kappa)}\setminus E = \pi_n^{-1}(G_{p^n}\setminus E_n) \in \Sigma_n^\ast$, because $G_{p^n}\setminus E_n\in\Sigma_n$.
\end{itemize}
 \begin{definition}\label{def.mes}
For every $n\in\NN$, we denote by $\mu_n^\ast$ the set function defined on $\Sigma_n^\ast$ as
\beq \label{eq:measulivelli}
\mu_n^\ast(E)\coloneqq(\mu_n\circ\pi_n)(E)=\mu_n(E_n),
\eeq
for $E=\pi_n^{-1}(E_n)\in\Sigma_n^\ast$, where $\Sigma_n^\ast$ is as in~\eqref{eq:def:liftSigma} and $\mu_n$ as in Eq.~\eqref{eq:objCoutmeasure}.
 \end{definition}
The set function $\mu_n^\ast$ is well defined, since  $\pi_n(E)=\pi_n\big(\pi_n^{-1}(E_n)\big)=E_n$ by surjectivity of $\pi_n$. Moreover, $\mu_n^\ast$ inherits from $\mu_n$ the properties of a probability  measure: $\mu_n^\ast(E)\in[0,1]$ for every $E\in\Sigma_n^\ast$, $\mu_n^\ast(\emptyset)=0$, and $\mu_n^\ast$ is ($\sigma$-)additive on $\Sigma_n^\ast$. In fact, let $\left\{E_\lambda=\pi_n^{-1}\big({E_n}^{(\lambda)}\big)\right\}_\lambda$ be a (finite) family of pairwise disjoint sets in $\Sigma_n^\ast$: If $E_\lambda\cap E_{\lambda'}=\emptyset$ then ${E_n}^{(\lambda)}\cap
{E_n}^{(\lambda')}=\emptyset$, and by the ($\sigma$-)additivity of $\mu_n$,
\beq
\mu_n^\ast\bigg(\bigcup_\lambda E_{\lambda}\bigg)=
\mu_n^\ast\left(\pi_n^{-1}\bigg(\bigcup_\lambda {E_n}^{(\lambda)}\bigg)\right)=\mu_n\left(\bigcup_\lambda {E_n}^{(\lambda)}\right)=\sum_\lambda\mu_n\left({E_n}^{(\lambda)}\right)=\sum_\lambda \mu_n^\ast(E_\lambda).
\eeq
Hence, $\left(\so(d)_{p(,\kappa)},\Sigma_n^\ast,\mu_n^\ast\right)$ is a probability space, for every $n\in\NN$. Also,
\beq \label{eq:unacontenutelaltra}
\Sigma_n^\ast\subset\Sigma_l^\ast
\eeq 
for every $n< l$, since $\Sigma_n^\ast\ni E=\pi_n^{-1}(E_n)=\pi_l^{-1}(\varphi_{nl}^{-1}(E_n))=\pi_l^{-1}(E_l)$ for $E_l\coloneqq \varphi_{nl}^{-1}(E_n) \in\Sigma_l$. Thus
\beq \label{eq:measconsist}
\mu_n^\ast(E)=\mu_l^\ast(E)
\eeq 
for every $E\in\Sigma_n^\ast\subset\Sigma_l^\ast$, as $\varphi_{nl}$ is measure preserving~\eqref{eq:measurepreserving}.
\begin{definition}\label{def.A}
We denote by
\beq \label{eq:def.A}
A\coloneqq\bigcup_{n\in\NN}\Sigma_n^\ast, 
\eeq 
the union of all the $\sigma$-algebras $\Sigma_n^\ast$ [as in~\eqref{eq:def:liftSigma}] of $\so(d)_{p(,\kappa)}$. 
\end{definition}
It is clear that $A$ is  a countable set, being the countable union of finite sets. Furthermore, it is not hard to prove that $A$ is an algebra of sets of $\so(d)_{p(,\kappa)}$:
\begin{itemize}
    \item if $E=\pi_n^{-1}(E_n),\,F=\pi_l^{-1}(E_l)\in A$ with $n\leq l$ (the case $n> l$ is analogous), then $E\cup F=\pi_l^{-1}\big(\varphi_{nl}^{-1}(E_n)\cup E_l\big)\in \Sigma_l^\ast\subset A$. With a similar reasoning, by taking the maximum over the subscripts, $A$ is closed under finite union;
    \item if $E=\pi_n^{-1}(E_n)\in A$, then $\so(d)_{p(,\kappa)}\setminus E =\pi_n^{-1}(G_{(\kappa,)p^n}\setminus E_n)\in \Sigma_n^\ast\subset A$.
\end{itemize}

\begin{remark}
$A$ is not a $\sigma$-algebra of $\so(d)_{p(,\kappa)}$ (it is not closed under countable union), as it is the union of a countable sequence of $\sigma$-algebras one contained in the other~\cite{broughton}.
\end{remark}

\begin{definition}
Let $A$ be the algebra of sets~\eqref{eq:def.A} of $\so(d)_{p(,\kappa)}$. We denote by $\mu$ the set function on $A$ defined by
\beq \label{eq:def.mu}
\mu(E)\coloneqq\mu_n^\ast(E),\quad\mbox{for every $E\in \Sigma_n^\ast$},
\eeq 
where $\mu_n^\ast$ is the measure in Definition~\ref{def.mes}, and $\Sigma_n^\ast$ the $\sigma$-algebra on $\so(d)_{p(,\kappa)}$ as in~\ref{eq:def:liftSigma}, for every $n\in\NN$.
\end{definition}
The map $\mu$ is well defined on $A$, for the above discussion around Eq.~\eqref{eq:measconsist}. Moreover, $\mu(E)\in[0,1]$ for every $E\in A$ and $\mu(\emptyset)=0$, as the $\mu_n^\ast$s themselves satisfy these properties. 

A map $\mu$ constructed as above may not be $\sigma$-additive in the more general scenario where the initial inverse family is indexed by a directed set with measures $\mu_n$ on rings of sets. This would imply that $\mu$ cannot be extended to a measure on the $\sigma$-algebra generated by $A$~\cite{Halmos}. But this is not the case in this work, as the following proposition states.

\begin{proposition}
The map $\mu$ defined in~\eqref{eq:def.mu} is $\sigma$-additive.
\end{proposition}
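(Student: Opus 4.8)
The plan is to reduce the $\sigma$-additivity of $\mu$ to its finite additivity, which is essentially already in hand. Finite additivity holds because any \emph{finite} pairwise-disjoint family $E_1,\dots,E_N\in A$ lifts to a common algebra: writing $E_i\in\Sigma_{n_i}^\ast$ and setting $M=\max_i n_i$, Eq.~\eqref{eq:unacontenutelaltra} gives $E_i\in\Sigma_M^\ast$ for all $i$, and by Eq.~\eqref{eq:measconsist} we have $\mu(E_i)=\mu_M^\ast(E_i)$. Since $\mu_M^\ast$ is a genuine (finitely additive) measure on the algebra $\Sigma_M^\ast$, additivity of $\mu$ over the finite family follows at once. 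The real content of the proposition is therefore the passage from finite to countable disjoint unions.

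The crux is a compactness argument showing that a \emph{countable} disjoint union of sets in $A$ that itself belongs to $A$ can have only finitely many nonempty terms. First I would observe that every $E=\pi_n^{-1}(E_n)\in\Sigma_n^\ast$ is clopen in $\so(d)_{p(,\kappa)}$: the finite group $G_{(\kappa,)p^n}$ carries the discrete topology, so $E_n$ is clopen there, and $\pi_n$ is continuous by Theorem~\ref{prop:liminvfacile}, whence $\pi_n^{-1}(E_n)$ is both open and closed. Because $\so(d)_{p(,\kappa)}$ is compact, each such $E$, being closed, is itself compact. Now take a countable pairwise-disjoint family $\{E_\lambda\}_\lambda\subseteq A$ with $E=\bigcup_\lambda E_\lambda\in A$. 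Then $E$ is compact while every $E_\lambda$ is open, so $\{E_\lambda\}_\lambda$ is an open cover of $E$ and admits a finite subcover $E_{\lambda_1},\dots,E_{\lambda_k}$. For any index $\mu\notin\{\lambda_1,\dots,\lambda_k\}$ we have $E_\mu\subseteq E=\bigcup_i E_{\lambda_i}$ together with $E_\mu\cap E_{\lambda_i}=\emptyset$, forcing $E_\mu=\emptyset$. Hence all but finitely many $E_\lambda$ vanish, the series $\sum_\lambda\mu(E_\lambda)$ collapses to a finite sum, and $\sigma$-additivity follows from the finite additivity established above.

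The main (and essentially only) obstacle is the general failure of $\sigma$-additivity for finitely additive set functions on an algebra that is not a $\sigma$-algebra — a point flagged by the remark preceding the proposition, which notes that $A$ is not closed under countable union. What sidesteps it cleanly is the combination of the compactness of $\so(d)_{p(,\kappa)}$ with the clopenness of every member of $A$: this trivialises the countable case by reducing it to the finite one, so no limiting estimates or continuity-at-$\emptyset$ arguments are needed once the ``finitely many nonempty terms'' fact is in place.
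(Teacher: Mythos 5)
Your proof is correct, but it takes a genuinely different route from the paper's. The paper argues purely measure-theoretically: it takes the maximum $m$ of the representation levels $n_\lambda$ of the sets $E_\lambda=\pi_{n_\lambda}^{-1}(E_{n_\lambda})$, writes the union (which lies in $A$ by hypothesis) as $\pi_l^{-1}(E_l)$ for some $l\geq m$, observes via~\eqref{eq:measconsist} that $\mu(E_\lambda)=\mu_l^\ast(E_\lambda)$ for all $\lambda$, and concludes by $\sigma$-additivity of $\mu_l^\ast$ on $\Sigma_l^\ast$. You instead exploit the topology: every member of $A$ is clopen (a fact the paper itself records later, in the proof of Proposition~\ref{prop:borelsalg}) and $\so(d)_{p(,\kappa)}$ is compact, so a countable disjoint family in $A$ whose union lies in $A$ has all but finitely many terms empty, and $\sigma$-additivity collapses to the finite additivity you establish first. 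Each approach has its merits. The paper's is shorter and avoids topology, but its phrase \textquotedblleft taking the maximum over the subscripts $n_\lambda$\textquotedblright\ is only literally meaningful when the levels $n_\lambda$ are bounded, which is not given a priori for a countably infinite family; your compactness argument is precisely what shows that this situation cannot occur with infinitely many nonempty terms, so your proof closes a step the paper leaves implicit. (In the same spirit, note that the \textquotedblleft$\sigma$-additivity of $\mu_l^\ast$\textquotedblright\ invoked by the paper is itself automatic finite additivity, since the finite algebra $\Sigma_l^\ast$ contains only finitely many pairwise disjoint nonempty sets --- so the two arguments are ultimately close cousins.) One cosmetic point: you use $\mu$ both for the measure and as an index in \textquotedblleft for any index $\mu\notin\{\lambda_1,\dots,\lambda_k\}$\textquotedblright; choose a different letter there.
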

\begin{proof}
Let $\{E_\lambda\in A\}_\lambda$ be a countable family of pairwise disjoint sets such that $\bigcup_\lambda E_\lambda \in A$. By virtue of Eq.~\eqref{eq:unacontenutelaltra}, by taking the maximum $m$ over the subscripts $n_\lambda$ in $E_\lambda = \pi^{-1}_{n_\lambda}(E_{n_\lambda})$, we can write $\bigcup_{\lambda=1}^\infty E_\lambda=\pi_l^{-1}(E_l)$, for some $E_l\in\Sigma_l$ and some $l\geq m$. Since $n_\lambda\leq l$ for every $\lambda$, it follows that $\mu(E_\lambda)=\mu_l^\ast(E_\lambda)$ for every $\lambda$ by Eq.~\eqref{eq:measconsist}, and
\beq 
\mu \left(\bigcup_\lambda E_\lambda\right) = \mu_l^\ast\left(\bigcup_\lambda E_\lambda\right)=\sum_\lambda \mu_l^\ast(E_\lambda)=\sum_\lambda\mu(E_\lambda)
\eeq 
by $\sigma$-additivity of $\mu_l^\ast$.
\end{proof}

This proves that $\mu$ is a probability measure on the algebra $A$ of $\so(d)_{p(,\kappa)}$. However, to get a measure space, we need to introduce the $\sigma$-algebra generated by the algebra $A$. 
\begin{definition}\label{def:defSigmagenA}
For every prime $p>2$, $d\in\{2,3\}$, and $\kappa\in\{-v,p,up\}$, let $\Sigma_{p(,\kappa)}(A)$ be the $\sigma$-algebra of $\so(d)_{p(,\kappa)}$ generated by the algebra $A$ as in~\eqref{eq:def.A}.
\end{definition}

By Theorem~C p.~23 in~\cite{Halmos}, since $A$ is countable, then so is $\Sigma_{p(,\kappa)}(A)$.
\begin{remark}\label{rem:nale}
Since $\mu$ is a finite measure on $A$, then there exists a unique $\sigma$-additive extension of $\mu$ to a measure $\overline{\mu}_{p(,\kappa)}$ on $\Sigma_{p(,\kappa)}(A)$ which coincides with $\mu$ on $A$ (Theorem~A p.~54 in~\cite{Halmos}). Then, the measure space $\big(\so(d)_{p(,\kappa)}, \Sigma_{p(,\kappa)}(A), \overline{\mu}_{p(,\kappa)}\big)$ is what is called the \emph{inverse limit of the inverse family of measure spaces} $\left\{\left(G_{(\kappa,)p^n},\,\Sigma_n,\mu_n\right),\varphi_{nl}\right\}_{\NN}$~\cite{Choksi}, for every prime $p>2$, $d\in\{2,3\}$, and $\kappa\in\{-v,p,up\}$.
\end{remark}
We construct such an extension $\overline{\mu}_{p(,\kappa)}$ as follows. For every $E\subseteq \so(d)_{p(,\kappa)}$, consider $\pi_n^{-1}\big(\pi_n(E)\big)\in \Sigma_n^\ast\subset A$ for every $n\in \NN$. Clearly $E\subseteq \pi_n^{-1}\big(\pi_n(E)\big)$ for every $n\in \NN$, as well as $\pi_{n+1}^{-1}\big(\pi_{n+1}(E)\big)\subseteq \pi_n^{-1}\big(\pi_n(E)\big)$ because $\pi_{n+1}(\cR)\in \pi_{n+1}(E)$, $\cR\in\so(d)_{p(,\kappa)}$, implies $\varphi_{n,n+1}\big(\pi_{n+1}(\cR)\big)\in \varphi_{n,n+1}\big( \pi_{n+1}(E)\big)$, equivalent to $\pi_n(\cR)\in\pi_n(E)$. Hence
\beq \label{eq:seqdecredidef}
E\subseteq\dots\subseteq\pi_{n+1}^{-1}\big(\pi_{n+1}(E)\big)\subseteq \pi_n^{-1}\big(\pi_n(E)\big)\subseteq\dots\subseteq \pi_1^{-1}\big(\pi_1(E)\big),
\eeq 
i.e., $\left\{\pi_n^{-1}\big(\pi_n(E)\big)\right\}_{\NN}$
is a decreasing sequence of sets in $A$ containing $E$. The limit of such a sequence is $\bigcap_n\pi_n^{-1}\big(\pi_n(E)\big)$, it belongs to $\Sigma_{p(,\kappa)}(A)$~\cite{Halmos} but does not necessarily coincide with $E$. Therefore, we give the following
\begin{definition}
For every prime $p>2$ and $\kappa\in\{-v,p,up\}$, let $\Sigma_{p(,\kappa)}(A)$ be as in Definition~\ref{def:defSigmagenA}. For every $E\in\Sigma_{p(,\kappa)}(A)$, we define the set function
\beq\label{eq:defolmu}
\overline{\mu}_{p(,\kappa)}(E)\coloneqq\inf_n\left\{\mu  \big(\pi_n^{-1}\left(\pi_n(E)\right)\big)\right\}=\inf_n\left\{\mu_n\big(\pi_n(E)\big)\right\}.
\eeq 
\end{definition}
This coincides with the standard construction 
\beq \label{eq:defolmuST} \overline{\mu}_{p(,\kappa)}(E)=\inf\left\{\mu(F)\midd E\subseteq F\in A\right\} \eeq 
of the extension of a measure $\mu$ to $\Sigma_{p(,\kappa)}(A)$. Indeed,  $E\subseteq F=\pi_n^{-1}(F_n)$ is equivalent to $\pi_n(E)\subseteq F_n$ and, by varying $F_n$ at fixed $n$, we have
\beq
\pi^{-1}_n\big(\pi_n(E)\big)=\pi_n^{-1}\bigg(\bigcap\limits_{F_n\supseteq \pi_n(E)}F_n\bigg)=\bigcap\limits_{F_n\supseteq \pi_n(E)}\pi_n^{-1}(F_n).
\eeq
Therefore $\mu\big(\pi^{-1}_n\left(\pi_n(E)\right)\big)\leq \mu (\pi_n^{-1}(F_n))$ for every $F_n\in\Sigma_n$ such that $E\subseteq \pi_n^{-1}(F_n)$, at any fixed $n$, since $\mu$ is monotone as a measure (Theorem~A p.~37 in~\cite{Halmos}). Thus $\overline{\mu}_{p(,\kappa)}(E)=\inf\left\{\mu(F)\midd E\subseteq F\in A\right\}=\inf_n\big(\inf_{F_n}\left\{\mu(\pi_n^{-1}(F_n))\midd E\subseteq \pi_n^{-1}(F_n)\right\}\big)=\inf_n\left\{\mu  \big(\pi_n^{-1}\left(\pi_n(E)\right)\big)\right\}$.

The map $\overline{\mu}_{p(,\kappa)}$ defined on $\Sigma_{p(,\kappa)}(A)$ takes values in $[0,1]$, since it is bounded by probabilities $\mu\big(\pi_n^{-1}\left(\pi_n(E)\right)\big)$. And this standard construction is known to be a $\sigma$-additive measure on $\Sigma_{p(,\kappa)}(A)$ (cf.\ Theorem~C p.~47 in~\cite{Halmos}).
Moreover 
\beq 
\overline{\mu}_{p(,\kappa)}(E)=\mu(E)\quad \text{for}\ E\in A,
\eeq 
in fact if $A\ni E=\pi_m^{-1}(E_m)$, then $\pi_n^{-1}\big(\pi_n(E)\big)=\pi_n^{-1}\big(\varphi_{mn}^{-1}(E_m)\big)=\pi_m^{-1}(E_m)=E$ for every $n\geq m$, i.e., the sets of sequence~\eqref{eq:seqdecredidef} all coincide with $E$ for every $n\geq m$. Hence $\mu\big(\pi_n^{-1}\left(\pi_n(E)\right)\big)=\mu_m^\ast \big(\pi_n^{-1}\left(\pi_n(E)\right)\big)$ by~\eqref{eq:measconsist}, and this is $\mu_m^\ast(E)=\mu(E)$ for $n\geq m$. Since $\mu_m^\ast$ is monotone, then $\mu(E)\leq \mu\big(\pi_n^{-1}\left(\pi_n(E)\right)\big)$ for every $n\in\NN$, and $\overline{\mu}_{p(,\kappa)}(E)=\inf_n\left\{\mu\big(\pi_n^{-1}\left(\pi_n(E)\right)\big)\right\}=\mu(E)$.

We conclude that $\overline{\mu}_{p(,\kappa)}$ on $\Sigma_{p(,\kappa)}(A)$ is the unique extension of $\mu$ we were looking for in Remark~\ref{rem:nale}. We have proved the following result.
\begin{theorem}\label{thr:invlimeas} For every prime $p>2$, and $\kappa\in\left\{-v,p,up\right\}$, the triples
\beq 
\big(\so(2)_{p,\kappa},\Sigma_{p,\kappa}(A),\overline{\mu}_{p,\kappa}\big),\qquad
\big(\so(3)_p,\Sigma_p(A),\overline{\mu}_p\big),
\eeq 
with $\Sigma_{p(,\kappa)}(A)$ as in Definition~\ref{def:defSigmagenA} and $\overline{\mu}_{p(,\kappa)}$ as in~\eqref{eq:defolmu}, are (probability) measure spaces, and they are the inverse limits of the inverse families of measure spaces $\left\{\left(G_{(\kappa,)p^n},\Sigma_n,\mu_n\right),\varphi_{nl}\right\}_{\NN}$.
\end{theorem}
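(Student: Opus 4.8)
The plan is to recognise that this theorem is a bookkeeping statement that packages together the facts assembled throughout the present section, so the proof consists of collecting them rather than of carrying out any fresh computation. Two assertions must be verified: first, that each triple is a probability measure space; second, that each is the inverse limit of the stated inverse family of measure spaces in the sense of Remark~\ref{rem:nale} (i.e.\ of~\cite{Choksi}).

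For the first assertion, I would note that $\Sigma_{p(,\kappa)}(A)$ is by Definition~\ref{def:defSigmagenA} the $\sigma$-algebra generated by the algebra $A$ of~\eqref{eq:def.A}, so $\big(\so(d)_{p(,\kappa)},\Sigma_{p(,\kappa)}(A)\big)$ is a measurable space. It has already been shown that $\mu$ is a $\sigma$-additive probability measure on $A$, and that the set function $\overline{\mu}_{p(,\kappa)}$ of~\eqref{eq:defolmu} coincides with the standard extension~\eqref{eq:defolmuST}, hence is a $\sigma$-additive measure on $\Sigma_{p(,\kappa)}(A)$ taking values in $[0,1]$ with $\overline{\mu}_{p(,\kappa)}(\emptyset)=0$. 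To upgrade this to a probability measure it remains only to check normalisation: since $\so(d)_{p(,\kappa)}=\pi_n^{-1}\big(G_{(\kappa,)p^n}\big)\in\Sigma_n^\ast\subset A$ with $\mu_n\big(G_{(\kappa,)p^n}\big)=1$, Eq.~\eqref{eq:measulivelli} gives $\overline{\mu}_{p(,\kappa)}\big(\so(d)_{p(,\kappa)}\big)=\mu\big(\so(d)_{p(,\kappa)}\big)=1$.

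For the second assertion, I would invoke the defining data of an inverse limit of measure spaces recalled in Remark~\ref{rem:nale}: one needs the generating algebra $A=\bigcup_n\Sigma_n^\ast$ built from the lifted $\sigma$-algebras $\Sigma_n^\ast=\pi_n^{-1}(\Sigma_n)$, the premeasure $\mu$ fixed by $\mu|_{\Sigma_n^\ast}=\mu_n\circ\pi_n$, and $\overline{\mu}_{p(,\kappa)}$ to be the \emph{unique} $\sigma$-additive extension of $\mu$ to $\Sigma_{p(,\kappa)}(A)$. All three ingredients have been constructed above, and since $\mu$ is a finite measure on the algebra $A$, Theorem~A p.~54 in~\cite{Halmos} supplies both the existence and the uniqueness of the extension, while the displayed identification of~\eqref{eq:defolmu} with~\eqref{eq:defolmuST} shows that our $\overline{\mu}_{p(,\kappa)}$ is precisely this extension. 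This reproduces the construction of Section~1 of~\cite{Choksi} verbatim, so the triples are the claimed inverse limits.

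The genuinely non-trivial input — the point where the argument could have failed — is the $\sigma$-additivity of $\mu$ on $A$ established just above. In the general theory an inverse family of measure spaces need not admit an inverse limit measure, precisely because the premeasure assembled on the generating algebra can fail to be $\sigma$-additive; what rescues us here is the finiteness of $\mu$ together with the nested structure~\eqref{eq:unacontenutelaltra} of the $\Sigma_n^\ast$, which lets any countable disjoint union lying in $A$ be realised inside a single $\Sigma_l^\ast$ and thereby reduces $\sigma$-additivity to the finite additivity of the counting measures $\mu_l$. I would therefore present the theorem as the immediate consequence of that additivity, combined with the Carathéodory–Hahn extension theorem and the normalisation check, with no remaining obstacle.
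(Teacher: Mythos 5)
Your proposal is correct and follows essentially the same route as the paper: the theorem is indeed proved there by assembling the preceding facts --- $\sigma$-additivity of $\mu$ on the algebra $A$, the unique extension via Theorem~A p.~54 of~\cite{Halmos}, the identification of~\eqref{eq:defolmu} with the standard extension~\eqref{eq:defolmuST}, and the inverse-limit terminology of Remark~\ref{rem:nale} and~\cite{Choksi}. Your added normalisation check $\overline{\mu}_{p(,\kappa)}\big(\so(d)_{p(,\kappa)}\big)=1$ also matches the computation the paper records immediately after the theorem.
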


\bigskip

Now, we verify that the inverse limit measure $\overline{\mu}_{p(,\kappa)}$ on the $\sigma$-algebra $\Sigma_{p(,\kappa)}(A)$ of $\so(d)_{p(,\kappa)}$ satisfies all the defining conditions of the Haar measure.
\begin{proposition}\label{prop:borelsalg}
For every prime $p>2$, $d\in\{2,3\}$ and $\kappa\in\{-v,p,up\}$, the $\sigma$-algebra $\Sigma_{p(,\kappa)}(A)$ [as in~\eqref{def:defSigmagenA}] coincides with the Borel $\sigma$-algebra $\cB(\so(d)_{p(,\kappa)})$ of $\so(d)_{p(,\kappa)}$. 
\end{proposition}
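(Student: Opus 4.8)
The plan is to establish the asserted equality by proving the two inclusions separately, the whole argument resting on the observation that the generators collected in $A$ are precisely finite unions of the clopen balls forming the ultrametric base of $\so(d)_{p(,\kappa)}$. The crucial identification is that, for any $L\in\so(d)_{p(,\kappa)}$, the fibre of the projection $\pi_n$ over the point $\pi_n(L)$ is exactly a basic ball: since $\pi_n$ reduces entries $\bmod\,p^n$ and the norm is $\|L'-L\|_p=\max_{ij}|\ell'_{ij}-\ell_{ij}|_p$, the condition $L'\equiv L \bmod p^n$ (entrywise) is equivalent to $\|L'-L\|_p\leq p^{-n}$, whence $\pi_n^{-1}\big(\{\pi_n(L)\}\big)=B_{-n}(L)$. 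This is consistent with $\ker(\pi_n)=B_{-n}(\mathrm{I}_{d\times d})$ already recorded in the proof of Theorem~\ref{prop:liminvfacile}, the fibres being its cosets. Since every $E_n\in\Sigma_n$ is a finite subset of the finite group $G_{(\kappa,)p^n}$, its preimage $\pi_n^{-1}(E_n)$ is a finite union of such balls $B_{-n}(L_0)$; hence every element of $A=\bigcup_n\Sigma_n^\ast$ is a finite union of clopen balls of the base.

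For the inclusion $\Sigma_{p(,\kappa)}(A)\subseteq\cB(\so(d)_{p(,\kappa)})$, I would note that each ball $B_{-n}(L_0)$ is clopen, hence Borel, so that every element of $A$ — being a finite union of such balls — is Borel, giving $A\subseteq\cB(\so(d)_{p(,\kappa)})$. As $\cB(\so(d)_{p(,\kappa)})$ is a $\sigma$-algebra containing $A$ and $\Sigma_{p(,\kappa)}(A)$ is by definition the smallest such, the inclusion follows at once. For the reverse inclusion $\cB(\so(d)_{p(,\kappa)})\subseteq\Sigma_{p(,\kappa)}(A)$, I would use that the balls $\{B_{-n}(L_0)\}_{n\in\NN}$ form a base of the ultrametric topology: since the matrix entries are $p$-adic integers one has $\|L-L_0\|_p\leq1$ always, so the only nontrivial radii are $p^{-n}$ with $n\geq1$, and these balls exhaust the base described after Eq.~\eqref{eq:topQpbase}. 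All of them lie in $A$. Because $\so(d)_{p(,\kappa)}$ is second countable (established in Subsection~\ref{subsec:basicgroup}), every open set is a \emph{countable} union of such basic balls, hence belongs to $\Sigma_{p(,\kappa)}(A)$, which is closed under countable unions. Thus $\Sigma_{p(,\kappa)}(A)$ contains all open sets, and therefore the $\sigma$-algebra they generate, namely $\cB(\so(d)_{p(,\kappa)})$.

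The two inclusions together yield $\Sigma_{p(,\kappa)}(A)=\cB(\so(d)_{p(,\kappa)})$. There is no deep analytic difficulty here; the essential point — and the only place where anything could go wrong — is the matching of the two descriptions of the base sets, i.e.\ the identity $\pi_n^{-1}\big(\{\pi_n(L)\}\big)=B_{-n}(L)$ that turns the purely algebraic generators of $A$ into topological balls. The one genuinely necessary hypothesis is second countability: without it, open sets would be only \emph{arbitrary} unions of basic balls, which need not lie in the countably generated $\sigma$-algebra $\Sigma_{p(,\kappa)}(A)$, and the reverse inclusion would fail. Thus the main care in writing the proof goes into invoking second countability at exactly the right step.
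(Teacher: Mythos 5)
Your proposal is correct and follows essentially the same route as the paper: it rests on the same key identity $\pi_n^{-1}\big(\pi_n(L)\big)=B_{-n}(L)$, the observation that $A$ is exactly the collection of finite unions of these basic clopen balls, and the invocation of second countability to pass from arbitrary to countable unions of base sets. The only difference is presentational — you organise the argument as two explicit inclusions, whereas the paper states more compactly that the open balls and $A$ generate the same $\sigma$-algebra — and your emphasis on where second countability is indispensable matches the paper's reasoning exactly.
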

\begin{proof}
First, the collection of preimages of singletons of $G_{(\kappa,)p^n}$ with respect to $\pi_n$, is a topology base for $\so(d)_{p(,\kappa)}$: For every $n\in\NN$,
\begin{align}
B_{-n}(\mathcal{R}_0)&=\{\mathcal{R}\in \so(d)_{p(,\kappa)}\midd \|\mathcal{R}-\mathcal{R}_0\|_p\leq p^{-n}\}\notag \\
&=\{\mathcal{R}\in \so(d)_{p(,\kappa)}\midd \lvert (\mathcal{R}-\mathcal{R}_0)_{ij}\rvert_p \leq  p^{-n}\ \text{for every}\ i,j\}\notag\\
& =\{\mathcal{R}\in \so(d)_{p(,\kappa)}\midd\pi_n(\mathcal{R})\equiv \pi_n(\mathcal{R}_0)\mod p^n\}\notag\\
&= \pi_n^{-1}\big(\pi_n(\mathcal{R}_0)\big).\label{eq:discopreimage}
\end{align}
Actually, it is enough for $\cR_0$ to run over a countable dense subset of $\so(d)_{p(,\kappa)}$ to get a topology base, and $\so(d)_{p(,\kappa)}$ is second countable.

Every $E\in A$ of the kind $E=\pi_n^{-1}(E_n)$ for some $E_n\in \Sigma_n$ is open, as $\pi_n$ is continuous. Indeed, $E$ is a finite union of open balls in the topology base:
\beq
    E=\pi_n^{-1}(E_n)=\bigcup_{M\in E_n}\pi_n^{-1}(M)=\bigcup_{\mathcal{R}_0} B_{-n}(\mathcal{R}_0),
\eeq
where the union is performed on whatever choice of $\mathcal{R}_0\in\pi_n^{-1}(M)$ for each $M\in E_n$, and $E_n\subseteq G_{(\kappa,)p^n}$ is finite. Thus, $A$ is the collection of all finite unions of open balls in the topology base.

Now, $\Sigma_{p(,\kappa)}(A)$ is the $\sigma$-algebra generated by $A$ through countable unions of its sets and complementations. The former are countable unions of open balls of $\so(d)_{p(,\kappa)}$. But in principle, the topology of a space is generated by arbitrary unions of the sets in the topology base. Anyway, $\so(d)_{p(,\kappa)}$ is second countable, so the topology is generated by all possible (no more than) countable unions on $A$. Hence the open balls and $A$ generate the same $\sigma$-algebra: $\Sigma_{p(,\kappa)}(A)$ is the Borel $\sigma$-algebra of $\so(d)_{p(,\kappa)}$.
\end{proof}

\medskip

Proposition~\ref{prop:borelsalg} means that $\overline{\mu}_{p(,\kappa)}$ is a Borel measure. Furthermore, $\overline{\mu}_{p(,\kappa)}$ is a probability measure, finite on every set in $\Sigma_{p(,\kappa)}(A)$, and in particular on every compact set.

Since $\so(d)_{p(,\kappa)}$ is a (locally) compact, second countable and Hausdorff, every Borel measure on $\so(d)_{p(,\kappa)}$ that is finite on compact sets is regular (i.e., both outer and inner regular on all Borel sets), hence Radon (cf.\ Theorem~7.8 in~\cite{Folland99}). In particular, this implies that $\overline{\mu}_{p(,\kappa)}$ is a Radon measure on $\so(d)_{p(,\kappa)}$.

\begin{proposition}
For every prime $p>2$ and $\kappa\in\{-v,p,up\}$, the measure $\overline{\mu}_{p(,\kappa)}$ [as in~\eqref{eq:defolmu}] is both left- and right-invariant:
    \beq 
    \overline{\mu}_{p(,\kappa)}(\mathcal{R}E)=\overline{\mu}_{p(,\kappa)}(E\mathcal{R})=\overline{\mu}_{p(,\kappa)}(E),
    \eeq 
for every $E\in\Sigma_{p(,\kappa)}(A),\,\mathcal{R}\in \so(d)_{p(,\kappa)}$.
\end{proposition}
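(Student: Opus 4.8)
The plan is to push the invariance down to the finite quotients $G_{(\kappa,)p^n}$, where it is elementary, and then lift it back through the infimum formula~\eqref{eq:defolmu} defining $\overline{\mu}_{p(,\kappa)}$. As a preliminary point, fix $\mathcal{R}\in\so(d)_{p(,\kappa)}$: since $\so(d)_{p(,\kappa)}$ is a topological group, left translation $L\mapsto\mathcal{R}L$ is a homeomorphism (its inverse being translation by $\mathcal{R}^{-1}$), hence maps Borel sets to Borel sets. By Proposition~\ref{prop:borelsalg}, $\Sigma_{p(,\kappa)}(A)$ is the Borel $\sigma$-algebra, so $\mathcal{R}E\in\Sigma_{p(,\kappa)}(A)$ whenever $E\in\Sigma_{p(,\kappa)}(A)$; the same holds for right translation $L\mapsto L\mathcal{R}$. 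Thus both sides of the asserted equalities are well defined.

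The central observation is that the projection $\pi_n$ of Eq.~\eqref{eq:pin} is a group homomorphism, so it intertwines translations at every level: for every $E\subseteq\so(d)_{p(,\kappa)}$ one has $\pi_n(\mathcal{R}E)=\pi_n(\mathcal{R})\,\pi_n(E)$ and $\pi_n(E\mathcal{R})=\pi_n(E)\,\pi_n(\mathcal{R})$, with $\pi_n(\mathcal{R})\in G_{(\kappa,)p^n}$. Next, the normalised counting measure $\mu_n$ of Eq.~\eqref{eq:objCoutmeasure} is bi-invariant on the finite group $G_{(\kappa,)p^n}$: for any $g\in G_{(\kappa,)p^n}$ and $E_n\subseteq G_{(\kappa,)p^n}$, left (resp.\ right) multiplication by $g$ is a bijection of $G_{(\kappa,)p^n}$ onto itself, so $\lvert gE_n\rvert=\lvert E_n g\rvert=\lvert E_n\rvert$ and therefore $\mu_n(gE_n)=\mu_n(E_n g)=\mu_n(E_n)$. (This merely restates that $\mu_n$ is the Haar measure on $G_{(\kappa,)p^n}$, as already noted after Eq.~\eqref{eq:objCoutmeasure}.)

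Combining these two facts with~\eqref{eq:defolmu}, I would compute, for every $E\in\Sigma_{p(,\kappa)}(A)$,
\[
\overline{\mu}_{p(,\kappa)}(\mathcal{R}E)
=\inf_n\big\{\mu_n\big(\pi_n(\mathcal{R})\,\pi_n(E)\big)\big\}
=\inf_n\big\{\mu_n\big(\pi_n(E)\big)\big\}
=\overline{\mu}_{p(,\kappa)}(E),
\]
where the first equality folds in the homomorphism identity $\pi_n(\mathcal{R}E)=\pi_n(\mathcal{R})\,\pi_n(E)$, and the second equality uses left-invariance of each $\mu_n$; the computation for $\overline{\mu}_{p(,\kappa)}(E\mathcal{R})$ is identical, invoking right-invariance of $\mu_n$ instead. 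This yields the claimed bi-invariance simultaneously.

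I expect no serious obstacle here. The only step requiring a word of care is that $\mathcal{R}E$ again lies in the domain $\Sigma_{p(,\kappa)}(A)$, which is settled by the homeomorphism argument of the first paragraph together with Proposition~\ref{prop:borelsalg}; everything else is inherited level by level from the counting measures. In effect, all the substantive work was already carried out in realising $\overline{\mu}_{p(,\kappa)}$ as the inverse limit of the bi-invariant $\mu_n$, so that invariance descends essentially for free.
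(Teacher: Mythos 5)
Your proof is correct, but it takes a genuinely different route from the paper's. The paper argues in two stages: first it proves invariance only for sets in the algebra $A$, via a level-by-level verification that $\mathcal{R}E=\pi_n^{-1}\big(\pi_n(\mathcal{R})E_n\big)$ for $E=\pi_n^{-1}(E_n)$ (including a cardinality argument showing the $l$-th components of $\mathcal{R}E$ are exactly the $\varphi_{nl}$-preimages of the $n$-th one); then, for general $E\in\Sigma_{p(,\kappa)}(A)$, it defines $\mu^{(\mathcal{R})}(E)\coloneqq\overline{\mu}_{p(,\kappa)}(\mathcal{R}E)$, checks that this is a $\sigma$-additive extension of $\mu$ from $A$ to $\Sigma_{p(,\kappa)}(A)$, and invokes the \emph{uniqueness} of such an extension (Halmos) to conclude $\mu^{(\mathcal{R})}=\overline{\mu}_{p(,\kappa)}$. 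You instead exploit the explicit formula~\eqref{eq:defolmu} directly: since $\pi_n$ is a group homomorphism, $\pi_n(\mathcal{R}E)=\pi_n(\mathcal{R})\pi_n(E)$ as sets, and bi-invariance of the normalised counting measures $\mu_n$ makes the infimum invariant term by term. This is shorter, bypasses the extension-uniqueness machinery entirely, and handles left and right invariance symmetrically in one computation (the paper dispatches right-invariance by analogy and compactness). What the paper's route buys in exchange is robustness: it uses only that $\overline{\mu}_{p(,\kappa)}$ is \emph{the} unique extension of the invariant $\mu$ on $A$, so it would survive even if the extension were not given by a closed-form expression on all of $\Sigma_{p(,\kappa)}(A)$, whereas your argument is tied to the specific infimum formula. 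One point you rightly flag, and which the paper leaves implicit, is that $\mathcal{R}E\in\Sigma_{p(,\kappa)}(A)$ must be checked before either side is evaluated; your appeal to translation being a homeomorphism together with Proposition~\ref{prop:borelsalg} is legitimate and non-circular, since that proposition is established before the invariance statement.
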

\begin{proof}
We prove that $\overline{\mu}_{p(,\kappa)}$ is left-invariant; right-invariance goes analogously (and it is implied by the left-invariance, since $\so(d)_{p(,\kappa)}$ is compact). First, we focus on the elements $E=\pi_n^{-1}(E_n)\in\Sigma_n^\ast\in A$, $n\in\NN$. Their left-translation by $\mathcal{R}\in \so(d)_{p(,\kappa)}$ is $\mathcal{R}E$, whose components are
\beq
\pi_l(\mathcal{R}E)=\pi_l(\mathcal{R})\pi_l\big(\pi_n^{-1}(E_n)\big)=\begin{cases}
    \pi_l(\mathcal{R})\varphi_{nl}^{-1}(E_n),& l> n;\\
    \pi_n(\mathcal{R})E_n,& l=n;\\
    \pi_l(\mathcal{R})\varphi_{ln}(E_n),&l< n.
\end{cases}
\eeq 
We have $\varphi_{nl}\big(\pi_l(\mathcal{R}E)\big)=\pi_n(\mathcal{R}E)=\pi_n(\mathcal{R})E_n$ for every $l>n$, that is, $\pi_l(\mathcal{R})\varphi_{nl}^{-1}(E_n)\subseteq \varphi_{nl}^{-1}\big(\pi_n(\mathcal{R})E_n\big)$. Moreover $\lvert \pi_l(\mathcal{R})\varphi_{nl}^{-1}(E_n)\rvert=\lvert \varphi_{nl}^{-1}(E_n)\rvert=\lvert \varphi_{nl}^{-1}\big(\pi_n(\mathcal{R})E_n\big)\rvert$, thus $\pi_n(\mathcal{R}E)=\varphi_{nl}^{-1}\big(\pi_n(\mathcal{R}E)\big)$ for every $l>n$. In words, the $l$-th component of $\cR E$, $l\geq n$, is the preimage of the $n$-th component with respect to $\varphi_{nl}$. It follows that $\mathcal{R}E=\pi_n^{-1}\big(\pi_n(\mathcal{R}E)\big)\in \Sigma_n^\ast \subset A$. Then $\mu(\mathcal{R}E)=\mu_n^\ast(\mathcal{R}E)=\mu_n\big(\pi_n(\mathcal{R})E_n\big) =\mu_n(E_n)=\mu(E)$ by the left-invariance of $\mu_n$ on $\Sigma_n$ under the action of $G_{(\kappa,)p^n}$.

For every $E\in \Sigma_{p(,\kappa)}(A)$, we define $\mu_{p(,\kappa)}^{(\mathcal{R})}(E)\coloneqq \overline{\mu}_{p(,\kappa)}(\mathcal{R}E)$ for a given $\mathcal{R}\in \so(d)_{p(,\kappa)}$. Clearly $\mu_{p(,\kappa)}^{(\mathcal{R})}\geq 0$ as $\overline{\mu}_{p(,\kappa)}$ is so, $\mu_{p(,\kappa)}^{(\mathcal{R})}(\emptyset)=\overline{\mu}_{p(,\kappa)}(\mathcal{R}\emptyset)=\overline{\mu}_{p(,\kappa)}(\emptyset)=0$, and, if $\{E_\lambda\}_\lambda$ is a countable family of pairwise disjoint sets in $\Sigma_{p(,\kappa)}(A)$, then $\mu_{p(,\kappa)}^{(\mathcal{R})}\left(\bigcup_\lambda E_\lambda\right)=\overline{\mu}_{p(,\kappa)}\left(\mathcal{R}\bigcup_\lambda E_\lambda\right)=\overline{\mu}_{p(,\kappa)}\left(\bigcup_\lambda \mathcal{R}E_\lambda\right)$. We see that $\{\mathcal{R}E_\lambda\}_\lambda$ is still a disjoint family of sets in $\Sigma_{p(,\kappa)}(A)$: If $E_\lambda\cap E_\nu=\emptyset$ for some $\lambda,\nu$, then $\emptyset = \mathcal{R}\emptyset=\mathcal{R}(E_\lambda\cap E_\nu)=(\mathcal{R}E_\lambda)\cap(\mathcal{R}E_\nu)$, since the group action (simple left-multiplication by a matrix) is bijective. Therefore, every $\mu_{p(,\kappa)}^{(\mathcal{R})}$ inherits the $\sigma$-additivity of $\overline{\mu}_{p(,\kappa)}$. If $E\in A$, then $\mu_{p(,\kappa)}^{(\mathcal{R})}(E)=\overline{\mu}_{p(,\kappa)}(\mathcal{R}E)=\overline{\mu}_{p(,\kappa)}(E)=\mu(E)$ by left-translation invariance of $\overline{\mu}_{p(,\kappa)}\equiv\mu$ on $A$. All of this means that $\mu_{p(,\kappa)}^{(\mathcal{R})}$ is a $\sigma$-additive extension of $\mu$ on $\Sigma_{p(,\kappa)}(A)$, for each $\mathcal{R}\in \so(d)_{p(,\kappa)}$. By uniqueness of such extension, it must be $\overline{\mu}_{p(,\kappa)}\equiv \mu_{p(,\kappa)}^{(\mathcal{R})}$ for each $\mathcal{R}\in \so(d)_{p(,\kappa)}$, providing $\mu_{p(,\kappa)}^{(\mathcal{R})}(E)=\overline{\mu}_{p(,\kappa)}(E)=\overline{\mu}_{p(,\kappa)}(\mathcal{R}E)$ for every $E\in \Sigma_{p(,\kappa)}(A),\,\mathcal{R}\in \so(d)_{p(,\kappa)}$. 
\end{proof}

\medskip

The above results are a proof of what follows.
\begin{theorem}\label{thr:HaarmSO}
The inverse-limit measure $\overline{\mu}_{p(,\kappa)}$, defined as in~\eqref{eq:defolmu} on the Borel $\sigma$-algebra $\Sigma_{p(,\kappa)}(A)=\mathcal{B}(\so(d)_{p(,\kappa)})$, is the (left and right) Haar measure on $\so(d)_{p(,\kappa)}$, for every prime $p>2$, $d\in\{2,3\}$ and $\kappa\in\left\{-v,p,up\right\}$.
\end{theorem}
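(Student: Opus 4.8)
The plan is to verify that $\overline{\mu}_{p(,\kappa)}$ satisfies the two defining requirements of a Haar measure listed in Definition~\ref{def:Haarme}: that it is a Radon measure, and that it is left- and right-invariant. Both ingredients have essentially been assembled in the preceding development, so the proof amounts to collecting them and invoking Definition~\ref{def:Haarme}.

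First I would establish the Radon property. By Theorem~\ref{thr:invlimeas}, $\overline{\mu}_{p(,\kappa)}$ is a probability measure on $\Sigma_{p(,\kappa)}(A)$, and by Proposition~\ref{prop:borelsalg} this $\sigma$-algebra coincides with the Borel $\sigma$-algebra $\mathcal{B}(\so(d)_{p(,\kappa)})$; hence $\overline{\mu}_{p(,\kappa)}$ is a Borel measure. Being a probability measure it is finite, and so in particular finite on every compact set. Since $\so(d)_{p(,\kappa)}$ is LCH and second countable (Section~\ref{sec:invlimtopgru}), every finite Borel measure on it is automatically outer and inner regular on all Borel sets, by Theorem~7.8 in~\cite{Folland99}; thus $\overline{\mu}_{p(,\kappa)}$ is Radon.

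Next I would verify invariance. The argument first checks left-invariance on the generating algebra $A$: for a cylinder set $E=\pi_n^{-1}(E_n)\in\Sigma_n^\ast$ and $\mathcal{R}\in\so(d)_{p(,\kappa)}$, since $\pi_n$ is a group homomorphism and left-multiplication by $\pi_n(\mathcal{R})$ bijectively permutes the finite group $G_{(\kappa,)p^n}$, the translate $\mathcal{R}E$ is again a cylinder set at the same level $n$, and the left-invariance of the normalised counting measure $\mu_n$ yields $\mu(\mathcal{R}E)=\mu(E)$. To pass from $A$ to the full Borel $\sigma$-algebra, I would define the translated set function $\mu^{(\mathcal{R})}(E):=\overline{\mu}_{p(,\kappa)}(\mathcal{R}E)$, check that it is a $\sigma$-additive measure on $\Sigma_{p(,\kappa)}(A)$ agreeing with $\mu$ on $A$, and then invoke the uniqueness of the $\sigma$-additive extension of $\mu$ (Theorem~A p.~54 in~\cite{Halmos}) to conclude $\mu^{(\mathcal{R})}=\overline{\mu}_{p(,\kappa)}$, which is precisely left-invariance on all Borel sets. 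Right-invariance then follows at once, since left and right Haar measures coincide on the compact group $\so(d)_{p(,\kappa)}$.

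The main obstacle is this invariance step, and specifically the propagation of translation-invariance from the algebra $A$ — where it reduces to the elementary invariance of counting measures on finite groups — to the whole generated $\sigma$-algebra; the key device making this work is the uniqueness of the $\sigma$-additive extension, which forces the translated measure $\mu^{(\mathcal{R})}$ to coincide with $\overline{\mu}_{p(,\kappa)}$ globally. With the Radon property and full invariance in hand, Definition~\ref{def:Haarme} identifies $\overline{\mu}_{p(,\kappa)}$ as the (left and right) Haar measure on $\so(d)_{p(,\kappa)}$, for every prime $p>2$, $d\in\{2,3\}$ and $\kappa\in\{-v,p,up\}$.
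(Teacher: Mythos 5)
Your proposal is correct and follows essentially the same route as the paper's proof: the Radon property from Theorem~\ref{thr:invlimeas}, Proposition~\ref{prop:borelsalg} and Theorem~7.8 in~\cite{Folland99}, then left-invariance checked first on the cylinder sets of $A$ via the invariance of the counting measures $\mu_n$, propagated to all of $\Sigma_{p(,\kappa)}(A)$ by the uniqueness of the $\sigma$-additive extension of $\mu$, with right-invariance following from compactness. The only cosmetic difference is that you justify $\mathcal{R}E\in\Sigma_n^\ast$ directly from $\pi_n$ being a group homomorphism, whereas the paper verifies the same fact by an explicit component-wise and cardinality argument.
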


\bigskip

We defined the Haar measures $\overline{\mu}_{p(,\kappa)}$ as probability measures, in fact their normalisation is
\beq \overline{\mu}_{p(,\kappa)} \big(\so(d)_{p(,\kappa)}\big) = \mu_n^\ast\big(\pi_n^{-1}(G_{(\kappa,)p^n})\big) = \mu_n(G_{(\kappa,)p^n})=\frac{\left \lvert G_{(\kappa,)p^n}\right\rvert}{\left \lvert G_{(\kappa,)p^n}\right\rvert}=1.
\eeq 
As an application, we provide the Haar measure of the open balls in the topology base for $\so(d)_{p(,\kappa)}$ already considered above. 

\begin{proposition}\label{prop:misbasinvlim}
Let $p>2$ be a prime, $d\in\{2,3\}$ and $\kappa\in\{-v,p,up\}$. For every $\cR_0\in \so(d)_{p(,\kappa)}$ and every $n\in\NN$, 
\begin{align}  
d=2:\qquad &\overline{\mu}_{p,\kappa}\big(B_{-n}(\cR_0)\big)=\frac{p^{-n}}{2},\ \   \kappa\in\left\{p,up\right\}, \qquad 
\overline{\mu}_{p,-v}\big(B_{-n}(\cR_0)\big) = \frac{p^{1-n}}{p+1},\label{eq:cardbocgpkk}\\
d=3:\qquad &\overline{\mu}_p\big(B_{-n}(\cR_0)\big) = \frac{p^{1-3n}}{2(p+1)}.
\end{align}
\end{proposition}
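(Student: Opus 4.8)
The plan is to recognise each ball $B_{-n}(\cR_0)$ as the preimage under $\pi_n$ of a single point, so that it already belongs to the algebra $A$, and then to evaluate $\overline{\mu}_{p(,\kappa)}$ on it directly through the normalised counting measure, substituting the group orders computed earlier. No genuine new argument is needed: the entire content has been prepared by the preceding sections.

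First I would invoke Eq.~\eqref{eq:discopreimage}, which identifies the ball as the full fibre $B_{-n}(\cR_0) = \pi_n^{-1}\big(\pi_n(\cR_0)\big) = \pi_n^{-1}\big(\{\pi_n(\cR_0)\}\big)$. Since the singleton $\{\pi_n(\cR_0)\}$ lies in the power set $\Sigma_n$ of the finite group $G_{(\kappa,)p^n}$, its preimage belongs to $\Sigma_n^\ast\subset A$. Because $\overline{\mu}_{p(,\kappa)}$ restricts to $\mu$ on $A$, and $\mu$ agrees with $\mu_n^\ast$ on $\Sigma_n^\ast$ (as established in the construction leading up to Theorem~\ref{thr:invlimeas}), together with the defining formula~\eqref{eq:objCoutmeasure} of the counting measure applied to a single point, one gets
\[
\overline{\mu}_{p(,\kappa)}\big(B_{-n}(\cR_0)\big)=\mu_n^\ast\big(\pi_n^{-1}(\{\pi_n(\cR_0)\})\big)=\mu_n\big(\{\pi_n(\cR_0)\}\big)=\frac{1}{\lvert G_{(\kappa,)p^n}\rvert}.
\]

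Then I would simply substitute the orders from Propositions~\ref{prop:cardparam} and~\ref{prop:ordinigrup3dmodpk}, namely $\lvert G_{\kappa,p^n}\rvert=2p^n$ for $\kappa\in\{p,up\}$, $\lvert G_{-v,p^n}\rvert=p^{n-1}(p+1)$, and $\lvert G_{p^n}\rvert=2p^{3n-1}(p+1)$. These yield $\tfrac{p^{-n}}{2}$, $\tfrac{p^{1-n}}{p+1}$, and $\tfrac{p^{1-3n}}{2(p+1)}$ respectively, which are exactly the claimed values, and the result is independent of the centre $\cR_0$ as expected from translation invariance. There is essentially no obstacle: the only step requiring a moment's care is confirming that $B_{-n}(\cR_0)$ is precisely the $\pi_n$-fibre over $\pi_n(\cR_0)$, which is the content of Eq.~\eqref{eq:discopreimage}; everything else is the bookkeeping of plugging the cardinalities into $1/\lvert G_{(\kappa,)p^n}\rvert$.
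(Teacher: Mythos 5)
Your proof is correct and follows exactly the paper's own argument: identify $B_{-n}(\cR_0)=\pi_n^{-1}\big(\pi_n(\cR_0)\big)\in\Sigma_n^\ast$ via Eq.~\eqref{eq:discopreimage}, evaluate $\overline{\mu}_{p(,\kappa)}$ through $\mu_n^\ast$ and the normalised counting measure to get $\lvert G_{(\kappa,)p^n}\rvert^{-1}$, and substitute the orders from Propositions~\ref{prop:cardparam} and~\ref{prop:ordinigrup3dmodpk}. Nothing is missing; the remark on independence of the centre is a harmless extra consistent with translation invariance.
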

\begin{proof}
Eq.~\eqref{eq:discopreimage} is $B_{-n}(\cR_0) =\pi_n^{-1}\big(\pi_n(\mathcal{R}_0)\big)\in \Sigma_n^\ast$ for every $n\in\NN$. Therefore, the Haar measure $\overline{\mu}_{p(,\kappa)}$ on $\so(d)_{p(,\kappa)}$ [cf.~\eqref{eq:defolmu}] reduces to the measure~\eqref{eq:measulivelli} on $\Sigma_n^\ast$. We conclude that, for every $\cR_0\in \so(d)_{p(,\kappa)}$,
\beq 
\overline{\mu}_{p(,\kappa)}\big(B_{-n}(\cR_0)\big)=\mu_n^\ast \big(\pi_n^{-1}\big(\pi_n(\mathcal{R}_0)\big)\big) = \mu_n\big(\pi_n(\mathcal{R}_0)\big) = \frac{\left\lvert \pi_n(\mathcal{R}_0)\right\rvert}{\left\lvert G_{(\kappa,)p^n}\right\rvert} = \left\lvert G_{\kappa,p^n}\right\rvert^{-1}.
\eeq 
The value of these measures is given by Propositions~\ref{prop:cardparam},~\ref{prop:ordinigrup3dmodpk}. 
\end{proof}

\section{Discussion}\label{sec:conclusion}
This work is inspired by Volovich's original idea~\cite{Volovich1} that the existence of a shortest measurable length --- i.e., the so-called Planck length --- entails a non-Archimedean structure of spacetime. According to this hypothesis, at the Planck regime spacetime does not consist of infinitely divisible intervals, but only of isolated points, which essentially results into a totally disconnected topological structure. Pursuing this idea to its logical conclusions naturally leads to the exploration of $p$-adic models of quantum mechanics. Within this framework, in~\cite{our1st,our2nd} we have begun to develop a theory of angular momentum and spin via a thorough study of the geometric features of the special orthogonal groups $\so(2)_{p,\kappa}$ and $\so(3)_p$. 

In the present contribution, our main aim was to provide a construction of the invariant measure (Haar measure) on the compact two- and three-dimensional $p$-adic rotation groups. In particular, this effort serves a dual purpose: On the one hand, it enables us to study the irreducible projective unitary  representations (via the Peter-Weyl theorem) of the special orthogonal groups in dimensions $2$ and $3$; on the other hand, it paves the way for the study of $p$-adic qubit models, which ultimately fit into our ideal program devoted to the foundation of a $p$-adic theory of quantum information.

The strategy we followed in this work essentially relies on the observation that, as $\so(2)_{p,\kappa}$ and $\so(3)_p$ are profinite groups, they are isomorphic 
to the inverse limit of an inverse family of finite groups. Over the latter, one considers an inverse family of Haar measure spaces, to construct the inverse limit measure and to prove that it is the Haar measure on the inverse limit groups $\so(2)_{p,\kappa}$ and $\so(3)_p$. This strategy is known to be generalisable to all profinite~\cite{profinitem} groups, and also to all locally compact groups~\cite{BourInt}. Our main aim was to obtain a concrete result especially in the case of the compact $p$-adic rotation groups, because of their remarkable role in the context of $p$-adic quantum mechanics, and, moreover, to provide an explicit determination of the Haar measure on these groups. This is achieved by knowing the order of the finite quotients $G_{(\kappa,)p^n}= \so(d)_{p(,\kappa)}\mod p^n$. To this end, we provided a parametrisation of $G_{(\kappa,)p^n}$, together with an interesting characterisation via a multivariable Hensel lifting of roots. These tools are also useful in the study of the representations of $\so(2)_{p,\kappa}$ and $\so(3)_p$ which factorise on some quotient $G_{(\kappa,)p^n}$~\cite{our2nd}.

It is worth remarking that the inverse limit strategy is not the only approach one can pursue to determine the Haar measure on the $p$-adic rotation groups. Indeed, very recently, a general formula for the Haar measure on every (locally compact, second countable, Hausdorff) $p$-adic Lie group was obtained in~\cite{our3rd}. The Haar measure on $\so(2)_{p,\kappa}$, and the Haar integral of $\so(3)_p$ are then derived by means of a suitable application of this general formula. For the group $\so(2)_{p,\kappa}$ --- where explicit calculations of $p$-adic Haar integrals can be carried out --- we have verified in Appendix~\ref{sec:equivmisure2} that the Haar measures obtained in these two different approaches do coincide (up to a positive multiplicative constant, due to normalisation, by essential uniqueness of the Haar measure).

\medskip

In this work, we focused on odd primes $p>2$. Now we want to describe the special case of even prime $p=2$, which exhibits some peculiarities. To start with, we recall that there is a unique definite quadratic form $Q_+$ on $\QQ_2^3$, associated with a unique compact $2$-adic special orthogonal group $\so(3)_2<\mathrm{SL}(3,\ZZ_2)$  defined by $Q_+$~\cite{our1st}. Moreover, there are seven (rather than three) definite quadratic forms $Q_\kappa$ on $\QQ_2^2$ --- labelled by their determinant $\kappa\in\{1,\pm2,\pm5,\pm10\}$ --- yielding seven compact $2$-adic groups $\so(2)_{2,\kappa}$.

Let us focus on the bidimensional case first. 
Here, we observe that while $\so(2)_{2,\kappa}<\mathrm{SL}(2,\ZZ_2)$, for all $\kappa\in\{1,\pm2,5,\pm10\}$, the case $\kappa=-5$ presents an exception. Indeed, by means of an argument similar to the proof of Corollary~\ref{cor:tuttisolicomp}, one can prove that $\so(2)_{2,-5}<\mathrm{SL}(2,2^{-1}\ZZ_2)$. In contrast to $\ZZ_2$, $2^{-1}\ZZ_2$ is not a ring but an additive group and, hence, the maps $\phi_{kn}\colon 2^{-1}\ZZ_2/2^n\ZZ_2\rightarrow \ZZ_2/2^k\ZZ_2$ in Eq.~\eqref{eq:homoaddno}, and $\cproj_k$ in Eq.~\eqref{eq:homocontproj}, are \emph{never} ring homomorphisms. Indeed, Theorem~\ref{prop:liminvfacile} provides an inverse limit of topological groups for $\so(2)_{2,\kappa}$ for all $\kappa\in\{1,\pm2,5,\pm10\}$, but \emph{not} for $\so(2)_{2,-5}$. On the other hand, a way to characterise $\so(2)_{2,-5}$ as an inverse limit of discrete finite groups is to inject $\so(2)_{2,-5}$ as a subgroup of $\so(3)_2$; namely, there exists $\mathbf{n}\in\QQ_2^3\setminus\{\boldsymbol{0}\}$ such that ${Q_+}_{\lvert\mathbf{n}^\perp}$ is equivalent to $Q_{-5}$ (cf.\ Proposition 21 in~\cite{our1st}), and $\so(2)_{2,-5}$ is the restriction to the orthogonal complement $\mathbf{n}^\perp$ of the abelian subgroup in $\so(3)_2$ of rotations around $\QQ_2\mathbf{n}$ with respect to an orthogonal basis $(\mathbf{g},\mathbf{h},\mathbf{n})$. As the entries of the matrices in $\so(3)_2$ are $2$-adic integers, a change of basis from $(\mathbf{g},\mathbf{h},\mathbf{n})$ to the canonical one in $\QQ_2^3$ provides a group $\so(2)_{2,-5}'<\so(3)_2$ isomorphic to $\so(2)_{2,-5}$, consisting of $3\times 3$ matrices with $2$-adic integer entries, for which the inverse limit in Theorem~\ref{prop:liminvfacile} holds true. At this point, the above construction of the Haar measure as an inverse limit of discrete measure spaces works on $\so(2)_{2,\kappa}$, $\kappa\in\{1,\pm2,5,\pm10\}$, and on $\so(2)_{2,-5}'$. Finally, the Haar measure on $\so(2)_{2,-5}'$ is transferred to the Haar measure on $\so(2)_{2,-5}$ by means of the pushforward via their topological group isomorphism. One uses a similar parametrisation as in Remark~\ref{rem:paraminter} to calculate the orders of $G_{\kappa,2^n}$, $\kappa\in\{1,\pm2,5,\pm10\}$, and of $\so(2)_{2,-5}'\mod 2^n$, for $n\in\NN$, so as to explicitly be able to evaluate the Haar measure of clopen balls as in Proposition~\ref{prop:misbasinvlim}. Again, these values are consistent with those computed by the normalised integral Haar measure as in Appendix~\ref{sec:equivmisure2}.

Moving to the three-dimensional case, $\so(3)_2<\mathrm{SL}(3,\ZZ_2)$ is still characterised as an inverse limit of topological groups as in Theorem~\ref{prop:liminvfacile}, since its matrix entries are in the ring $\ZZ_2$. Therefore, a measure $\overline{\mu}_2$ constructed as in~\eqref{eq:defolmu} provides again the (left and right) Haar measure on $\so(3)_2$. However, the evaluation of this measure on a Borel set of $\so(3)_2$ requires knowing the order of the groups $G_{2^n}$. This is a hard task, since none of the possible forms of principal (Euler or Cardano) \textquotedblleft angle\textquotedblright\ decomposition,
familiar from the real Euclidean case, exists for $p = 2$ (Remark~28 in~\cite{our1st}). As already seen, an alternative approach is through a multivariable Hensel lifting of roots: One defines $\widetilde{G}_{2^n}$ as the group of solutions modulo $2^n$ of the defining conditions of $\so(3)_2$ as in Eq.~\eqref{eq:altrogrequiv}, and studies whether or not this coincides with $G_{2^n}$. If this was the case, then the order of $G_{2^{n+1}}$ would be obtained from the number of liftings of each element of $\widetilde{G}_{2^n}$ to $\widetilde{G}_{2^{n+1}}$ (cf.\ Appendix~\ref{sec:appliftalaHens}). However, this is not the case, as one finds counterexamples of elements in $\widetilde{G}_{2^n}$ that do not lift to elements in $\widetilde{G}_{2^{n+1}}$. As expected algebraically, the above discussion shows that the situation for $p=2$ is peculiarly different from that for odd primes $p>2$ (see $\kappa=-5$ in two dimensions, or the non-existence of principal \textquotedblleft angle\textquotedblright\ decompositions in three dimensions and the failure of the Hensel lifting strategy). This circumstance is not evident in the $p$-adic Lie group approach discussed in~\cite{our3rd}.

Concerning higher dimensions, Theorem~6 at pp.~36-37 of~\cite{serre} states that no quadratic form on $\QQ_p^d$ is definite for $d \geq5$, for every prime $p\geq2$; hence, the only remaining case is $d=4$. By Corollary at p.~39 of~\cite{serre}, there is a unique definite quadratic form on $\QQ_p^4$ for every prime $p\geq2$, say $Q_+^{(4)}$ as in~\cite{our1st}, leading to one compact group $\so(4)_p$. One can show that $\so(4)_p<\mathrm{SL}(4,\ZZ_p)$ for $p>2$, for which the same inverse limit of groups $\so(4)_p\mod p^n$ as in Theorem~\ref{prop:liminvfacile} holds, while the entries of the matrices of $\so(4)_2$ are in $2^{-1}\ZZ_2$. Hence, for $p=2$ we still lack an inverse family of discrete finite groups whose inverse limit is isomorphic to $\so(4)_2$ --- as for $\so(2)_{2,-5}$. In general, this is given by an inverse family of quotient groups by a decreasing directed family of compact normal subgroups whose intersection is the identity~\cite{BourInt}. The last step is to calculate the order of those finite groups, for every prime $p\geq2$. In conclusion, the proposed construction of the Haar measure as an inverse limit also applies to $\so(4)_p$, and, given the above ingredients, it will be also explicitly computable.


\section*{Acknowledgments}
IS is supported by the Istituto Nazionale di Fisica Nucleare (INFN), Sezione di Perugia, and by the Spanish MICIN (project 
PID2022-141283NB-I00) with the support of FEDER funds. AW is supported by the European Commission QuantERA grant ExTRaQT
(Spanish MICIN project PCI2022-132965), by the Spanish MICIN
(project PID2022-141283NB-I00) with the support of FEDER funds,
by the Spanish MICIN with funding from European Union NextGenerationEU
(PRTR-C17.I1) and the Generalitat de Catalunya, by the Spanish MINECO
through the QUANTUM ENIA project: Quantum Spain, funded by the European
Union NextGenerationEU within the framework of the ``Digital Spain
2026 Agenda'', by the Alexander von Humboldt Foundation, and the
Institute for Advanced Study of the Technical University Munich.

\appendix

\section{Inverse limit characterisation of \texorpdfstring{$\mathbb{Q}_p$}{Lg} and of its closed subgroups}\label{sec:appinvlimQp}
This appendix section contains the proof of Proposition~\ref{prop:invlimQpappen}. To this end, for every prime $p\geq2$, we will always have in mind the bijective correspondences of $\QQ_p$ with $\ZZ/p\ZZ((p))$ and of $\ZZ_p$ with $\ZZ/p\ZZ[[p]]$, i.e., we will always write the elements of $\QQ_p$ and $\ZZ_p$ as formal Laurent and power series of $p$ respectively. Indeed, the following argument is inspired by Exercise~(3) at p.~65 of~\cite{Fuchs}, and by Exercise~5.25 at p.~255 of~\cite{Rotman}.

\medskip

We start by proving
\beq \label{eq:invlimQp}
\QQ_p\simeq \varprojlim\left\{\QQ_p/p^n\ZZ_p,\,\phi_{nl}\right\}_{\NN},
\eeq 
where each of the maps $\phi_{nl}$, $n\leq l$, is explicitly given by
\begin{align}
\phi_{nl}\colon &\QQ_p/p^l\ZZ_p\rightarrow \QQ_p/p^n\ZZ_p,
\nonumber\\
&x_{-j}p^{-j}+\dots+ x_{l-1}p^{l-1}+p^l\ZZ_p\mapsto x_{-j}p^{-j}+\dots +x_{n-1}p^{n-1}+p^n\ZZ_p,
\end{align}
and it is well defined, with $p^l\ZZ_p\subseteq p^n\ZZ_p$ for $n\leq l$, and then $\QQ_p/p^n\ZZ_p\subseteq \QQ_p/p^l\ZZ_p$.  

First, we observe that $\{\QQ_p/p^n\ZZ_p\}_{\NN}$ is a family of groups, since $p^n\ZZ_p$ is a normal subgroup of $\QQ_p$ (because $\QQ_p$ is an additive abelian group). Also, $\{\phi_{nl}\}_{\NN}$ is a family of group homomorphisms: For every $\sum_{k=-i}^{l-1}x_kp^k+p^l\ZZ_p,\, \sum_{k=-j}^{l-1}y_kp^k+p^l\ZZ_p \in \QQ_p/p^l\ZZ_p$, we have
\begin{align}
\phi_{nl}\left(\sum_{k=-i}^{l-1}x_kp^k+\sum_{k=-j}^{l-1}y_kp^k+p^l\ZZ_p\right)
&=\phi_{nl}\left(\sum_{k=-\max\{i,j\}}^{l-1}z_kp^k+p^l\ZZ_p\right)\nonumber\\
& = \sum_{k=-\max\{i,j\}}^{n-1}z_kp^k+p^n\ZZ_p,
\end{align}
where $z_k$ takes into account \textquotedblleft carryings\textquotedblright, that is, that $x_k+y_k\in \ZZ/p\ZZ$ and possible multiples of $p$ from the sum $x_k+y_k$ in $\ZZ$ contribute to the coefficient of $p^{k+1}$, and
\begin{align}
    \phi_{nl}\left(\sum_{k=-i}^{l-1}x_kp^k+p^l\ZZ_p\right)+\phi_{nl}\left(\sum_{k=-j}^{l-1}y_kp^k+p^l\ZZ_p\right) &= \sum_{k=-i}^{n-1}x_kp^k+\sum_{k=-j}^{n-1}y_kp^k+p^n\ZZ_p\nonumber\\
    & = \sum_{k=-\max\{i,j\}}^{n-1}z_kp^k+p^n\ZZ_p.
\end{align}
Clearly $\phi_{nn}=id$ (any Laurent series $\mod p^n\ZZ_p$ truncated $\mod p^n\ZZ_p$ is itself) and $\phi_{nl}=\phi_{nm}\circ \phi_{ml}$, for every $n\leq m\leq l,\ n,m,l\in\NN$ (truncating $\mod p^n\ZZ_p$ any Laurent series $\mod p^l\ZZ_p$ is equal to first truncating it $\mod p^m$ and then again $\mod p^n\ZZ_p$).

We denote by $\cproj_n$ the canonical projection 
\beq \label{eq:absuprib}
\cproj_n\colon\QQ_p\rightarrow\QQ_p/p^n\ZZ_p,\quad x\mapsto x+p^n\ZZ_p;
\eeq 
we consider the quotient topology on $\QQ_p/p^n\ZZ_p$, whose open sets are those $\mathcal{V}\subseteq \QQ_p/p^n\ZZ_p$ such that $\cproj_n^{-1}(\mathcal{V})$ are open in $\QQ_p$ (with the $p$-adic ultrametric topology). By definition, $\cproj_n$ is continuous. Any quotient group with quotient topology is a topological group, so is $\QQ_p/p^n\ZZ_p$ --- and the translation map $T_{a+p^n\ZZ_p}\colon \QQ_p/p^n\ZZ_p\rightarrow \QQ_p/p^n\ZZ_p$, $x+p^n\ZZ_p\mapsto (x+a)+p^n\ZZ_p$ is a homeomorphism for every $a\in\QQ_p$. We further show that the quotient topology on $\QQ_p/p^n\ZZ_p$ coincides with its discrete topology, by showing that singletons are open: $\cproj_n^{-1}\left(\{0+p^n\ZZ_p\}\right)=\{x\in\QQ_p\midd \cproj_n(x)=0+p^n\ZZ_p\} = \{x\in\QQ_p\midd |x|_p< p^{-n+1}\}=p^n\ZZ_p$ is an open ball; then, for every $x\in\QQ_p$, we have $\{x+p^n\ZZ_p\} = T_{x+p^n\ZZ_p}(0+p^n\ZZ_p)$, which is open since $0+p^n\ZZ_p$ is so and $T_{x+p^n\ZZ_p}$ is a homeomorphism. Moreover, each $\phi_{nl}$ is a continuous group homomorphism, as its domain is supplied with discrete topology. All of this proves that $\left\{\QQ_p/p^n\ZZ_p,\,\phi_{nl}\right\}_{\NN}$ is an inverse family of topological groups.

By Definition~\ref{def:invlimsetgroup}, an element in the inverse limit group in~\eqref{eq:invlimQp} is a sequence $\left(x_{(n)}+p^n\ZZ_p\right)_{n\in\NN}\in \prod_n\left(\QQ_p/p^n\ZZ_p\right)$ such that $x_{(n)}+p^n\ZZ_p=\phi_{nl}(x_{(l)}+p^l\ZZ_p)$ for all $n\leq l$, i.e., such that $x_{(n)}+p^n\ZZ_p=\phi_{n,n+1}(x_{(n+1)}+p^{n+1}\ZZ_p)$ for all $n\in\NN$ (since the index set $\NN$ is totally ordered). Last condition can be equivalently rewritten as 
\beq \label{eq:invlimcoscon}
x_{(n+1)}\equiv x_{(n)}\mod p^n\ZZ_p,
\eeq 
which implies $\lim\limits_{n\rightarrow\infty}\left\lvert x_{(n+1)}- x_{(n)}\right\rvert_p\leq \lim\limits_{n\rightarrow\infty}p^{-n}=0$. This means that $\left(x_{(n)}\right)_{n\in\NN}$ is a Cauchy sequence in $\QQ_p$, the latter being a complete space (once supplied with the $p$-adic metric). Hence, the Cauchy sequence converges in $\QQ_p$, say $\lim\limits_{n\rightarrow\infty}x_{(n)}\coloneqq x\in \QQ_p$. Eventually, $\varprojlim\left\{\QQ_p/p^n\ZZ_p,\,\phi_{nl}\right\}_{\NN}$ is a topological group, once endowed with the subspace topology of the product topology on $\prod_n\left(\QQ_p/p^n\ZZ_p\right)$, where each $\QQ_p/p^n\ZZ_p$ has a discrete topology.

We move to prove that the two topological groups in Eq.~\eqref{eq:invlimQp} are indeed isomorphic. We introduce the following map,
\beq \label{eq:fisotop}
f\colon\QQ_p\rightarrow \varprojlim\left\{\QQ_p/p^n\ZZ_p,\,\phi_{nl}\right\}_{\NN},\qquad x\mapsto \left(x+ p^n\ZZ_p\right)_n,
\eeq 
which is a group homomorphism: $f(x+y)=\left((x+y)+p^n\ZZ_p\right)_n = \left(x+p^n\ZZ_p\right)_n+\left(y+p^n\ZZ_p\right)_n=f(x)+f(y)$ for every $x,y\in\QQ_p$. Furthermore, we prove that $f$ is bijective. Consider the map \beq 
    g\colon \varprojlim\left\{\QQ_p/p^n\ZZ_p,\,\phi_{nl}\right\}_{\NN}\rightarrow \QQ_p, \qquad  \left(x_{(n)}+p^n\ZZ_p\right)_n\mapsto \lim_{n\rightarrow\infty}x_{(n)},
\eeq 
which is well defined: Suppose that another set of representatives in $\QQ_p$ is considered for the same element $\left(x_{(n)}+p^n\ZZ_p\right)_n$ in the inverse limit, say $\left(y_{(n)}\right)_n$ such that $y_{(n)}\equiv x_{(n)}\mod p^n\ZZ_p$. The limit of this Cauchy sequence is the same: $\lim\limits_{n\rightarrow\infty} \left\lvert y_{(n)}-x\right\rvert_p = \lim\limits_{n\rightarrow\infty}\left\lvert y_{(n)}-x_{(n)}+x_{(n)} -x\right\rvert_p\leq \lim\limits_{n\rightarrow\infty}\max\left\{\left\lvert y_{(n)}-x_{(n)}\right\rvert_p,\,\left\lvert x_{(n)}-x\right\rvert_p\right\}=0$ by the strong triangle inequality, i.e., $\lim\limits_{n\rightarrow\infty}y_{(n)}=\lim\limits_{n\rightarrow\infty}x_{(n)}=x$. On the one hand, for every $x\in\QQ_p$, 
$g\big(f(x)\big) = g\left(\left(x+ p^n\ZZ_p\right)_n\right)=\lim\limits_{n\rightarrow\infty}x=x$. On the other hand, for every $\left(x_{(n)}+ p^n\ZZ_p\right)_n\in \varprojlim\left\{\QQ_p/p^n\ZZ_p,\,\phi_{nl}\right\}_{\NN}$, $f\left(g\left(\left(x_{(n)}+ p^n\ZZ_p\right)_n\right)\right) = f(x) = \left(x+ p^n\ZZ_p\right)_n$. Condition~\eqref{eq:invlimcoscon} implies $x_{(l)}\equiv x_{(n)}\mod p^n\ZZ_p$ for all $l\geq n$, hence $\lim\limits_{l\rightarrow\infty}x_{(l)}  \equiv \lim\limits_{l\rightarrow\infty}x_{(n)}\mod p^n\ZZ_p$, that is $x\equiv x_{(n)}\mod p^n\ZZ_p$, for all $n\in\NN$. In conclusion $f\left(g\left(\left(x_{(n)}+ p^n\ZZ_p\right)_n\right)\right) = \left(x_{(n)}+ p^n\ZZ_p\right)_n$, and we proved that $f$ and $g$ are inverse of each other.

The map $f$ is continuous by construction: $\prod_n \cproj_n\colon \QQ_p\rightarrow \prod_n\left(\QQ_p/p^n\ZZ_p\right)$ is continuous since all its components $\cproj_n$ are so (indeed, the product topology on $\prod_n\left(\QQ_p/p^n\ZZ_p\right)$ is the coarsest topology for which all the projections on the factors are continuous), and $f$ is $\prod_n \cproj_n$ whose codomain is restricted to its image with subspace topology. Finally, we prove that $f^{-1}=g$ is continuous, by showing that the preimage of any base set of $\QQ_p$ is open. We first consider $D_k(0)$ for $k<0$, and get
\begin{align}
f\big(D_k(0)\big) &= \left\{\left(x+p^n\ZZ_p\right)_n\midd x\in D_k(0)=p^{-k}\ZZ_p\right\}\nonumber\\
& = \left\{\left(p^{-k}x_{-k}+\dots+p^n\ZZ_p\right)_n\in\varprojlim\left\{\QQ_p/p^n\ZZ_p,\,\phi_{nl}\right\}_{\NN}\right\}  \nonumber\\
& = \varprojlim\left\{\QQ_p/p^n\ZZ_p,\,\phi_{nl}\right\}_{\NN}\cap \left(\prod_{n=1}^{-k}\left(0+p^n\ZZ_p\right)\times \prod_{n>-k}\left(\QQ_p/p^n\ZZ_p\right)\right),
\end{align}
which is open on the subspace topology of $\varprojlim\left\{\QQ_p/p^n\ZZ_p,\,\phi_{nl}\right\}_{\NN}$, as $\prod_{n=1}^{-k}\left\{0+p^n\ZZ_p\right\}\times \prod_{n>-k}\left(\QQ_p/p^n\ZZ_p\right)$ is open in the product topology of $\prod_n\left(\QQ_p/p^n\ZZ_p\right)$ --- it is the product of a finite number of singletons (open in the discrete topology) times infinitely many whole spaces $\QQ_p/p^n\ZZ_p$. For any other open set $D_k(c)$, $k<0$, $c\in\QQ\subset\QQ_p$, we have $D_k(c) = t_c\big(D_k(0)\big)$ where $t_c\colon\QQ_p\rightarrow\QQ_p$, $x\mapsto x+c$ is the homeomorphism of translation in $\QQ_p$, for every $c$; thus $D_k(c)$ is open. We have proved that $f$, as in Eq.~\eqref{eq:fisotop}, is an isomorphism of topological groups.

\bigskip

The same argument can be repeated by replacing each assurance of $\QQ_p$ with any of its proper closed subgroups $p^m\ZZ_p$, $m\in\ZZ$, considered with subspace ultrametric topology. We just point out that $p^m\ZZ_p$ is complete, being a closed subspace of the complete space $\QQ_p$; in fact, a Cauchy sequence in $p^m\ZZ_p\subset \QQ_p$ converges in $\QQ_p$, and its limit actually belongs to $p^m\ZZ_p$. Lastly, here the inverse family is indexed by $\ZZ_{>m}$, to ensure that the group $p^k\ZZ_p$ we are quotienting by is a proper normal subgroup of $p^m\ZZ_p$, even in the case $m>0$. This concludes the proof of
\beq 
p^m\ZZ_p\simeq \varprojlim\left\{p^m\ZZ_p/p^n\ZZ_p,\,\phi_{nl}\right\}_{\ZZ_{>m}},
\eeq 
that is last statement of Proposition~\ref{prop:invlimQpappen}.

\section{Cardano decomposition of \texorpdfstring{$G_{p^n}=\pi_n\big(\so(3)_p\big)$}{Lg}}\label{sec:app3}
In order to calculate the Haar measure of Borel sets of $\so(3)_p$, for every prime $p>2$, we need to know the orders of the projected groups $G_{p^n}$, for every $n\in\NN$. This can be achieved by exploiting the projection via $\pi_n$ of the Cardano representation of $\so(3)_p$ (Theorem~\ref{theor:cardanorep}). 

Let $G_{\mathbf{n},p^n}\coloneqq \pi_n\left(\so(3)_{p,\mathbf{n}}\right)<G_{p^n}$, for every $n\in\NN$, $\mathbf{n}\in\QQ_p^3\setminus\{\boldsymbol{0}\}$.  We exploit Eqs.~\eqref{eq:paramSO3xy},~\eqref{eq:paramSO3z}, writing rotations around the reference axes as in Eq.~\eqref{eq:refaxisSO3}:
\begin{align}
    G_{x,p^n} =& \left\{\mathcal{R}_x(\xi)\mod p^n\midd \xi\in\ZZ/p^n\ZZ\right\}\cup \left\{\cR_x(\infty)\mathcal{R}_x(\xi)\mod p^n\midd \xi\in\ZZ/p^n\ZZ\right\}\\
    = & \left\{\begin{pmatrix}
    1&0&0\\0&a(\xi) & \frac{p}{v}c(\xi) \\
    0 & c(\xi)
    & a(\xi)
\end{pmatrix}\mod p^n\midd \xi\in\ZZ/p^n\ZZ\right\}\nonumber\\
    &\cup \left\{\begin{pmatrix}
    1&0&0\\0&-a(\xi) & -\frac{p}{v}c(\xi) \\
    0 & -c(\xi)
    & -a(\xi)
\end{pmatrix}\mod p^n\midd \xi\in\ZZ/p^n\ZZ\right\},\nonumber\\
G_{y,p^n}=& \left\{\mathcal{R}_y(\eta)\mod p^n\midd \eta\in\ZZ/p^n\ZZ\right\}\cup \left\{\cR_y(\infty)\mathcal{R}_y(\eta)\mod p^n\midd \eta\in\ZZ/p^n\ZZ\right\}\\
= &  \left\{\begin{pmatrix}
    e(\eta) & 0&-pg(\eta)\\0&1&0\\
    g(\eta) & 0 & e(\eta)
\end{pmatrix}\mod p^n\midd \eta\in\ZZ/p^n\ZZ\right\} \nonumber\\
    &\cup \left\{\begin{pmatrix}
    -e(\eta) & 0&pg(\eta)\\0&1&0\\
    -g(\eta) & 0 & -e(\eta)
\end{pmatrix}\mod p^n\midd \eta\in\ZZ/p^n\ZZ\right\},\nonumber\\
G_{z,p^n} =& \left\{\mathcal{R}_z(\zeta)\mod p^n\midd \zeta\in\ZZ/p^n\ZZ\right\}\cup\left\{\mathcal{R}_z(\infty)\mathcal{R}_z(\zeta)\mod p^n\midd \zeta\in p(\ZZ/p^n\ZZ)\right\}\\
= & \left\{\begin{pmatrix}
    l(\zeta) & vm(\zeta) & 0\\m(\zeta) & l(\zeta) & 0\\0&0&1
\end{pmatrix}\mod p^n\midd \zeta\in\ZZ/p^n\ZZ\right\}\nonumber\\
    &\cup\left\{\begin{pmatrix}
    -l(\zeta) & -vm(\zeta) & 0\\-m(\zeta) & -l(\zeta) & 0\\0&0&1
\end{pmatrix}\mod p^n\midd \zeta\in p(\ZZ/p^n\ZZ)\right\},\nonumber
\end{align}
where
\begin{align}
&a(\xi)\coloneqq\frac{1+\frac{p}{v}\xi^2}{1-\frac{p}{v}\xi^2},\quad c(\xi)\coloneqq\frac{2\xi}{1-\frac{p}{v}\xi^2},\\
& e(\eta)\coloneqq \frac{1-p\eta^2}{1+p\eta^2},\quad g(\eta)\coloneqq\frac{2\eta}{1+p\eta^2},\\ & l(\zeta)\coloneqq \frac{1+v\zeta^2}{1-v\zeta^2},\quad m(\zeta)\coloneqq \frac{2\zeta}{1-v\zeta^2}.
\end{align}
We will refer to the first set of each of these three unions as the \textquotedblleft first branch\textquotedblright\ and to the second one as the \textquotedblleft second branch\textquotedblright. This has set the bases for the following proof of Theorem~\ref{theor:cardanodoppiamod}.

\begin{proof}
Theorem~\ref{theor:cardanorep} states that every matrix $L$ in $\so(3)_p$ has exactly two distinct Cardano decompositions of the kind $\cR_x\cR_y\cR_z$. Since $\pi_n$ is a group homomorphism, also every $\pi_n(L)\in G_{p^n}$ can be written in at least two distinct compositions of the kind $\cR_x\cR_y\cR_z\mod p^n$, for every $n\in\NN$. One could ask if this Cardano representation for $G_{p^n}$ is exactly twofold, or if there are more than two distinct triples $(\cR_x\mod p^n,\cR_y\mod p^n,\cR_z\mod p^n)\in G_{x,p^n}\times G_{y,p^n}\times G_{z,p^n}$ whose products give the same $L\mod p^n$. Theorem~\ref{theor:cardanodoppiamod} states that the answer is no, and this is what we are going to prove. We shall analyse all the possibilities for the branches of the three rotations involved in the Cardano representation. A triple $ijk$ with $i,j,k\in\{1,2\}$ will denote a Cardano composition $\cR_x\cR_y\cR_z$, where $\cR_x$, $\cR_y$, $\cR_z\mod p^n$ are taken from the $i$-th, $j$-th, and $k$-th branch, respectively. There are $2^3=8$ possible Cardano compositions $\cR_x\cR_y\cR_z$ with respect to the branches of each of the three involved rotations. Thus, there are $36$ possibilities for equating two Cardano compositions modulo $p^n$. This same procedure was already developed for $n=1$ in \cite{our2nd}, and here we generalise the proof for all $n\in\NN$. We shall use the fact that $a(\xi),e(\eta)\equiv1\not\equiv0\mod p$ are invertible in $\ZZ/p^n\ZZ$, for all $\xi,\eta\in\ZZ/p^n\ZZ$.

We start with $111\equiv222$, i.e., 
\beq 
\cR_x(\xi)\cR_y(\eta)\cR_z(\zeta)\equiv \cR_x(\infty)\cR_x(\xi') \cR_y(\infty)\cR_y(\eta')\cR_z(\infty)\cR_z(\zeta')\mod p^n,
\eeq 
for some $\xi,\xi',\eta,\eta',\zeta,\zeta'\in\ZZ/p^n\ZZ$, $\zeta'\equiv0\mod p$. This is
\beq 
\begin{aligned}
&\begin{pmatrix}
    e(\eta)l(\zeta)&e(\eta)vm(\zeta)&-pg(\eta)\\
    a(\xi)m(\zeta) +\frac{p}{v}c(\xi)g(\eta)l(\zeta) &a(\xi)l(\zeta) +\frac{p}{v}c(\xi)g(\eta)vm(\zeta)&\frac{p}{v}c(\xi)e(\eta)\\
    c(\xi)m(\zeta)+ a(\xi)g(\eta)l(\zeta)& c(\xi)l(\zeta)+ a(\xi)g(\eta)vm(\zeta)&a(\xi)e(\eta)
  \end{pmatrix}\equiv\\
  &\equiv \begin{pmatrix}
    e(\eta')l(\zeta')&e(\eta')vm(\zeta')&pg(\eta')\\
    a(\xi')m(\zeta')-\frac{p}{v}c(\xi')g(\eta')l(\zeta') &a(\xi')l(\zeta')-\frac{p}{v}c(\xi')g(\eta')vm(\zeta')&\frac{p}{v}c(\xi')e(\eta')\\
    c(\xi')m(\zeta')-a(\xi')g(\eta')l(\zeta')& c(\xi')l(\zeta')-a(\xi')g(\eta')vm(\zeta')&a(\xi')e(\eta')
  \end{pmatrix}\mod p^n,
\end{aligned}
\eeq 
and we get the following system of modular congruences:
\begin{align}
    &e(\eta)l(\zeta)\equiv e(\eta')l(\zeta')\mod p^n;\label{eq:entry11}\\
    &e(\eta)vm(\zeta)\equiv e(\eta')vm(\zeta')\mod p^n;\label{eq:entry12}\\
    &-pg(\eta)\equiv pg(\eta')\mod p^n;\label{eq:entry13}\\
    &a(\xi)m(\zeta) +\frac{p}{v}c(\xi)g(\eta)l(\zeta)\equiv a(\xi')m(\zeta')-\frac{p}{v}c(\xi')g(\eta')l(\zeta')\mod p^n;\label{eq:entry21}\\
    & a(\xi)l(\zeta) +\frac{p}{v}c(\xi)g(\eta)vm(\zeta)\equiv a(\xi')l(\zeta')-\frac{p}{v}c(\xi')g(\eta')vm(\zeta')\mod p^n;\label{eq:entry22}\\
    &\frac{p}{v}c(\xi)e(\eta)\equiv\frac{p}{v}c(\xi')e(\eta')\mod p^n;\label{eq:entry23}\\
    &c(\xi)m(\zeta)+ a(\xi)g(\eta)l(\zeta)\equiv c(\xi')m(\zeta')-a(\xi')g(\eta')l(\zeta')\mod p^n;\label{eq:entry31}\\
    & c(\xi)l(\zeta)+ a(\xi)g(\eta)vm(\zeta)\equiv c(\xi')l(\zeta')-a(\xi')g(\eta')vm(\zeta')\mod p^n;\label{eq:entry32}\\
    & a(\xi)e(\eta)\equiv a(\xi')e(\eta')\mod p^n.\label{eq:entry33}
\end{align}
Eq.~\eqref{eq:entry13} is equivalent to $p(\eta+\eta')(1+p\eta\eta')\equiv0\mod p^n$. For $n>1$, it provides $\eta'\equiv-\eta\mod p^{n-1}$, and it follows that $p{\eta'}^2\equiv p\eta^2\mod p^n$ and $e(\eta')\equiv e(\eta)\mod p^n$. When $n=1$, Eq.~\eqref{eq:entry13} is trivial, and $e(\eta),e(\eta')\equiv1\mod p$. Then, Eqs.~\eqref{eq:entry11},~\eqref{eq:entry12} give $l(\zeta')\equiv l(\zeta),m(\zeta')\equiv m(\zeta)\mod p^n$, i.e., $\cR_z(\zeta')\equiv \cR_z(\zeta)\mod p^n$, with $\zeta'\equiv\zeta\in p(\ZZ/p^n\ZZ)$. This implies that $l(\zeta)\equiv1\not\equiv0\mod p$ is invertible in $\ZZ/p^n\ZZ$. Eqs.~\eqref{eq:entry23},~\eqref{eq:entry33} give $a(\xi')\equiv a(\xi)$, $pc(\xi')\equiv pc(\xi)\mod p^n$. Now the remaining equations become as follows:
\begin{align}
    &pc(\xi)l(\zeta)\big(g(\eta)+g(\eta')\big)\equiv0\mod p^n;\\
    &pc(\xi)m(\zeta)\big(g(\eta)+g(\eta')\big)\equiv0\mod p^n;\\
    & m(\zeta)\big(c(\xi)-c(\xi')\big)\equiv -a(\xi)l(\zeta)\big(g(\eta)+g(\eta')\big)\mod p^n;\label{eq:ultima1}\\
    & l(\zeta)\big(c(\xi)-c(\xi')\big)\equiv -a(\xi)vm(\zeta)\big(g(\eta)+g(\eta')\big)\mod p^n.\label{eq:ultima2}
\end{align}
The first two equations are satisfied. The last two equations give $g(\eta)+g(\eta')\equiv -\frac{m(\zeta)}{a(\xi)l(\zeta)}\big(c(\xi)-c(\xi')\big)$ and $c(\xi)-c(\xi')\equiv -\frac{a(\xi)vm(\zeta)}{l(\zeta)}\big(g(\eta)+g(\eta')\big)\mod p^n$. Plugging the former into the latter, $\big(c(\xi)-c(\xi')\big)\big(l(\zeta)^2-vm(\zeta)^2\big)\equiv0\mod p^n$, where $l(\zeta)^2-vm(\zeta)^2\equiv\det\cR_z(\zeta)\equiv1\mod p^n$, therefore $c(\xi)\equiv c(\xi')\mod p^n$. Hence $g(\eta')\equiv-g(\eta)\mod p^n$ and $\eta'\equiv-\eta\mod p^n$. Summing up, we have found the unique solution 
\beq \label{eq:111=222}
\cR_x(\xi')\equiv \cR_x(\xi),\quad \cR_y(\eta')=\cR_y(-\eta),\quad \cR_z(\zeta')\equiv \cR_z(\zeta)\mod p^n.
\eeq 
Very similar calculations hold for $112\equiv221$, $121\equiv212$, and $211\equiv122$.

We move on with $111\equiv221$, i.e.,
\beq 
\cR_x(\xi)\cR_y(\eta)\cR_z(\zeta)\equiv \cR_x(\infty)\cR_x(\xi') \cR_y(\infty)\cR_y(\eta')\cR_z(\zeta')\mod p^n,
\eeq 
for some $\xi,\xi',\eta,\eta',\zeta,\zeta'\in\ZZ/p^n\ZZ$. This yields
\begin{align}
    & e(\eta)l(\zeta)\equiv -e(\eta')l(\zeta')\mod p^n;\label{eq:entry11bis}\\
    & e(\eta)vm(\zeta)\equiv-e(\eta')vm(\zeta')\mod p^n;\label{eq:entry12bis}\\
    & -pg(\eta)\equiv pg(\eta')\mod p^n;\label{eq:entry13bis}\\
    & a(\xi)m(\zeta) +\frac{p}{v}c(\xi)g(\eta)l(\zeta) \equiv -a(\xi')m(\zeta')+\frac{p}{v}c(\xi')g(\eta')l(\zeta') \mod p^n;\label{eq:entry21bis}\\
    & a(\xi)l(\zeta) +\frac{p}{v}c(\xi)g(\eta)vm(\zeta)\equiv -a(\xi')l(\zeta')+\frac{p}{v}c(\xi')g(\eta')vm(\zeta')\mod p^n;\label{eq:entry22bis}\\
    &\frac{p}{v}c(\xi)e(\eta)\equiv \frac{p}{v}c(\xi')e(\eta')\mod p^n;\label{eq:entry23bis}\\
    &c(\xi)m(\zeta)+ a(\xi)g(\eta)l(\zeta)\equiv -c(\xi')m(\zeta')+a(\xi')g(\eta')l(\zeta')\mod p^n;\label{eq:entry31bis}\\
    & c(\xi)l(\zeta)+ a(\xi)g(\eta)vm(\zeta)\equiv -c(\xi')l(\zeta')+ a(\xi')g(\eta')vm(\zeta')\mod p^n;\label{eq:entry32bis}\\
    &a(\xi)e(\eta)\equiv a(\xi')e(\eta')\mod p^n.\label{eq:entry33bis}
\end{align}
Again $e(\eta')\equiv e(\eta)\mod p^n$, by Eq.~\eqref{eq:entry13bis} for $n>1$, and just congruent to $1$ when $n=1$. So Eqs.~\eqref{eq:entry11bis},~\eqref{eq:entry12bis} give $l(\zeta')\equiv-l(\zeta),m(\zeta')\equiv-m(\zeta)\mod p^n$, i.e., $\cR_z(\zeta')\equiv\cR_z(\infty)\cR_z(\zeta)\mod p^n$. The latter is equivalent to $(\zeta+\zeta')(1-v\zeta\zeta')\equiv0\mod p^n$. The former is equivalent to $1\equiv (v\zeta\zeta')^2\mod p^n$, which is impossible if $\zeta\equiv0\mod p$ or $\zeta'\equiv0\mod p$. Hence, we assume $\zeta,\zeta'\not\equiv0\mod p$ while solving
\beq \label{eq:system2inter}
\left\{\begin{aligned}
    & (1+v\zeta\zeta')(1-v\zeta\zeta')\equiv0\mod p^n;\\
    & (\zeta+\zeta')(1-v\zeta\zeta')\equiv0\mod p^n.
\end{aligned}\right.
\eeq 
If $1-v\zeta\zeta'\equiv p^ju_j\mod p^n$ for some $u_j\not\equiv0\mod p$ and $0\leq j<n$, then system~\eqref{eq:system2inter} leads to $\zeta^2\equiv v^{-1}\mod p^{n-j}$, which is impossible since $v$ is not a square. Therefore, for the system to possibly have solutions, it must be $1-v\zeta\zeta'\equiv0\mod p^n$, providing $\zeta'\equiv\frac{1}{v\zeta}\mod p^n$. We have obtained $\cR_z(\zeta')\equiv \cR_z\left(\frac{1}{v\zeta}\right)\equiv \cR_z(\infty)\cR_z(\zeta)\mod p^n$; in other words, $\cR_z(\zeta')\equiv \cR_z(\infty)\cR_z(\zeta)\mod p^n$ is possible if (and only if) both $\cR_z(\zeta)$ and $\cR_z(\zeta')$ are in the first branch, under the transformation $\zeta\mapsto\frac{1}{v\zeta}$. We are left again with Eqs.~\eqref{eq:ultima1},~\eqref{eq:ultima2}. Since $\zeta\not\equiv0\mod p$, $m(\zeta)$ is invertible modulo $p^n$, so these equations provide $c(\xi)-c(\xi')\equiv -\frac{a(\xi)l(\zeta)}{m(\zeta)}\big(g(\eta)+g(\eta')\big)$ and $g(\eta)+g(\eta')\equiv -\frac{l(\zeta)}{a(\xi)vm(\zeta)}\big(c(\xi)-c(\xi')\big)\mod p^n$. As before, they give $c(\xi')\equiv c(\xi)$ and $g(\eta')\equiv-g(\eta)\mod p^n$, and globally we have found
\beq \label{eq:111=221}
\begin{aligned}
    \cR_x(\xi')\equiv \cR_x(\xi),\quad \cR_y(\eta')=\cR_y(-\eta),\quad \cR_z(\zeta')\equiv \cR_z(\infty)\cR_z(\zeta)\mod p^n, \\
    \zeta,\zeta'\not\equiv0\mod p.
\end{aligned}
\eeq 
Very similar calculations hold for $121\equiv211$.
\begin{remark}\label{rem:6cardequality}
So far, we have shown six different modular congruences of Cardano representations with respect to certain triples of branches, for every $n\in\NN$. Hence, for every $n\in\NN$, and every prime $p>2$, every $\pi_n(L)\in G_{p^n}$ has at least two distinct Cardano representations of the kind $\cR_x\cR_y\cR_z\mod p^n$, that is (depending on $\pi_n(L)$) one of the following six:
\beq\notag \hspace{-0.5cm}\begin{array}{lll}
    111\leftrightarrow 221 &\cR_x(\xi)\cR_y(\eta)\cR_z(\zeta)\equiv \cR_x(\infty)\cR_x(\xi)\cR_y(\infty)\cR_y(-\eta)\cR_z(\infty)\cR_z(\zeta)\mod p^n,  & \zeta\not\equiv0 \mod p,\\ 
    111\leftrightarrow222\quad  & \cR_x(\xi)\cR_y(\eta)\cR_z(\zeta)\equiv \cR_x(\infty)\cR_x(\xi)\cR_y(\infty)\cR_y(-\eta)\cR_z(\infty)\cR_z(\zeta)\mod p^n,\quad & \zeta\equiv0\mod p;\\
    112 \leftrightarrow 221 &\cR_x(\xi)\cR_y(\eta)\cR_z(\infty)\cR_z(\zeta)\equiv \cR_x(\infty)\cR_x(\xi)\cR_y(\infty)\cR_y(-\eta)\cR_z(\zeta)\mod p^n,  & \zeta\equiv0\mod p;\\
    121 \leftrightarrow 211 &\cR_x(\xi)\cR_y(\infty)\cR_y(\eta)\cR_z(\zeta)\equiv \cR_x(\infty)\cR_x(\xi)\cR_y(-\eta)\cR_z(\infty)\cR_z(\zeta)\mod p^n,  & \zeta\not\equiv0\mod p;\\
    121 \leftrightarrow 212 &\cR_x(\xi)\cR_y(\infty)\cR_y(\eta)\cR_z(\zeta)\equiv \cR_x(\infty)\cR_x(\xi)\cR_y(-\eta)\cR_z(\infty)\cR_z(\zeta)\mod p^n,  & \zeta\equiv0\mod p;\\
    122 \leftrightarrow211 & 
    \cR_x(\xi)\cR_y(\infty)\cR_y(\eta)\cR_z(\infty)\cR_z(\zeta) \equiv \cR_x(\infty)\cR_x(\xi)\cR_y(-\eta)\cR_z(\zeta)\mod p^n,  & \zeta\equiv0\mod p.
\end{array}\eeq
Interpreting this table, given a Cardano representation of $L\mod p^n$ realised by the triple 
$(\cR_x\mod p^n,\cR_y\mod p^n,\cR_z\mod p^n)\in G_{x,p^n}\times G_{y,p^n}\times G_{z,p^n}$ of parameters $\xi,\eta,\zeta$, respectively, then $L\mod p^n$ admits at least another distinct Cardano representation with parameters $\xi',\eta',\zeta'$, respectively: That obtained by 
\begin{itemize}
    \item changing the branch of the $x$-rotation, with parameter $\xi'\equiv\xi\mod p^n$, i.e., 
    \beq 
    \cR_x(\xi)\mapsto \cR_x(\infty)\cR_x(\xi)\mod p^n;
    \eeq 
     \item changing the branch of the $y$-rotation, with parameter $\eta'\equiv-\eta\mod p^n$, i.e., 
     \beq 
     \cR_y(\eta)\mapsto\cR_y(\infty)\cR_y(-\eta)\mod p^n;
     \eeq 
     \item changing the branch of the $z$-rotation if $\zeta\equiv0\mod p$, with parameter $\zeta'\equiv\zeta\mod p^n$; or fixing the branch of the $z$-rotation if $\zeta\not\equiv0\mod p$, with parameter $\zeta'\equiv\frac{1}{v\zeta}\mod p^n$, giving in any case 
     \beq 
     \cR_z(\zeta)\mapsto \cR_z(\infty)\cR_z(\zeta)\mod p^n.
     \eeq 
\end{itemize}
This is exactly what happens for $\so(3)_p$, as in Eq.~\eqref{eq:cardanorepDOPPIA}.
\end{remark}
Now we are going to show that those in Remark~\ref{rem:6cardequality} are the only congruences of Cardano compositions $\cR_x\cR_y\cR_z\mod p^n$ with two different triples $(\cR_x\mod p^n,\cR_y\mod p^n,\cR_z\mod p^n)$. We start by showing that $\pi_n(L)\in G_{p^n}$ cannot admit two distinct Cardano representations with respect to the same three branches. When $111\equiv111$, we get the following congruences: 
\begin{align}
    &e(\eta)l(\zeta)\equiv e(\eta')l(\zeta')\mod p^n;\label{eq:entry11tris}\\
    &e(\eta)vm(\zeta)\equiv e(\eta')vm(\zeta')\mod p^n;\label{eq:entry12tris}\\
    &-pg(\eta)\equiv -pg(\eta')\mod p^n;\label{eq:entry13tris}\\
    &a(\xi)m(\zeta) +\frac{p}{v}c(\xi)g(\eta)l(\zeta)\equiv a(\xi')m(\zeta')+\frac{p}{v}c(\xi')g(\eta')l(\zeta')\mod p^n;\label{eq:entry21tris}\\
    & a(\xi)l(\zeta) +\frac{p}{v}c(\xi)g(\eta)vm(\zeta)\equiv a(\xi')l(\zeta')+\frac{p}{v}c(\xi')g(\eta')vm(\zeta')\mod p^n;\label{eq:entry22tris}\\
    &\frac{p}{v}c(\xi)e(\eta)\equiv\frac{p}{v}c(\xi')e(\eta')\mod p^n;\label{eq:entry23tris}\\
    &c(\xi)m(\zeta)+ a(\xi)g(\eta)l(\zeta)\equiv c(\xi')m(\zeta')+a(\xi')g(\eta')l(\zeta')\mod p^n;\label{eq:entry31tris}\\
    & c(\xi)l(\zeta)+ a(\xi)g(\eta)vm(\zeta)\equiv c(\xi')l(\zeta')+a(\xi')g(\eta')vm(\zeta')\mod p^n;\label{eq:entry32tris}\\
    & a(\xi)e(\eta)\equiv a(\xi')e(\eta')\mod p^n.\label{eq:entry33tris}
\end{align}
Now Eq.~\eqref{eq:entry13tris} for $n>1$ gives $\eta'\equiv\eta\mod p^{n-1}$, and again $e(\eta')\equiv e(\eta)\mod p^n$, while this is trivial for $n=1$. Eqs.~\eqref{eq:entry11tris},~\eqref{eq:entry12tris} provide $\cR_z(\zeta')\equiv\cR_z(\zeta)\mod p^n$; Eqs.~\eqref{eq:entry23tris},~\eqref{eq:entry33tris} yield $a(\xi')\equiv a(\xi)$ and $pc(\xi')\equiv pc(\xi)\mod p^n$, trivial for $n=1$. Then, Eqs.~\eqref{eq:entry21tris},~\eqref{eq:entry22tris} are satisfied, while the remaining Eqs.~\eqref{eq:entry31tris},~\eqref{eq:entry32tris} become
\begin{align}
    & m(\zeta)\big(c(\xi)-c(\xi')\big)\equiv -a(\xi)l(\zeta)\big(g(\eta)-g(\eta')\big)\mod p^n;\label{eq:ultima1tris}\\
    & l(\zeta)\big(c(\xi)-c(\xi')\big)\equiv -a(\xi)vm(\zeta)\big(g(\eta)-g(\eta')\big)\mod p^n.\label{eq:ultima2tris}
\end{align}
If $\zeta\equiv0\mod p$ we proceed as in the end of case $111\equiv222$, while if $\zeta\not\equiv0\mod p$ as in $111\equiv 221$; anyway we get $c(\xi')\equiv c(\xi)$ and $g(\eta')\equiv g(\eta)\mod p^n$. In conclusion, $\cR_x(\xi)\cR_y(\eta)\cR_z(\zeta)\equiv \cR_x(\xi')\cR_y(\eta')\cR_z(\zeta')\mod p^n$ if and only if
\beq 
\cR_x(\xi')\equiv \cR(\xi),\quad \cR_y(\eta')\equiv \cR_y(\eta),\quad \cR_z(\zeta')\equiv \cR_z(\zeta)\mod p^n.
\eeq 
Very similar calculations hold for the congruences of Cardano compositions with respect to the same branches, $121\equiv 121$, $211\equiv 211$, $221\equiv 221$, $112\equiv 112$, $122\equiv 122$, $212\equiv212$ and $222\equiv222$. 

There are $36 - 6 - 8 = 22$ possibilities left of equating two triples $ijk$. In any case, as seen above, the modular congruence of the $(1,3)$-entries gives $e(\eta')\equiv e(\eta)\mod p^n$, congruent to $1\mod p$. For $111\equiv 112$, the modular congruence of the first matrix entries gives $l(\zeta')\equiv-l(\zeta)\mod p^n$, which is impossible [refer to Eq.~\eqref{eq:disjointsign}] since in the second branch $\zeta'\equiv0\mod p$. The same happens for $111\equiv212$, $112\equiv211$, $112\equiv222$, $121\equiv122$, $121\equiv222$, $122\equiv 212$, $211\equiv212$, $212\equiv222$, $221\equiv122$, $221\equiv222$. On the other hand, for $111\equiv 121$, the modular congruence of the last matrix entries gives $a(\xi')\equiv -a(\xi)\mod p^n$, which is again impossible. The same happens for $111\equiv122$, $111\equiv211$, $112\equiv121$, $112\equiv122$, $112\equiv212$, $121\equiv221$, $122\equiv222$, $211\equiv221$, $211\equiv222$, $212\equiv221$.
\end{proof}

Now, the order of the group $G_{p^n}$, $n\in\NN$, is straightforward to calculate as in Proposition~\ref{prop:ordinigrup3dmodpk}, by knowing the duplicity of the Cardano decomposition of each matrix in $G_{p^n}$ and the orders of the groups $G_{x,p^n},G_{y,p^n},G_{z,p^n}$.

\section{Lifting \`a la Hensel}\label{sec:appliftalaHens}
In remark~\ref{rem:henselift} we introduce the groups $\widetilde{G}_{\kappa,p^n},\widetilde{G}_{p^n}$ of solutions modulo $p^n$ of the defining conditions of $\so(d)_{p(,\kappa)}$, and we ask whether the inclusion~\eqref{eq:unainclusionesem} is an equality. 

Let us start answering this question in the bidimensional case. Let $L=\left(\ell_{ij}\right)_{i,j=1,2}\in \so(2)_{p,\kappa}$, and let $A_{\kappa}=\diag(a_1,a_2)\in\mathsf{M}_{2\times2}(\ZZ_p)$ be the matrix representation with respect to the canonical basis of the quadratic forms defining $\so(2)_{p,\kappa}$, as in Eq.~\eqref{cev1}. The defining conditions for $\so(2)_{p,\kappa}$ are explicitly
\beq \label{eq:sys2x2int}
\left\{\begin{aligned}
    &a_1\ell_{11}^2+a_2\ell_{21}^2=a_1,\\
    & a_1\ell_{12}^2+a_2\ell_{22}^2=a_2,\\
    &a_1\ell_{11}\ell_{12}+a_2\ell_{21}\ell_{22}=0,\\
    &\ell_{11}\ell_{22}-\ell_{12}\ell_{21}=1,
\end{aligned}\right.
\eeq 
where $\ell_{ij}\in\ZZ_p$ allows to project them modulo $p^n$, providing the defining conditions of $\widetilde{G}_{\kappa,p^n}$. When $n=1$, the solutions of system~\eqref{eq:sys2x2int} modulo $p$ give the following groups:
\begin{align}
    &\widetilde{G}_{-v,p}= 
    \left\{\begin{pmatrix}a&vb\\b&a\end{pmatrix}\bmod p\midd a,b\in\ZZ/p\ZZ,\,a^2-vb^2\equiv1\bmod p\right\};\label{eq:tilde-v}\\
    &\widetilde{G}_{p,p}=\widetilde{G}_{\frac{p}{u},p}=\left\{\begin{pmatrix}s&0\\c&s\end{pmatrix}\bmod p\midd c\in\ZZ/p\ZZ,\ s\in\{\pm1\}\right\}.\label{eq:tildeup}
\end{align}
The solutions forming $\widetilde{G}_{-v,p}$ are derived in Section~IV.A of~\cite{our2nd}, while in Appendix~A of~\cite{our2nd} it is proved that $\widetilde{G}_{-v,p}\simeq \{(a,b)\in\ZZ/p\ZZ\midd a^2-vb^2\equiv1\mod p\}\simeq \ZZ/(p+1)\ZZ$; the other groups are easily found. Comparing them with the groups $G_{\kappa,p}$, parametrised as in Eqs.~\eqref{eq:semplicep2k},~\eqref{eq:semplicepvk}, we see that
\beq 
\widetilde{G}_{p,p}=G_{p,p}=\widetilde{G}_{\frac{p}{u},p}=G_{\frac{p}{u},p},\qquad \quad\widetilde{G}_{-v,p}=G_{-v,p}.\label{eq:cardnet}
\eeq 
To understand if these equalities are kept over $\ZZ/p^n\ZZ$ for every $n\in\NN$, we need to understand if each solution modulo $p^n$ to the system~\eqref{eq:sys2x2int} lifts to a solution of the same system modulo $p^{n+1}$, until converging to a $p$-adic integer solution of the same system in $\ZZ_p$. The multivariable version of Hensel's lemma --- Theorem~3.3 of~\cite{conradMULvar} (see also~\cite{serre,fisher}) --- cannot answer our question starting from $n=1$, because the $4\times 4$ Jacobian matrix associated with the system~\eqref{eq:sys2x2int} has zero determinant. Therefore, we prove the following result by brute force.
\begin{proposition}\label{prop:HenselSO(2)K}
If $L=(\ell_{ij})_{i,j=1,2}\in\mathsf{M}_{2\times2}(\ZZ_p)$ is a solution to~\eqref{eq:sys2x2int} modulo $p^n$, there exists $Z=(z_{ij})_{i,j=1,2}\in\mathsf{M}_{2\times2}(\ZZ_p)$ such that $L+p^nZ$ is solution of the same system modulo $p^{n+1}$, for every $n\in\NN$. Any solution $L$ modulo $p^n$ admits exactly $p$ distinct lifted solutions $L+p^n Z$ modulo $p^{n+1}$.
\end{proposition}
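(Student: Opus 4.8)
The plan is to prove this by a direct (multivariable Hensel) linearisation, reducing the lifting problem to counting the solutions of an inhomogeneous linear system over the field $\ZZ/p\ZZ$. Write $g_1,g_2,g_3$ for the differences (left minus right) in the first three equations of~\eqref{eq:sys2x2int} and $g_4\coloneqq \ell_{11}\ell_{22}-\ell_{12}\ell_{21}-1$ for the fourth. The hypothesis that $L$ solves the system modulo $p^n$ means $g_i(L)\equiv 0\bmod p^n$, so I may write $g_i(L)=p^n r_i$ with $r_i\in\ZZ_p$ and set $\bar r_i\coloneqq r_i\bmod p$. Substituting $L+p^nZ$ and expanding, every quadratic contribution in $Z$ carries a factor $p^{2n}$ and hence vanishes modulo $p^{n+1}$ since $2n\geq n+1$; the chain rule then gives $g_i(L+p^nZ)\equiv p^n\big(r_i+\sum_{k,l}(\partial_{\ell_{kl}}g_i)(L)\,z_{kl}\big)\bmod p^{n+1}$. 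Thus $L+p^nZ$ solves~\eqref{eq:sys2x2int} modulo $p^{n+1}$ if and only if $\mathbf z=(z_{11},z_{12},z_{21},z_{22})^\top$ satisfies $J\,\mathbf z\equiv-\mathbf r\pmod p$, where $J=J(\bar L)$ is the $4\times4$ Jacobian of $(g_1,g_2,g_3,g_4)$ reduced modulo $p$ and evaluated at $\bar L\coloneqq\pi_1(L)$. Since distinct $\mathbf z\in(\ZZ/p\ZZ)^4$ give distinct matrices $L+p^nZ\bmod p^{n+1}$, it suffices to show that this linear system is consistent and has exactly $p$ solutions, i.e. that $\operatorname{rank}J=3$ over $\ZZ/p\ZZ$ and that $-\mathbf r\in\operatorname{im}J$.

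The key structural input is a single polynomial syzygy among the $g_i$. Writing $L^\top A_\kappa L=A_\kappa+G$ with $G=\begin{pmatrix}g_1&g_3\\ g_3&g_2\end{pmatrix}$ and taking determinants, the relation $\det(L^\top A_\kappa L)=(\det L)^2\det A_\kappa$ together with the $2\times2$ expansion $\det(A_\kappa+G)=\det A_\kappa+(a_2g_1+a_1g_2)+(g_1g_2-g_3^2)$ yields the identity $(g_4^2+2g_4)\det A_\kappa=a_2g_1+a_1g_2+g_1g_2-g_3^2$, using $(\det L)^2-1=g_4^2+2g_4$. I would exploit this identity twice. First, differentiating it at any common zero of the $g_i$ (where all $g_i$, hence all products of $g_i$, drop out) gives the row relation $a_2\,\nabla g_1+a_1\,\nabla g_2=2\det A_\kappa\,\nabla g_4$, so the rows of $J$ are dependent and $\operatorname{rank}J\leq 3$. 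Second, evaluating the identity at $L$ modulo $p^{n+1}$ and dividing by $p^n$ gives the compatibility condition $a_2\bar r_1+a_1\bar r_2\equiv 2\det A_\kappa\,\bar r_4\pmod p$; this is exactly the statement that $-\mathbf r$ is orthogonal to the left-kernel vector $\mathbf c=(a_2,a_1,0,-2\det A_\kappa)^\top$ produced by the first application, so consistency of $J\mathbf z\equiv-\mathbf r$ is automatic. It is this double role of the syzygy --- simultaneously forcing the rank deficiency of $J$ and the compatibility of the inhomogeneous term --- that makes the count work out uniformly, \emph{including} the degenerate forms where $p\mid\det A_\kappa$.

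It then remains to show $\operatorname{rank}J=3$ exactly (no further degeneracy), which I expect to be the main obstacle and the genuinely case-dependent part. Here I would invoke the explicit descriptions~\eqref{eq:tilde-v},~\eqref{eq:tildeup} of the mod-$p$ reductions $\bar L\in\widetilde{G}_{\kappa,p}$. For $\kappa=-v$, writing $\bar L=\begin{pmatrix}a&vb\\ b&a\end{pmatrix}$ with $a^2-vb^2\equiv1$, a short elimination in $\mathbf c^\top J=0$ uses the relation $a^2-vb^2=1$ to force the left kernel to be one-dimensional (spanned precisely by the syzygy vector), whence $\operatorname{rank}J=3$. For $\kappa\in\{p,up\}$ the situation degenerates: since $a_2\equiv0$ and the solutions~\eqref{eq:tildeup} have $\ell_{12}\equiv0\bmod p$, one has $\nabla g_2\equiv\mathbf 0$ modulo $p$; nonetheless the remaining rows $\nabla g_1,\nabla g_3,\nabla g_4$ stay independent (the $3\times3$ minor on the columns $z_{11},z_{12},z_{22}$ is a unit, as $\ell_{11}\equiv\ell_{22}$ is a unit), so again $\operatorname{rank}J=3$, and here the syzygy's compatibility condition collapses to $\bar r_2\equiv0\bmod p$, exactly matching the vanishing $g_2$-row. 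With $\operatorname{rank}J=3$ and consistency established, a rank-$3$ inhomogeneous linear system in four unknowns over $\ZZ/p\ZZ$ has exactly $p^{4-3}=p$ solutions $\mathbf z$, each giving a distinct lift $L+p^nZ$ modulo $p^{n+1}$; solving over $\ZZ/p\ZZ$ and choosing any $\ZZ_p$-representatives for the $z_{kl}$ produces the required $Z\in\mathsf{M}_{2\times2}(\ZZ_p)$, proving both assertions.
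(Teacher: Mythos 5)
Your proposal is correct, and while the linearisation step coincides with the paper's (your system $J\mathbf{z}\equiv-\mathbf{r}\pmod p$ is exactly the paper's system~\eqref{sys2x2modnz}, with your $\bar r_i$ playing the role of the paper's $\lambda_i$'s), the decisive step is genuinely different. The paper solves that linear system by hand, obtaining the explicit lifting formulas~\eqref{pv1lift},~\eqref{pv2lift},~\eqref{puplift} with one free parameter together with the compatibility constraints~\eqref{eq:condki1},~\eqref{eq:condki2}, and then proves these constraints are always satisfied only \emph{indirectly}, by induction on cardinalities: a failed lift would force $\lvert\widetilde{G}_{\kappa,p^{n+1}}\rvert<p\,\lvert\widetilde{G}_{\kappa,p^n}\rvert=\lvert G_{\kappa,p^{n+1}}\rvert$, contradicting $G_{\kappa,p^{n+1}}\subseteq\widetilde{G}_{\kappa,p^{n+1}}$ --- an argument that imports the base case~\eqref{eq:cardnet} and the group orders of Proposition~\ref{prop:cardparam} via~\eqref{eq:liftings2d}. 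Your determinant syzygy $(g_4^2+2g_4)\det A_\kappa=a_2g_1+a_1g_2+g_1g_2-g_3^2$ replaces this global counting by a purely local computation: differentiated at $\bar L$ it produces the left-kernel vector $\mathbf{c}=(a_2,a_1,0,-2\det A_\kappa)^\top$ (nonzero mod $p$ in all three cases, including the degenerate ones with $p\mid a_2$), and evaluated at $L$ and divided by $p^n$ it gives $a_2\bar r_1+a_1\bar r_2\equiv 2\det A_\kappa\,\bar r_4\pmod p$, which for $\kappa=-v$ is precisely~\eqref{eq:condki1} and for $\kappa\in\{p,up\}$ collapses to~\eqref{eq:condki2}; so consistency is proved \emph{directly}, making the proposition self-contained and independent of the order computations. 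The one step you leave as a sketch, $\operatorname{rank}J=3$, does check out: for $\kappa=-v$ the rows $\nabla g_1,\nabla g_3,\nabla g_4$ at $\bar L=\begin{pmatrix}a&vb\\ b&a\end{pmatrix}$ have $3\times3$ minors equal (up to units) to $2a(a^2-vb^2)=2a$ on columns $(z_{11},z_{12},z_{22})$ and $2vb(a^2-vb^2)=2vb$ on columns $(z_{12},z_{21},z_{22})$, and $a^2-vb^2\equiv1\bmod p$ forbids $a\equiv b\equiv 0$, while for $\kappa\in\{p,up\}$ your unit minor $2a_1^2s^3$ is right; both cases use $p>2$ (the unit $2$) and the mod-$p$ parametrisations~\eqref{eq:tilde-v},~\eqref{eq:tildeup}, just as the paper does, and your cancellation of quadratic terms needs only $2n\geq n+1$, i.e.\ $n\geq1$. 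In terms of trade-offs: the paper's route yields the explicit lifts $Z$ (reused in the sequel), whereas yours explains structurally \emph{why} the na\"ive multivariable Hensel hypothesis fails (the syzygy forces $\det J\equiv0$) and why lifting nevertheless succeeds uniformly (the same syzygy forces the obstruction $-\mathbf{r}$ into $\operatorname{im}J$), and it would also let one \emph{derive} $\lvert\widetilde{G}_{\kappa,p^{n+1}}\rvert=p\,\lvert\widetilde{G}_{\kappa,p^n}\rvert$ rather than assume its analogue for $G_{\kappa,p^n}$.
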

\begin{proof}
The fact that $L=(\ell_{ij})_{i,j=1,2}\in\mathsf{M}_{2\times2}(\ZZ_p)$ is a solution to~\eqref{eq:sys2x2int} modulo $p^n$ means
\beq \label{eq:sys2x2modn}
\left\{\begin{aligned}
    &a_1\ell_{11}^2+a_2\ell_{21}^2\equiv a_1+\lambda_1p^n\mod p^{n+1},\\
    & a_1\ell_{12}^2+a_2\ell_{22}^2\equiv a_2+\lambda_2p^n\mod p^{n+1},\\
    &a_1\ell_{11}\ell_{12}+a_2\ell_{21}\ell_{22}\equiv \lambda_3p^n\mod p^{n+1},\\
    &\ell_{11}\ell_{22}-\ell_{12}\ell_{21}\equiv 1+\lambda_dp^n\mod p^{n+1},
\end{aligned}\right.
\eeq 
for some $\lambda_1,\lambda_2,\lambda_3,\lambda_d\in\ZZ_p$ determined by $L$. We plug $L+p^nZ$ in system~\eqref{eq:sys2x2int} modulo $p^{n+1}$:
\beq \label{eq:sys2x2modn+1}
\left\{\begin{aligned}
    &a_1(\ell_{11}+p^nz_{11})^2+a_2(\ell_{21}+p^nz_{21})^2\equiv a_1\mod p^{n+1},\\
    & a_1(\ell_{12}+p^nz_{12})^2+a_2(\ell_{22}+p^nz_{22})^2\equiv a_2\mod p^{n+1},\\
    &a_1(\ell_{11}+p^nz_{11})(\ell_{12}+p^nz_{12})+a_2(\ell_{21}+p^nz_{21})(\ell_{22}+p^nz_{22})\equiv 0\mod p^{n+1},\\
    &(\ell_{11}+p^nz_{11})(\ell_{22}+p^nz_{22})-(\ell_{12}+p^nz_{12})(\ell_{21}+p^nz_{21})\equiv 1\mod p^{n+1}.
\end{aligned}\right.
\eeq 
All products between two terms containing a factor $p^n$ cancel, since $p^{2n}\equiv0\mod p^{n+1}$ for $n\in\NN$, and plugging~\eqref{eq:sys2x2modn} into~\eqref{eq:sys2x2modn+1} we obtain
\beq\label{sys2x2modnz}
\left\{\begin{aligned}
    &\lambda_1+2a_1\ell_{11}z_{11}+2a_2\ell_{21}z_{21}\equiv 0\mod p,\\
    &\lambda_2+2a_1\ell_{12}z_{12}+2a_2\ell_{22}z_{22}\equiv 0\mod p,\\
    &\lambda_3+a_1\ell_{11}z_{12}+a_1\ell_{12}z_{11}+a_2\ell_{21}z_{22}+a_2\ell_{22}z_{21}\equiv 0\mod p,\\
    &\lambda_d+\ell_{11}z_{22}+\ell_{22}z_{11}-\ell_{12}z_{21}-\ell_{21}z_{12}\equiv 0\mod p.
\end{aligned}\right.
\eeq 
We now look for solutions $Z$ at given $L$. We start from $\kappa=-v$, where [cf.\ Eqs.~\eqref{eq:tilde-v},~\eqref{cev1}] we have $\ell_{11}\equiv \ell_{22}\equiv a,\ell_{21}\equiv b, \ell_{12}\equiv vb, a^2-vb^2\equiv1, a_1\equiv1,a_2\equiv-v\mod p$, giving
\beq
\left\{\begin{aligned}
    &az_{11}-vbz_{21}\equiv -\frac{\lambda_1}{2}\mod p,\\
    &az_{22}-bz_{12}\equiv \frac{\lambda_2}{2v}\mod p,\\
    &vbz_{11}+az_{12}-vaz_{21}-vbz_{22}\equiv -\lambda_3\mod p,\\
    &az_{11}-bz_{12}-vbz_{21}+az_{22}\equiv -\lambda_d\mod p.
\end{aligned}\right.
\eeq 
If $b\equiv 0\mod p$ then $a^2\equiv1\mod p$ and
\beq\label{pv1lift}
\left\{\begin{aligned}
    &z_{11}\equiv -\frac{a}{2}\lambda_1\mod p,\\
    &z_{22}\equiv \frac{a}{2v}\lambda_2\mod p,\\
    &z_{12}\equiv vz_{21}-a\lambda_3\mod p,
\end{aligned}\right.
\eeq 
while if $b\not\equiv0\mod p$ then
\beq\label{pv2lift}
\left\{\begin{aligned}
    &z_{21}\equiv \frac{a}{vb}z_{11}+\frac{\lambda_1}{2vb}\mod p,\\
    &z_{12}\equiv \frac{a}{b}z_{11}+\frac{a^2}{2b}\lambda_1+\frac{a^2-1}{2vb}\lambda_2-a\lambda_3\mod p,\\
    &z_{22}\equiv z_{11}+\frac{a}{2}\lambda_1+\frac{a}{2v}\lambda_2-b\lambda_3\mod p,
\end{aligned}\right.
\eeq 
where, in both Eqs.~\eqref{pv1lift} and~\eqref{pv2lift}, the $\lambda_i$s are fixed by $L$ as in~\eqref{eq:sys2x2modn}, with the additional condition 
\beq
v\lambda_1-\lambda_2-2v\lambda_d\equiv0\mod p.
\eeq

Now we go back to system~\eqref{sys2x2modnz} for $\kappa\in\{p,\frac{p}{u}\}$, where [cf.\ Eqs.~\eqref{eq:tildeup},~\eqref{cev1}] we have $\ell_{11}\equiv\ell_{22}\equiv s,\ \ell_{12}\equiv0,\ell_{21}\equiv c,a_1\in\{1,u\},\,a_2\equiv0\mod p$, so
\beq\label{puplift}
\left\{\begin{aligned}
    &z_{11}\equiv-s\frac{\lambda_1}{2a_1}\mod p\\
    &z_{12}\equiv -s\frac{\lambda_3}{a_1}\mod p\\
    &z_{22}\equiv s\frac{\lambda_1}{2a_1}-\frac{c}{a_1}\lambda_3-s\lambda_d\mod p
\end{aligned}\right.
\eeq 
where the $\lambda_i$s are fixed by $L$ as in~\eqref{eq:sys2x2modn} with the additional  condition 
\beq
\lambda_2\equiv 0\mod p.
\eeq 
Fixed $L$ solution modulo $p^n$, there are at most $p$ distinct liftings to solutions $L+p^nZ$ modulo $p^{n+1}$, because Eqs.~\eqref{pv1lift},~\eqref{pv2lift},~\eqref{puplift} depend on one free parameter, either $z_{21}\mod p$ or $z_{11}\mod p$. Furthermore $L+p^nZ\equiv L+p^nZ'\mod p^{n+1}$ if and only if $z_{ij}\equiv z_{ij}'\mod p$ for every $i,j=1,2$: Different $Z$s modulo $p$ provide different liftings modulo $p^{n+1}$ of a same solution $L$ modulo $p^n$. In other words, if $L$ lifts to a solution modulo $p^{n+1}$, then it has exactly $p$ distinct liftings $L+p^nZ$ which are solutions modulo $p^{n+1}$.

We need to see whether or not Eq.~\eqref{sys2x2modnz} admits solutions $Z$ at given $L$. Since the $\lambda_i$s are determined by $L$, the answer only depends on the conditions
\begin{align}
v\lambda_1-\lambda_2-2v\lambda_d\equiv0\mod p,\qquad &\textup{for } \kappa=-v,\label{eq:condki1}\\
\lambda_2\equiv 0\mod p,\qquad &\textup{for } \kappa\in\left\{p,\frac{p}{u}\right\}. \label{eq:condki2}
\end{align}
A solution $L$ modulo $p^n$ does not lift to a solution modulo $p^{n+1}$ if Eq.~\eqref{eq:condki1} or~\eqref{eq:condki2} provides a non-trivial constraint on the $\lambda_i$s located by $L$. But now we show that this is never the case, by induction. For $n=1$, if, by contradiction, there existed $L$ solution modulo $p$ which does not lift to $L+pZ$ solution modulo $p^2$, then it would be $|\widetilde{G}_{\kappa,p^2}|< p|\widetilde{G}_{\kappa,p}|=p|G_{\kappa,p}|=|G_{\kappa,p^2}|$, where the first equality is by Eq.~\eqref{eq:cardnet} and the second one by Eq.~\eqref{eq:liftings2d}; this is in contradiction with $G_{\kappa,p^2} \subseteq \widetilde{G}_{\kappa,p^2}$. If we assume that $|G_{\kappa,p^n}|=|\widetilde{G}_{\kappa,p^n}|$ and that some $L\in \widetilde{G}_{\kappa,p^n}$ does not lift to $\widetilde{G}_{\kappa,p^{n+1}}$, then we would have $|\widetilde{G}_{\kappa,p^{n+1}}| < p|\widetilde{G}_{\kappa,p^n}|=p|G_{\kappa,p^n}|= |G_{\kappa,p^{n+1}}|$, which is again a contradiction.
\end{proof}

\bigskip

We now move to the three-dimensional case. Let $L=\left(\ell_{ij}\right)_{i,j=1}^3\in \so(3)_p$, and let $A_+=\diag(a_1,a_2,a_3)=\diag(1,-v,p)\in\mathsf{M}_{3\times3}(\ZZ_p)$ be the matrix representation with respect to the canonical basis of the quadratic form defining $\so(3)_p$ [cf.\ Eq.~\eqref{cev2}]. The defining conditions for $\so(3)_p$ are explicitly
\beq\label{eq:syscond}
\left\{\begin{aligned}
&\sum_{i=1,2,3}a_i\ell_{ij}^2=a_j,\quad \quad \quad j=1,2,3,\\
&\sum_{i=1,2,3}a_i\ell_{ij}\ell_{im}=0,\quad \ \ \  (j,m)\in\{(1,2),\,(1,3),\,(2,3)\},\\
&{\ell_{11}}{\ell_{22}}{\ell_{33}}
+{\ell_{12}}{\ell_{23}}{\ell_{31}}+{\ell_{13}}{\ell_{21}}{\ell_{32}}+\\
&-{\ell_{11}}{\ell_{23}}{\ell_{32}}-{\ell_{12}}{\ell_{21}}{\ell_{33}}-{\ell_{13}}{\ell_{22}}{\ell_{31}}=1,
\end{aligned}\right.
\eeq
where $\ell_{ij}\in\ZZ_p$ allows to project them modulo $p^n$, providing the defining conditions of $\widetilde{G}_{p^n}$. Again, to understand whether each element in $\widetilde{G}_{p^n}$ lift to elements in $\widetilde{G}_{p^{n+1}}$, the multivariable version of Hensel's lemma --- Theorem~3.8 of~\cite{conradMULvar} cannot be exploited, because the $7\times 9$ Jacobian matrix associated with the system~\eqref{eq:syscond} has at most rank $6$, implying that any $7\times 7$ submatrix has zero determinant. In Remark IV.7 of~\cite{our2nd} one deduces that $\widetilde{G}_p=G_p$, of order $2p^2(p+1)$, as an element in $\widetilde{G}_p$ (i.e., a matrix solution of the system modulo $p$) is of the form
\beq \label{eq:paramsoluzsismodp}
\widetilde{L}=\begin{pmatrix}
    a&svb&0\\ b&sa&0\\c&d&s
\end{pmatrix}
\eeq 
for some $a,b,c,d\in\ZZ/p\ZZ$ such that $a^2-vb^2\equiv1\mod p$ and $s\in\{\pm1\}$. Here we give the general answer, in a very similar fashion to the bidimensional case above.

\begin{proposition}\label{prop:HenselSO(3)}
If $L=\left(\ell_{ij}\right)_{i,j=1}^3\in\mathsf{M}_{3\times3}(\ZZ_p)$ is a solution to~\eqref{eq:syscond} modulo $p^n$, there exists $Z=\left(z_{ij}\right)_{i,j=1}^3\in\in\mathsf{M}_{3\times3}(\ZZ_p)$ such that $L+p^nZ$ is solution of the same system modulo $p^{n+1}$, for every $n\in\NN$.\\
Any solution $L$ modulo $p^n$ admits exactly $p^3$ distinct lifted solutions $L+p^nZ$ modulo $p^{n+1}$.
\end{proposition}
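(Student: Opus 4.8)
The plan is to mirror the proof of Proposition~\ref{prop:HenselSO(2)K}, replacing the $2\times 2$ system~\eqref{eq:sys2x2int} by the $3\times 3$ system~\eqref{eq:syscond}, where now $a_1=1$, $a_2=-v$ and $a_3=p$. First I would record what it means for $L=(\ell_{ij})$ to solve~\eqref{eq:syscond} modulo $p^n$: there exist $p$-adic integers $\lambda_1,\lambda_2,\lambda_3$ (from the three norm equations), $\mu_{12},\mu_{13},\mu_{23}$ (from the three orthogonality equations) and $\lambda_d$ (from the determinant), all determined by $L$, such that each left-hand side equals its prescribed value plus the corresponding error times $p^n$, modulo $p^{n+1}$. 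I would then substitute $L+p^nZ$ into~\eqref{eq:syscond} and reduce modulo $p^{n+1}$: exactly as in the bidimensional case, every product of two entries each carrying a factor $p^n$ is killed, since $p^{2n}\equiv 0\bmod p^{n+1}$, and only the linear-in-$Z$ part survives. This turns~\eqref{eq:syscond} into a system of seven linear congruences modulo $p$ in the nine unknowns $z_{ij}$, whose coefficient matrix is, up to the factor $2$ on the quadratic rows, the Jacobian of~\eqref{eq:syscond} at $L$ (hence of rank at most $6$, as noted), with right-hand side built from $\lambda_1,\lambda_2,\lambda_3,\mu_{12},\mu_{13},\mu_{23},\lambda_d$.

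To solve this linear system explicitly I would insert the mod-$p$ normal form~\eqref{eq:paramsoluzsismodp} of $L$, namely $\ell_{11}\equiv a$, $\ell_{22}\equiv sa$, $\ell_{12}\equiv svb$, $\ell_{21}\equiv b$, $\ell_{13}\equiv\ell_{23}\equiv 0$, $\ell_{33}\equiv s$, with $a^2-vb^2\equiv 1\bmod p$ and $s\in\{\pm1\}$. Because $a_3=p\equiv 0$ and $\ell_{13},\ell_{23}\equiv 0$, the system decouples neatly: the entries $z_{31},z_{32}$ occur in no equation (their coefficients all carry a factor $a_3=p$ or a factor $\ell_{i3}$, and the relevant cofactors vanish modulo $p$), so they are free; the pair $(z_{13},z_{23})$ is uniquely determined by the $(1,3)$- and $(2,3)$-orthogonality congruences, whose $2\times 2$ coefficient determinant is a nonzero multiple of $a^2-vb^2\equiv 1$; the three congruences from the first two norm equations and the $(1,2)$-orthogonality equation have rank $3$ in $z_{11},z_{12},z_{21},z_{22}$ (a short check using $a^2-vb^2\equiv 1$), leaving one free parameter in that block; and the determinant congruence then fixes $z_{33}$, its coefficient being $s\not\equiv 0$. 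This yields exactly three free parameters — $z_{31}$, $z_{32}$, and one combination in the top-left block — so the solution set, when nonempty, is an affine $\mathbb{F}_p$-subspace of dimension $3$, hence has exactly $p^3$ points; since distinct $Z$ modulo $p$ give distinct $L+p^nZ$ modulo $p^{n+1}$, there are exactly $p^3$ lifts. The \emph{only} obstruction to solvability is the third norm equation, which (the $z_{i3}$ contributions vanishing for the same reasons) collapses to the single compatibility condition $\lambda_3\equiv 0\bmod p$.

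It remains to show that this compatibility condition is automatic, so that every solution does lift (to exactly $p^3$). I would argue by induction on $n$, precisely as in Proposition~\ref{prop:HenselSO(2)K}. The base case is $\widetilde{G}_p=G_p$ (Remark~IV.7 of~\cite{our2nd}, recalled around~\eqref{eq:paramsoluzsismodp}), giving $|\widetilde{G}_p|=|G_p|$. For the inductive step, the previous paragraph shows each $\widetilde{L}\in\widetilde{G}_{p^n}$ has at most $p^3$ lifts in $\widetilde{G}_{p^{n+1}}$ (either $0$ or exactly $p^3$); hence $|\widetilde{G}_{p^{n+1}}|\le p^3|\widetilde{G}_{p^n}|=p^3|G_{p^n}|=|G_{p^{n+1}}|$, using the induction hypothesis $|\widetilde{G}_{p^n}|=|G_{p^n}|$ and~\eqref{eq:liftings2d}. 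Since $G_{p^{n+1}}\subseteq\widetilde{G}_{p^{n+1}}$ by~\eqref{eq:unainclusionesem}, all inequalities are equalities; this forces every $\widetilde{L}$ to have exactly $p^3$ lifts (so $\lambda_3\equiv 0$ holds for every solution) and simultaneously closes the induction with $|\widetilde{G}_{p^{n+1}}|=|G_{p^{n+1}}|$.

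The main obstacle is the middle step. Unlike the $2\times 2$ case, one must carry out the rank and solvability bookkeeping of a $7\times 9$ congruence system and verify that the free-parameter count is exactly $3$ with a single compatibility condition. The cleanest route, and the one I would take, is to exploit the block structure forced by $a_3\equiv 0$ and the normal form~\eqref{eq:paramsoluzsismodp}, reducing the system to one genuinely two-dimensional block (the $(z_{13},z_{23})$ pair), one rank-$3$ block in four unknowns, the determinant row fixing $z_{33}$, and two entirely free rows $z_{31},z_{32}$; the induction then transfers the remaining solvability question to the already-established cardinality identities, avoiding any direct appeal to a multivariable Hensel's lemma (which fails here, the Jacobian being of deficient rank).
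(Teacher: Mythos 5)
Your proposal is correct and follows essentially the same route as the paper's proof: linearise the system modulo $p^{n+1}$ using $p^{2n}\equiv 0$, exploit $a_3=p$ together with the mod-$p$ normal form~\eqref{eq:paramsoluzsismodp} to identify $\lambda_3\equiv 0\bmod p$ as the sole compatibility condition and three free parameters (hence exactly $p^3$ lifts when solvable), and then close the gap by the same counting induction from $\widetilde{G}_p=G_p$, $G_{p^n}\subseteq\widetilde{G}_{p^n}$ and $\lvert G_{p^{n+1}}\rvert=p^3\lvert G_{p^n}\rvert$. The only cosmetic difference is that you replace the paper's explicit elimination in the two cases $b\equiv 0$ and $b\not\equiv 0\bmod p$ by an equivalent block-rank count (which I verified is valid, the top-left $3\times 4$ block having rank $3$ since $a^2-vb^2\equiv 1$).
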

\begin{proof}
The fact that $L$ is a solution to~\eqref{eq:syscond} modulo $p^n$ means
\beq\label{eq:kfixedell}
\left\{\begin{aligned}
&\sum_{i=1,2,3}a_i\ell_{ij}^2\equiv a_j+\lambda_jp^n\mod p^{n+1},\quad \text{for}\ j=1,2,3;\\
&\sum_{i=1,2,3}a_i\ell_{ij}\ell_{im}\equiv \lambda_{j+m+1}p^n\mod p^{n+1},\\ &\quad \quad \quad \text{for}\ (j,m)\in\{(1,2),\,(1,3),\,(2,3)\};\\
&{\ell_{11}}{\ell_{22}}{\ell_{33}}
+{\ell_{12}}{\ell_{23}}{\ell_{31}}+{\ell_{13}}{\ell_{21}}{\ell_{32}}+\\
&-{\ell_{11}}{\ell_{23}}{\ell_{32}}-{\ell_{12}}{\ell_{21}}{\ell_{33}}-{\ell_{13}}{\ell_{22}}{\ell_{31}}\equiv 1+\lambda_dp^n\mod p^{n+1};
\end{aligned}\right.
\eeq
for some $\lambda_1,\lambda_2,\lambda_3,\lambda_4,\lambda_5,\lambda_6,\lambda_d\in\ZZ_p$ determined by $L$.\\
We plug $L+p^nZ$ in the system of equations modulo $p^{n+1}$, and look for solutions $Z$ at given $L$:
\beq\label{eq:sislift}
\left\{\begin{aligned}
&\sum_{i=1,2,3}a_i(\ell_{ij}+p^nz_{ij})^2\equiv a_j\ \text{mod}\ p^{n+1},\quad \text{for}\ j=1,2,3;\\
&\sum_{i=1,2,3}a_i(\ell_{ij}+p^nz_{ij})(\ell_{im}+p^nz_{im})\equiv 0\ \text{mod}\ p^{n+1},\\
&\quad \quad \quad \text{for}\ (j,m)\in\{(1,2),\,(1,3),\,(2,3)\};\\
&(\ell_{11}+p^nz_{11})(\ell_{22}+p^nz_{22})(\ell_{33}+p^nz_{33})
+\\&+(\ell_{12}+p^nz_{12})(\ell_{23}+p^nz_{23})(\ell_{31}+p^nz_{31})+\\
&+(\ell_{13}+p^nz_{13})(\ell_{21}+p^nz_{21})(\ell_{32}+p^nz_{32})+\\&-(\ell_{11}+p^nz_{11})(\ell_{23}+p^nz_{23})(\ell_{32}+p^nz_{32})+\\
&-(\ell_{12}+p^nz_{12})(\ell_{21}+p^nz_{21})(\ell_{33}+p^nz_{33})+\\&-(\ell_{13}+p^nz_{13})(\ell_{22}+p^nz_{22})(\ell_{31}+p^nz_{31})\equiv 1\mod p^{n+1}.
\end{aligned}\right.
\eeq 
All products between two terms containing a factor $p^n$ cancel, and plugging~\eqref{eq:kfixedell} into~\eqref{eq:sislift} we get
\beq\label{eq:daquipe2diff}
\left\{\begin{aligned}
&a_j+p^n\big(\lambda_j+2\sum_{i=1,2,3}a_i\ell_{ij}z_{ij}\big)\equiv a_j\ \text{mod}\ p^{n+1},\quad \text{for}\ j=1,2,3;\\
&p^n\big[\lambda_{j+m+1}+\sum_{i=1,2,3}a_i(\ell_{ij}z_{im}+\ell_{im}z_{ij})\big]\equiv 0\ \text{mod}\ p^{n+1},\\
&\quad \quad \quad \text{for}\ (j,m)\in\{(1,2),\,(1,3),\,(2,3)\};\\
&1+p^n\big[\lambda_d+\ell_{11}\ell_{22}z_{33}+\ell_{11}\ell_{33}z_{22}+\ell_{22}\ell_{33}z_{11}+\ell_{12}\ell_{23}z_{31}+\\
&+\ell_{12}\ell_{31}z_{23}+\ell_{23}\ell_{31}z_{12}+\ell_{13}\ell_{21}z_{32}+\ell_{13}\ell_{32}z_{21}+\ell_{21}\ell_{32}z_{13}+\\
&-\ell_{11}\ell_{23}z_{32}-\ell_{11}\ell_{32}z_{23}-\ell_{23}\ell_{32}z_{11}-\ell_{12}\ell_{21}z_{33}-\ell_{12}\ell_{33}z_{21}+\\
&-\ell_{21}\ell_{33}z_{12}-\ell_{13}\ell_{22}z_{31}-\ell_{13}\ell_{31}z_{22}-\ell_{22}\ell_{31}z_{13}\big]\equiv 1\mod p^{n+1}.
\end{aligned}\right.
\eeq 
As $a_3=p$, the double products involving $a_3$ in the first equations vanish, as well as all the terms with $a_3$ in the second equations. Then,~\eqref{eq:daquipe2diff} is equivalent to
\beq \label{eq:sysp3}
\left\{\begin{aligned}
&az_{11}-vbz_{21}\equiv -\frac{\lambda_1}{2}\mod p;\\
&bz_{12}-az_{22}\equiv -s\frac{\lambda_2}{2v}\mod p;\\
&\lambda_3\equiv0\mod p;\\
&az_{12}+svbz_{11}-vbz_{22}-svaz_{21}\equiv-\lambda_4\mod p;\\
&az_{13}-vbz_{23}\equiv-\lambda_5\mod p;\\
&bz_{13}-az_{23}\equiv-s\frac{\lambda_6}{v}\mod p;\\
&z_{33}\equiv -saz_{11}+bz_{12}-s(bd-sac)z_{13}+\\
&+svbz_{21}-az_{22}+s(ad-svbc)z_{23}-s\lambda_d\mod p;
\end{aligned}\right.
\eeq
where we exploited Eq.~\eqref{eq:paramsoluzsismodp}. If $b\equiv0\mod p$, then $a^2\equiv1\mod p$ and
\beq \label{eq:solift0}
\left\{\begin{aligned}
&z_{11}\equiv -\frac{a}{2}\lambda_1\mod p;\\
&z_{22}\equiv \frac{sa}{2v}\lambda_2\mod p;\\
&z_{12}\equiv svz_{21}-a\lambda_4\mod p;\\
&z_{13}\equiv-a\lambda_5\mod p;\\
&z_{23}\equiv \frac{sa}{v}\lambda_6\mod p;\\
&z_{33}\equiv -saz_{11}+acz_{13}-az_{22}+sadz_{23}-s\lambda_d\\
&\quad\equiv \frac{s}{2}\lambda_1-\frac{s}{2v}\lambda_2-c\lambda_5+\frac{d}{v}\lambda_6-s\lambda_d\mod p;
\end{aligned}\right.\eeq
where the $\lambda_i$s are given by $L$ as in~\eqref{eq:kfixedell}, together with the condition
\beq 
\lambda_3\equiv0\mod p.
\eeq 
If $b\not\equiv0\mod p$, then~\eqref{eq:sysp3} rewrites as follows:
\begin{align}
&z_{21}\equiv \frac{a}{vb}z_{11}+\frac{\lambda_1}{2vb}\mod p;\label{eq:z21modp}\\
&z_{12}\equiv \frac{a}{b}z_{22}-s\frac{\lambda_2}{2vb}\mod p;\label{eq:z12modp}\\
&\lambda_3\equiv0\mod p;\label{eq:k3modp}\\
&az_{12}-vbz_{22}\equiv sv(az_{21}-bz_{11})-\lambda_4\mod p;\label{eq:z22modp}\\
&z_{23}\equiv \frac{az_{13}+\lambda_5}{vb}\mod p;\label{eq:z23modp}\\
&z_{13}\equiv \frac{a}{b}z_{23}-s\frac{\lambda_6}{vb}\mod p;\label{eq:z13modp}\\
&z_{33}\equiv -saz_{11}+bz_{12}-s(bd-sac)z_{13}+\nonumber\\
&+svbz_{21}-az_{22}+s(ad-svbc)z_{23}-s\lambda_d\mod p.\label{eq:z33DET}
\end{align}
Plugging~\eqref{eq:z23modp} into~\eqref{eq:z13modp} we get
\beq 
z_{13}\equiv \frac{a^2z_{13}+a\lambda_5-sb\lambda_6}{vb^2}\mod p,
\eeq 
that is
\beq 
z_{13}\equiv sb\lambda_6-a\lambda_5\mod p.
\eeq 
From~\eqref{eq:z23modp} we get
\beq 
z_{23}\equiv\frac{sa}{v}\lambda_6-b\lambda_5\mod p.
\eeq 
Plugging Eqs.~\eqref{eq:z21modp} and~\eqref{eq:z12modp} in~\eqref{eq:z22modp}, the following are equivalent:
\begin{align}
    &a\left(\frac{a}{b}z_{22}-s\frac{\lambda_2}{2vb}\right)-vbz_{22}\equiv sva\left(\frac{a}{vb}z_{11}+\frac{\lambda_1}{2vb}\right)-svbz_{11}-\lambda_4\mod p;\\
&\frac{a^2-vb^2}{b}z_{22}\equiv s\frac{a^2-vb^2}{b}z_{11}+\frac{sa}{2b}\lambda_1+\frac{sa}{2vb}\lambda_2-\lambda_4\mod p;\\
&z_{22}\equiv sz_{11}+\frac{sa}{2}\left(\lambda_1+\frac{\lambda_2}{v}\right)-b\lambda_4\mod p.
\end{align}
Now $z_{12}$ in terms of $z_{11}$ is
\begin{align}
z_{12}&\equiv \frac{a}{b}\left[sz_{11}+\frac{sa}{2}\left(\lambda_1+\frac{\lambda_2}{v}\right)-b\lambda_4\right]-s\frac{\lambda_2}{2vb}\mod p\notag\\
&\equiv \frac{sa}{b}z_{11}+\frac{sa^2}{2b}\lambda_1+\frac{sb}{2}\lambda_2-a\lambda_4\mod p.
\end{align}
Lastly, we derive an expression for $z_{33}$ from~\eqref{eq:z33DET}:
\begin{align}
z_{33}&\equiv -saz_{11}+b\left[\frac{sa}{b}z_{11}+\frac{sa^2}{2b}\lambda_1+\frac{sb}{2}\lambda_2-a\lambda_4\right]+\notag\\
&+(ac-sbd)(sb\lambda_6-a\lambda_5)+svb\left(\frac{a}{vb}z_{11}+\frac{\lambda_1}{2vb}\right)+\notag\\
&-a\left[sz_{11}+\frac{sa}{2}\left(\lambda_1+\frac{\lambda_2}{v}\right)-b\lambda_4\right]+(sad-vbc)\left(\frac{sa}{v}\lambda_6-b\lambda_5\right)-s\lambda_d\notag\\
&\equiv \frac{s}{2}\lambda_1 +\frac{s}{2}\left(b^2-\frac{a^2}{v}\right)\lambda_2 -c(a^2-vb^2)\lambda_5
+d\left(\frac{a^2}{v}-b^2\right)\lambda_6-s\lambda_d\notag\\
&\equiv \frac{s}{2}\lambda_1-\frac{s}{2v}\lambda_2-c\lambda_5+\frac{d}{v}\lambda_6-s\lambda_d\mod p.
\end{align}
We collect the results obtained for $Z\in\ZZ_p$ when $b\not\equiv0\mod p$:
\beq \label{eq:solift0n}
\left\{\begin{aligned}
&z_{12}\equiv \frac{sa}{b}z_{11}+\frac{sa^2}{2b}\lambda_1+\frac{sb}{2}\lambda_2-a\lambda_4\mod p;\\
& z_{13}\equiv sb\lambda_6-a\lambda_5\mod p;\\
&z_{21}\equiv \frac{a}{vb}z_{11}+\frac{\lambda_1}{2vb};\\
&z_{22}\equiv sz_{11}+\frac{sa}{2}\left(\lambda_1+\frac{\lambda_2}{v}\right)-b\lambda_4\mod p;\\
&z_{23}\equiv \frac{sa}{v}\lambda_6-b\lambda_5\mod p;\\
&z_{33}\equiv \frac{s}{2}\lambda_1-\frac{s}{2v}\lambda_2-c\lambda_5+\frac{d}{v}\lambda_6-s\lambda_d\mod p;
\end{aligned}\right.
\eeq 
where the $\lambda_i$s are given by $\boldsymbol{\ell}$ as in~\eqref{eq:kfixedell}, together with the condition
\beq 
\lambda_3\equiv0\mod p.
\eeq 
As argued in the above proof for the bidimensional case, if some $L$ solution modulo $p^n$ lifts to a solution modulo $p^{n+1}$, then actually it has exactly $p^3$ distinct liftings $L+p^nZ$ which are solutions modulo $p^{n+1}$, because~\eqref{eq:solift0} depends on the free parameters $z_{21},z_{31},z_{32}\mod p$ and~\eqref{eq:solift0n} on $z_{11},z_{31},z_{32}\mod p$. However, whether $L$ fits or not depends on the condition
\beq \label{eq:condlift>2}
    \lambda_3\equiv0\mod p,
\eeq
as the $\lambda_i$s are determined by $L\in\mathsf{M}_{3\times3}(\ZZ_p)$.
This imposes the constraint on the solution $L$ of the system~\eqref{eq:syscond} modulo $p^n$, to satisfy also an equation of the same system modulo $p^{n+1}$, as expected from the lifting \`a la Hensel of a multiple root. Indeed, the above condition in the respective equation of~\eqref{eq:kfixedell} provides
\beq \label{eq:condlift>2b}
\ell_{13}^2-v\ell_{23}^2+p\ell_{33}^2\equiv p\mod p^{n+1}.
\eeq 
One repeats the same argument at the end of the proof in the bidimensional case (by induction, and locally by contradiction) to show that Eq.~\eqref{eq:condlift>2} --- or equivalently~\eqref{eq:condlift>2b} --- must be satisfied by every solution modulo $p^n$, or in other words, that every solution $L$ modulo $p^n$ lifts to solutions modulo $p^{n+1}$.
\end{proof}

\begin{corollary}\label{cor:HenselSO(3)}
The group $G_{(\kappa,)p^n}=\so(d)_{p(,\kappa)}\mod p^n$ coincides with the group $\widetilde{G}_{(\kappa,)p^n}$ as in~\eqref{eq:altrogrequiv2x2},~\eqref{eq:altrogrequiv}, for every $n\in\NN$:
\beq 
G_{\kappa,p^n}= \widetilde{G}_{\kappa,p^n}.
\eeq 
\end{corollary}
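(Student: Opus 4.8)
The plan is to establish the reverse inclusion $\widetilde{G}_{(\kappa,)p^n} \subseteq G_{(\kappa,)p^n}$, since the inclusion $G_{(\kappa,)p^n} \subseteq \widetilde{G}_{(\kappa,)p^n}$ is already recorded in Eq.~\eqref{eq:unainclusionesem}. The heart of the matter is a Hensel-type convergence argument built on the single-step liftings guaranteed by Propositions~\ref{prop:HenselSO(2)K} and~\ref{prop:HenselSO(3)}, so the corollary is essentially a packaging of those results.

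First I would fix an arbitrary $\widetilde{L} \in \widetilde{G}_{(\kappa,)p^n}$, i.e.\ a matrix over $\ZZ/p^n\ZZ$ solving the defining congruences modulo $p^n$, and choose any representative $L_n \in \mathsf{M}_{d\times d}(\ZZ_p)$ with $\pi_n(L_n) = \widetilde{L}$. By Proposition~\ref{prop:HenselSO(2)K} (for $d=2$) or Proposition~\ref{prop:HenselSO(3)} (for $d=3$), there exists $Z_n$ such that $L_{n+1} \coloneqq L_n + p^n Z_n$ solves the system modulo $p^{n+1}$; iterating produces a sequence $(L_m)_{m \geq n}$ with $L_{m+1} \equiv L_m \bmod p^m$ for every $m$. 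Entry-wise this is a Cauchy sequence in the complete metric space $\ZZ_p$, hence it converges to some $L \in \mathsf{M}_{d\times d}(\ZZ_p)$.

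The final step is to check that this limit actually lies in $\so(d)_{p(,\kappa)}$ and reduces correctly. Because the orthogonality and unit-determinant conditions [systems~\eqref{eq:sys2x2int} and~\eqref{eq:syscond}] are polynomial, hence continuous, in the matrix entries, and each $L_m$ satisfies them modulo $p^m$ with $m \to \infty$, the limit $L$ satisfies them exactly; thus $L \in \so(d)_{p(,\kappa)}$. Moreover $L \equiv L_n \bmod p^n$, so $\pi_n(L) = \widetilde{L}$, witnessing $\widetilde{L} \in G_{(\kappa,)p^n}$. This yields $\widetilde{G}_{(\kappa,)p^n} \subseteq G_{(\kappa,)p^n}$, and combined with~\eqref{eq:unainclusionesem} the two groups coincide.

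I expect the only genuinely delicate point to be the liftability at \emph{every} level---precisely the content already secured inside the two propositions, whose inductive core rules out any solution modulo $p^n$ that fails to lift, via a cardinality contradiction against $G_{(\kappa,)p^{n+1}} \subseteq \widetilde{G}_{(\kappa,)p^{n+1}}$ [cf.\ Eqs.~\eqref{eq:cardnet},~\eqref{eq:liftings2d}]. Granting those propositions, no further obstacle arises. As an alternative route one could bypass the explicit limit and argue purely by cardinalities: an induction on $n$ with base case $\widetilde{G}_{(\kappa,)p} = G_{(\kappa,)p}$ [Eq.~\eqref{eq:cardnet} and the parametrisation~\eqref{eq:paramsoluzsismodp}], using that both families grow by the same factor $p$ (resp.\ $p^3$) per step, forces $|\widetilde{G}_{(\kappa,)p^n}| = |G_{(\kappa,)p^n}|$, whence equality of the nested finite sets; I would nonetheless favour the convergence argument as it makes the correspondence between solutions and genuine $p$-adic rotations fully explicit.
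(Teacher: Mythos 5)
Your proposal is correct and takes essentially the same approach as the paper: the paper's own proof likewise iterates the single-step liftings of Propositions~\ref{prop:HenselSO(2)K} and~\ref{prop:HenselSO(3)} until the lifts converge to genuine elements of $\so(d)_{p(,\kappa)}$, which project via $\pi_n$ back onto the given element of $\widetilde{G}_{(\kappa,)p^n}$, and it even records your cardinality count (base case $\widetilde{G}_{(\kappa,)p}=G_{(\kappa,)p}$ plus growth by $p$, resp.\ $p^3$, per level) as an ``equivalent proof''. Your write-up merely spells out the Cauchy-sequence and continuity details of the convergence step more explicitly than the paper does.
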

\begin{proof}
First, $G_{(\kappa,)p}=\widetilde{G}_{(\kappa,)p}$ by Eqs.~\eqref{eq:cardnet} and Remark~IV.7 of~\cite{our2nd}. In general, we have the inclusion $G_{(\kappa,)p^n}\subseteq \widetilde{G}_{(\kappa,)p^n}$ from Eq.~\eqref{eq:unainclusionesem}. On the other hand, Propositions~\ref{prop:HenselSO(2)K} and~\ref{prop:HenselSO(3)} state that each element in $\widetilde{G}_{(\kappa,)p^n}$ lifts to elements in $\widetilde{G}_{(\kappa,)p^{n+1}}$, and so on until converging to elements in $\so(d)_{p(,\kappa)}$ for $n\rightarrow \infty$. These latter can be projected via $\pi_n$, getting elements in $G_{p^n}$. In this way, also $\widetilde{G}_{p^n}\subseteq G_{p^n}$ is proved. 

An equivalent proof is as follows: Since $G_{(\kappa,)p^n}\subseteq \widetilde{G}_{(\kappa,)p^n}$, one has $G_{\kappa,p^n}= \widetilde{G}_{\kappa,p^n}$ if and only if $\left\lvert G_{\kappa,p^n}\right\rvert =\left\lvert \widetilde{G}_{\kappa,p^n}\right\rvert$. By Proposition~\ref{prop:HenselSO(2)K} one has $\lvert \widetilde{G}_{\kappa,p^n}\rvert = p\lvert \widetilde{G}_{\kappa,p^{n-1}}\rvert = p^{n-1}\lvert \widetilde{G}_p\rvert= p^{n-1}\lvert G_p\rvert =\left\{\begin{aligned}
&2p^n,\ \kappa\in\left\{p,\frac{p}{u}\right\},\\
& p^{n-1}(p+1),\ \kappa=-v, \end{aligned}\right.$ which coincides with $\lvert G_{\kappa,p^n}\rvert$ in Proposition~\ref{prop:cardparam}; and by Proposition~\ref{prop:HenselSO(3)} $\lvert \widetilde{G}_{p^n}\rvert = p^3\lvert \widetilde{G}_{p^{n-1}}\rvert = (p^3)^{n-1}\lvert \widetilde{G}_p\rvert= p^{3(n-1)}\lvert G_p\rvert =p^{3(n-1)}2p^2(p+1)=2p^{3n-1}(p+1)$ which is equal to $\lvert G_{p^n}\rvert$ in Proposition~\ref{prop:ordinigrup3dmodpk}. Indeed, the number of liftings of any $\widetilde{L}\in\widetilde{G}_{\kappa,p^n}$ to $\widetilde{G}_{\kappa,p^{n+1}}$ is equal to the cardinality of the preimage of any $\pi_n(L)\in G_{\kappa,p^n}$ with respect to $\varphi_{n,n+1}$.
\end{proof}

\section{Comparison of Haar measures on \texorpdfstring{$\so(2)_{p,\kappa}$}{Lg}}\label{sec:equivmisure2}
So far, two different approaches have been developed to find the Haar measure on the compact $p$-adic rotation groups in dimensions two and three. On the one side, this paper provides an inverse limit characterisation; on the other hand, an integral Haar measure on $p$-adic Lie groups was derived in~\cite{our3rd}, and applied to $p$-adic rotation groups. In particular, explicit calculations can be carried out for the integral Haar measure on $\so(2)_{p,\kappa}$. In this appendix, we want to make a comparison between these two formulations of Haar measure on $SO(2)_{p,\kappa}$, for every prime $p>2$ and $\kappa\in\{-v,p,up\}$.

Like on every compact group, the Haar measure on $\so(2)_{p,\kappa}$ is essentially unique. Thus, for every prime $p>2$, we want to explicitly show that $\overline{\mu}_{p,\kappa}$ on $\so(2)_{p,\kappa}$ coincides --- up to a positive multiplicative constant, due to normalisation --- with the Haar measure $\mu_2^{(\kappa)}$ given in~\cite{our3rd}. The latter is
\beq\label{haarmes2}
\mu_2^{(\kappa)}(E)=\int_{\varphi_{(\kappa)}(E)}\frac{1}{|1+\alpha_\kappa\s ^2|_{p}}\de \s ,
\eeq
for every Borel set $E\in\mathcal{B}(\so(2)_{p,\kappa})$, where $\de\s $ denotes the Haar measure on $\QQ_p$, while $\varphi_{(\kappa)}\colon\so(2)_{p,\kappa}\setminus\{-\mathrm{I}\}\rightarrow\QQ_p$ is the \emph{coordinate map} on $\so(2)_{p,\kappa}$ defined in such a way that  $\varphi^{-1}_{(\kappa)}(\s )= \mathcal{R}_\kappa(\s )$  (cf.\ Eq.~\eqref{eq:invertparam}). To this end, it is enough to show that the measure of any open ball in a topology base for $\so(2)_{p,\kappa}$ provides the same result in both the two approaches. Indeed, the topology base generates the topology of $\so(2)_{p,\kappa}$, which in turn provides its Borel $\sigma$-algebra. 

First, we want to normalise the integral measure $\mu_2^{(\kappa)}$ to evaluate to one on the whole group $\so(2)_{p,\kappa}$, likewise $\overline{\mu}_{p,\kappa}$. We just need to redefine the Haar measure in Eq.~\eqref{haarmes2} by dividing the second member by $\mu_2^{(\kappa)}\big(\so(2)_{p,\kappa}\big)$. To this end, we present a technical result, whose proof is pedagogical for the resolution of simple integrals over $\QQ_p$.
\begin{lemma}\label{prop.5.1}
For every $k\in\ZZ_{\leq0}$, 
    \beq \label{eq:piuingen}
    \int_{D_k(0)}\frac{1}{|1+\alpha_\kappa \s ^2|_p}\de \s =p^k.
    \eeq 
\end{lemma}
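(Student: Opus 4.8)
The plan is to show that on the domain of integration the integrand is identically equal to $1$, so that the integral collapses to the additive Haar measure of the ball $D_k(0)$, which is then elementary to compute. Throughout I use that for $k\in\ZZ_{\leq0}$ one has $D_k(0)=\{\s\in\QQ_p\midd|\s|_p\leq p^k\}=p^{-k}\ZZ_p$, and since $-k\geq0$ this ball is a compact open subgroup contained in $\ZZ_p$; hence it suffices to understand the integrand on $\ZZ_p$.

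The key step is the claim that $|1+\alpha_\kappa\s^2|_p=1$ for every $\s\in\ZZ_p$ and every $\alpha_\kappa\in\{-v,p,\frac{p}{u}\}$, which is precisely the observation recorded in Remark~\ref{rem:paraminter}. For $\alpha_\kappa\in\{p,\frac{p}{u}\}$ one has $\alpha_\kappa\s^2\in p\ZZ_p$, so that $1+\alpha_\kappa\s^2\equiv1\not\equiv0\bmod p$ and the norm equals $1$. For $\alpha_\kappa=-v$ I would argue that $1-v\s^2\in\UU_p$: if $\s\in p\ZZ_p$ then $1-v\s^2\equiv1\bmod p$, while if $\s\in\UU_p$ then $1-v\s^2\equiv0\bmod p$ would force $v\equiv(\s^{-1})^2\bmod p$, contradicting that $v$ is a non-square modulo $p$ in both residue cases of the definition~\eqref{eq:vdef} (for $p\equiv3\bmod4$ one has $v=-1$, a non-residue; for $p\equiv1\bmod4$ one has $v=-u$, a residue times a non-residue). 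Thus the integrand $1/|1+\alpha_\kappa\s^2|_p$ is constantly $1$ on $\ZZ_p$, in particular on $D_k(0)$.

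It then remains to compute $\int_{D_k(0)}\de\s=\mu\big(D_k(0)\big)$, where $\mu$ is the additive Haar measure on $\QQ_p$. With the standard normalisation $\mu(\ZZ_p)=1$, partitioning $\ZZ_p=\bigsqcup_{j=0}^{p^{m}-1}(j+p^{m}\ZZ_p)$ into $p^{m}$ translates of equal measure gives $\mu(p^{m}\ZZ_p)=p^{-m}$; taking $m=-k\geq0$ yields $\mu\big(D_k(0)\big)=\mu(p^{-k}\ZZ_p)=p^{k}$, as claimed. The only point demanding any attention is the unit property of $1-v\s^2$, i.e.\ that $v$ is a non-square unit so that $1-v\s^2$ never vanishes modulo $p$; once this is in hand there is no analytic obstacle, the integrand being locally constant and the ball measures in $\QQ_p$ being elementary.
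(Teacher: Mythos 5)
Your proof is correct, and it shares the paper's key reduction: both arguments rest on the observation (Remark~\ref{rem:paraminter}) that $|1+\alpha_\kappa\s^2|_p=1$ for all $\s\in\ZZ_p$, so that the integral collapses to the additive Haar measure of $D_k(0)\subseteq\ZZ_p$. Where you diverge is in evaluating that measure: the paper decomposes $D_k(0)$ into the disjoint spheres $S_m(0)=\{\s\midd|\s|_p=p^m\}$, $m\leq k$, applies the standard formula $\int_{S_m(0)}\de\s=p^m\left(1-\tfrac1p\right)$ from~\cite{VVZ}, and sums the resulting geometric series; you instead identify $D_k(0)=p^{-k}\ZZ_p$ as a subgroup of index $p^{-k}$ in $\ZZ_p$ and use translation invariance on the coset partition, obtaining $\mu\left(p^{-k}\ZZ_p\right)=p^k$ with no series at all. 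Your route is more elementary and makes the dependence on the normalisation $\mu(\ZZ_p)=1$ explicit, whereas the paper carries the same normalisation implicitly through the sphere formula; on the other hand, the sphere decomposition is the flexible tool one wants when the integrand genuinely depends on $|\s|_p$, which is why the paper frames its proof as ``pedagogical'' for such $p$-adic integrals. A further small difference: rather than merely citing Remark~\ref{rem:paraminter}, you re-derive the unit property of $1+\alpha_\kappa\s^2$, and your case analysis for $\alpha_\kappa=-v$ is sound --- $1-v\s^2\equiv0\bmod p$ with $\s\in\UU_p$ would make $v$ a square modulo $p$, contradicting that $v$ as defined in~\eqref{eq:vdef} is a non-residue both for $p\equiv3\bmod4$ (where $v=-1$) and for $p\equiv1\bmod4$ (where $v=-u$ with $-1$ a residue and $u$ a non-residue).
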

\begin{proof}
Any integral on $D_k(0)\subset\QQ_p$ can be decomposed as a sum of integrals on the disjoint concentric circles of radii $\leq p^k$ centred at $0$ in $\QQ_p$, which cover the whole $D_k(0)$~\cite{VVZ}. Indeed, if $S_m(0)\coloneqq\{x\in\QQ_p\midd |x|_p=p^m \}$, $m\in\ZZ$, then
\beq\label{eq.69}
\int_{D_k(0)}\frac{1}{|1+\alpha_\kappa \s ^2|_p}\de \s = \sum_{m\leq k}\int_{S_m(0)}\frac{1}{|1+\alpha_\kappa \s ^2|_p}\de \s .
\eeq
Since $k\in\ZZ_{\leq0}$ implies $D_k(0)\subseteq\ZZ_p$, then $\s \in\ZZ_p$ in the last integrals, case in which Remark~\ref{rem:paraminter} tells us that $|1+\alpha_\kappa\s ^2|_p=1$. Therefore, 
\beq
\int_{D_k(0)}\frac{1}{|1+\alpha_\kappa \s ^2|_p}\de \s = \sum_{m\leq k}\int_{S_m(0)}\de \s =\sum_{m\leq k} p^m\left(1-\frac{1}{p}\right),
\eeq
according to Example~2 at p.~40 of~\cite{VVZ}. By the change of index $N\coloneqq k-m$, we get 
\beq 
\sum_{m\leq k} p^m = \sum_{N\geq 0}p^{k-N} =  p^k\sum_{N\geq0}\left(\frac{1}{p}\right)^N.
\eeq 
The last sum is the geometric series of common ratio $\frac{1}{p}<1$, which converges to $\frac{1}{1-\frac{1}{p}}$, therefore
\beq
\int_{D_k(0)}\frac{1}{|1+\alpha_\kappa \s ^2|_p}\de \s =\left(1-\frac{1}{p}\right)p^k\frac{1}{1-\frac{1}{p}} = p^k.
\eeq
\end{proof}
We now give the normalised Haar measure on $\so(2)_{p,\kappa}$ in the integral approach.
\begin{theorem}
For every prime $p>2$, the Haar measure in~\cite{our3rd} normalised to one on $\so(2)_{p,\kappa}$ is given by
\beq \label{normhaarmes2}
\widetilde{\mu}_2^{(\kappa)}(E)=\frac{1}{\mu_2^{(\kappa)}\big(\so(2)_{p,\kappa}\big)}\int_{\varphi_{(\kappa)}(E)}\frac{1}{|1+\alpha_\kappa\s ^2|_p}\de\s ,
\eeq 
where 
\beq \label{eq:normlafacts}
\mu_2^{(\kappa)}\big(\so(2)_{p,\kappa}\big) = \begin{cases}
    1+\frac{1}{p},\ &\textup{ if } \ \kappa=-v,\\
    2, & \textup{ if }  \kappa\in\left\{p,up\right\}.
\end{cases}
\eeq 
\end{theorem}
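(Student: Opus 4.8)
The statement reduces to computing the total mass $\mu_2^{(\kappa)}\big(\so(2)_{p,\kappa}\big)=\int_{\varphi_{(\kappa)}(\so(2)_{p,\kappa})}\frac{1}{|1+\alpha_\kappa\s^2|_p}\,\de\s$, since once this value is in hand the normalised measure~\eqref{normhaarmes2} is obtained by mere division. The plan is: first to identify the region of integration as all of $\QQ_p$, then to split it as $\ZZ_p\sqcup(\QQ_p\setminus\ZZ_p)$, and finally to sum a geometric series over the spheres lying outside $\ZZ_p$, distinguishing the two cases $\kappa\in\{p,up\}$ and $\kappa=-v$.

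First I would pin down the domain $\varphi_{(\kappa)}(\so(2)_{p,\kappa})$. The parametrisation~\eqref{genelso2} is a bijection $\mathcal{R}_\kappa\colon\QQ_p\cup\{\infty\}\to\so(2)_{p,\kappa}$ with $\mathcal{R}_\kappa(\infty)=-\mathrm{I}$ [cf.~\eqref{eq:invertparam}], so the coordinate map $\varphi_{(\kappa)}=\mathcal{R}_\kappa^{-1}$ carries $\so(2)_{p,\kappa}\setminus\{-\mathrm{I}\}$ bijectively onto $\QQ_p$; as $\{-\mathrm{I}\}$ is $\overline{\mu}$-null, the normalisation integral runs over all of $\QQ_p$. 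This is consistent with the integer-parameter description of Remark~\ref{rem:paraminter}: the branch $\mathcal{R}_\kappa(\s)$, $\s\in\ZZ_p$, accounts for $\s\in\ZZ_p$, while the sign-flipped branch $-\mathcal{R}_\kappa(\s)=\mathcal{R}_\kappa\big(-\tfrac{1}{\alpha_\kappa\s}\big)$ accounts precisely for $\s\in\QQ_p\setminus\ZZ_p$ (indeed $\tau=-\tfrac{1}{\alpha_\kappa\s}$ satisfies $|\tau|_p\geq p$ exactly on the admissible range of $\s$ in~\eqref{eq:semplicepvk}), the two branches being disjoint as shown in the proof of Proposition~\ref{prop:cardparam}.

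Next I would split $\QQ_p=\ZZ_p\sqcup(\QQ_p\setminus\ZZ_p)$. On $\ZZ_p=D_0(0)$, Lemma~\ref{prop.5.1} with $k=0$ gives $\int_{\ZZ_p}\frac{1}{|1+\alpha_\kappa\s^2|_p}\,\de\s=1$ (equivalently, $|1+\alpha_\kappa\s^2|_p=1$ throughout $\ZZ_p$, so the integrand is constantly $1$). For the complement I would decompose $\QQ_p\setminus\ZZ_p=\bigsqcup_{j\geq1}S_j(0)$ into spheres $S_j(0)=\{|\s|_p=p^j\}$, each of Haar measure $\mu\big(S_j(0)\big)=p^j-p^{j-1}=p^j\big(1-\tfrac1p\big)$. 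On $S_j(0)$ the integrand is constant by the strong triangle inequality, and here the cases split: for $\kappa\in\{p,up\}$ one has $|\alpha_\kappa|_p=p^{-1}$, whence $|1+\alpha_\kappa\s^2|_p=|\alpha_\kappa\s^2|_p=p^{2j-1}$; for $\kappa=-v$ one has $|\alpha_\kappa|_p=1$ and $v$ a non-square (so $1-v\s^2$ never vanishes on $\UU_p$), whence $|1-v\s^2|_p=p^{2j}$. Summing the resulting geometric series would give, for $\kappa\in\{p,up\}$,
\beq
\int_{\QQ_p\setminus\ZZ_p}\frac{\de\s}{|1+\alpha_\kappa\s^2|_p}=\Big(1-\tfrac1p\Big)\sum_{j\geq1}p^{1-j}=1,
\eeq
so that $\mu_2^{(\kappa)}\big(\so(2)_{p,\kappa}\big)=1+1=2$; while for $\kappa=-v$,
\beq
\int_{\QQ_p\setminus\ZZ_p}\frac{\de\s}{|1-v\s^2|_p}=\Big(1-\tfrac1p\Big)\sum_{j\geq1}p^{-j}=\tfrac1p,
\eeq
so that $\mu_2^{(-v)}\big(\so(2)_{p,-v}\big)=1+\tfrac1p$. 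This is exactly~\eqref{eq:normlafacts}.

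The main obstacle is the first step: establishing rigorously that $\varphi_{(\kappa)}$ maps the whole group onto $\QQ_p$ and, in particular, that the sign-flipped branch of the parametrisation covers exactly $\QQ_p\setminus\ZZ_p$ without overlap. Once the domain is known to be all of $\QQ_p$, the remainder is a routine ultrametric evaluation of $|1+\alpha_\kappa\s^2|_p$ on the spheres together with the summation of a geometric series.
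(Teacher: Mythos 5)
Your computation is correct and yields exactly the values in~\eqref{eq:normlafacts}, but your treatment of the region $\QQ_p\setminus\ZZ_p$ differs genuinely from the paper's. The paper never integrates over that region directly: using the two-branch parametrisation of Remark~\ref{rem:paraminter}, it substitutes $\s=-\frac{1}{\alpha_\kappa\tau}$ (resp.\ $\s=\frac{1}{v\tau}$ for $\kappa=-v$) and applies the $p$-adic change-of-variables formula to fold the outer integral back onto $\ZZ_p$ (resp.\ $p\ZZ_p$), after which Lemma~\ref{prop.5.1} gives the totals as $1+1=2$ and $1+\frac{1}{p}=\mu(\ZZ_p)+\mu(p\ZZ_p)$. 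You instead evaluate the integrand directly on the spheres $S_j(0)$, $j\geq1$, via $|1+\alpha_\kappa\s^2|_p=|\alpha_\kappa|_p\,|\s|_p^2$ for $|\s|_p>1$, and sum a geometric series. Your route is more elementary --- it needs only the sphere measures $p^j\big(1-\frac1p\big)$ and the strong triangle inequality, and avoids the change-of-variables formula (Proposition~7.4.1 in~\cite{igusa2000}) altogether --- whereas the paper's route makes the group-theoretic structure transparent: the normalisation constants arise as one copy of $\ZZ_p$ per branch, in visible agreement with the orders $2p^n$ and $p^{n-1}(p+1)$ of $G_{\kappa,p^n}$. Two small remarks: the ``main obstacle'' you flag (that $\varphi_{(\kappa)}$ carries $\so(2)_{p,\kappa}\setminus\{-\mathrm{I}\}$ bijectively onto $\QQ_p$) is not really an obstacle, since it is immediate from the bijectivity of the parametrisation~\eqref{genelso2} established in~\cite{our1st}, and the paper simply integrates over $\QQ_p$ from the outset; and your parenthetical that $v$ is a non-square (so that $1-v\s^2$ does not vanish on $\UU_p$) is unnecessary on the spheres $j\geq1$, where $|v\s^2|_p=p^{2j}>1$ already forces $|1-v\s^2|_p=p^{2j}$ by the ultrametric inequality alone.
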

\begin{proof}
One just needs to compute the integral 
\beq
\mu_2^{(\kappa)}\big(\so(2)_{p,\kappa}\big)=\int_{\QQ_p}\frac{1}{|1+\alpha_\kappa\s ^2|_p}\de\s.
\eeq
When $p\mid \alpha_\kappa$, according to Remark~\ref{rem:paraminter}, we write
\begin{align}
    \mu_2^{(\kappa)}\big(\so(2)_{p,\kappa}\big)&=\int_{\{\s \in\ZZ_p\}}\frac{1}{|1+\alpha_\kappa\s ^2|_p}\de\s + \int_{\left\{\s =-\frac{1}{\alpha_\kappa \tau}\midd \tau\in\ZZ_p\setminus\{0\}\right\}}\frac{1}{|1+\alpha_\kappa\s ^2|_p}\de\s\nonumber\\
    & = \int_{\ZZ_p}\frac{1}{|1+\alpha_\kappa\s ^2|_p}\de\s + \int_{\ZZ_p\setminus\{0\}}\frac{1}{|1+\frac{1}{\alpha_\kappa\tau ^2}|_p}\frac{\de\tau}{\left\lvert\alpha_\kappa \tau^2\right\rvert_p}\nonumber\\
    & = \int_{\ZZ_p}\frac{1}{|1+\alpha_\kappa\s ^2|_p}\de\s+\int_{\ZZ_p}\frac{1}{|1+\alpha_\kappa\tau ^2|_p}\de\tau\nonumber\\
    & = 2\int_{D_0(0)}\frac{1}{|1+\alpha_\kappa\s ^2|_p}\de\s = 2,
\end{align}
where we used the change of variable formula for $p$-adic integrals (see Proposition~7.4.1 in~\cite{igusa2000}) in the second equality, the fact that a singleton has zero Haar measure in the compact and infinite (uncountable) group $\ZZ_p$ in the third equality,  and the results in Lemma~\ref{prop.5.1} for $\ZZ_p=D_0(0)$ in the last equality. We perform the same steps when $\alpha_\kappa=-v$, with the only initial difference that $\{\s \in\QQ_p\}=\{\s \in\ZZ_p\}\cup \left\{\s =\frac{1}{v \tau}\midd \tau\in p\ZZ_p\setminus\{0\}\right\}$.
\end{proof}

We can now proceed to show that the two measures $\overline{\mu}_{p,\kappa}$ and $\widetilde{\mu}_2^{(\kappa)}$ do coincide. We need to compare the values in Eq.~\eqref{eq:cardbocgpkk} with
\begin{align}
\widetilde{\mu}_2^{(\kappa)}\big(B_{-n}(\cR_0)\big) = \widetilde{\mu}_2^{(\kappa)}\big(B_{-n}(\mathrm{I})\big) &= \int_{\so(2)_{p,\kappa}}\chi_{\big(B_{-n}(\mathrm{I})\big)}\de \widetilde{\mu}_2^{(\kappa)}\nonumber\\ &= \frac{1}{\mu_2^{(\kappa)}\big(\so(2)_{p,\kappa}\big)}
\int_{\varphi_{(\kappa)}\big(B_{-n}(\mathrm{I})\big)}\frac{1}{|1+\alpha_\kappa\s ^2|_{p}}\de \s ,\label{eq.78}
\end{align}
for every $\cR_0\in \so(2)_{p,\kappa}$ by translation invariance, where, as usual, we denote by $\chi_{\big(B_{-n}(\mathrm{I})\big)}$ the \emph{indicator function} of $B_{-n}(\mathrm{I})$, namely, $\chi_{\big(B_{-n}(\mathrm{I})\big)}(\cR)=1$ for $\cR\in B_{-n}(\mathrm{I})$, $\chi_{\big(B_{-n}(\mathrm{I})\big)}(\cR)=0$ for $\cR\not\in B_{-n}(\mathrm{I})$.
On the other hand, to compute the last integral in Eq.~\eqref{eq.78}, we can exploit the following result.
\begin{lemma}\label{prop.boccoord}
For every $\kappa\in\{-v,p,up\}$ and $n\in\NN$, the image $\varphi_{(\kappa)}\big(B_{-n}(\mathrm{I})\big)=\left\{\s \in\QQ_p\midd \cR_\kappa(\s )\in B_{-n}(\mathrm{I})\right\}$ of a ball $B_{-n}(\mathrm{I})$ through the coordinate map $\varphi_{(\kappa)}$ is 
    \beq\label{eq.boccoord}
\varphi_{(\kappa)}\big(B_{-n}(\mathrm{I})\big)= D_{-n}(0).
\eeq 
\end{lemma}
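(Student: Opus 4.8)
The plan is to reduce the statement to an explicit computation of the $p$-adic operator norm $\|\cR_\kappa(\sigma)-\mathrm{I}\|_p$, since by definition $B_{-n}(\mathrm{I})=\{L\in\so(2)_{p,\kappa}:\|L-\mathrm{I}\|_p\leq p^{-n}\}$ while $D_{-n}(0)=p^n\ZZ_p=\{\sigma\in\QQ_p:|\sigma|_p\leq p^{-n}\}$, and the coordinate map $\varphi_{(\kappa)}$ satisfies $\varphi_{(\kappa)}^{-1}(\sigma)=\cR_\kappa(\sigma)$. I would organise the argument along the two branches of the parametrisation of Remark~\ref{rem:paraminter}: the \emph{integer branch} $\{\cR_\kappa(\sigma):\sigma\in\ZZ_p\}$, and its complement, consisting of the elements $-\cR_\kappa(\tau)$ of \eqref{eq:paraminteri1},\eqref{eq:paraminteri2}.

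First I would treat the integer branch. Subtracting $\mathrm{I}$ from the matrix \eqref{genelso2} gives diagonal entries $-\frac{2\alpha_\kappa\sigma^2}{1+\alpha_\kappa\sigma^2}$ and off-diagonal entries $-\frac{2\alpha_\kappa\sigma}{1+\alpha_\kappa\sigma^2}$ and $\frac{2\sigma}{1+\alpha_\kappa\sigma^2}$. Using $|2|_p=1$ (as $p>2$), the unit-denominator fact $|1+\alpha_\kappa\sigma^2|_p=1$ for $\sigma\in\ZZ_p$ (Remark~\ref{rem:paraminter}), and $|\alpha_\kappa|_p\leq1$ in all cases (namely $|{-v}|_p=1$ and $|p|_p=|p/u|_p=p^{-1}$), each of these entries has $p$-adic norm at most $|\sigma|_p$, while the $(2,1)$-entry attains exactly $|\sigma|_p$. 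Hence I expect to obtain the clean identity $\|\cR_\kappa(\sigma)-\mathrm{I}\|_p=|\sigma|_p$, uniformly in $\kappa$. It follows that, on this branch, $\cR_\kappa(\sigma)\in B_{-n}(\mathrm{I})$ holds iff $|\sigma|_p\leq p^{-n}$, i.e. iff $\sigma\in D_{-n}(0)$; and since $|\sigma|_p\leq p^{-n}<1$ already forces $\sigma\in\ZZ_p$, the whole target ball $D_{-n}(0)$ sits inside this branch.

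Next I would dispose of the second branch. Writing a generic such element as $-\cR_\kappa(\tau)$ with $\tau\in\ZZ_p$ for $\kappa\in\{p,up\}$ and $\tau\in p\ZZ_p$ for $\kappa=-v$, the diagonal entries of $-\cR_\kappa(\tau)-\mathrm{I}$ equal $\frac{-2}{1+\alpha_\kappa\tau^2}$, of norm $|2|_p/|1+\alpha_\kappa\tau^2|_p=1$ (again $\tau\in\ZZ_p$). Thus $\|{-\cR_\kappa(\tau)}-\mathrm{I}\|_p=1>p^{-n}$ for every $n\in\NN$, so no second-branch element meets $B_{-n}(\mathrm{I})$. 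The identical estimate gives $\|{-\mathrm{I}}-\mathrm{I}\|_p=|2|_p=1$, so $-\mathrm{I}\notin B_{-n}(\mathrm{I})$ and the coordinate map is defined on the entire ball; combining the two branches yields $\varphi_{(\kappa)}\big(B_{-n}(\mathrm{I})\big)=D_{-n}(0)$.

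I expect no conceptual obstacle here, only careful bookkeeping: one must confirm that the two-branch decomposition of Remark~\ref{rem:paraminter} exhausts $\so(2)_{p,\kappa}$ and that $\varphi_{(\kappa)}$ is well defined on $B_{-n}(\mathrm{I})$, which hinges on the separation $\|{-\mathrm{I}}-\mathrm{I}\|_p=1$ — valid precisely because $p>2$. The norm computation itself is a routine use of the strong triangle inequality together with $|1+\alpha_\kappa\sigma^2|_p=1$ on $\ZZ_p$; the mild case split between $\alpha_\kappa=-v$ (a unit) and $\alpha_\kappa\in\{p,p/u\}$ (valuation one) only rescales the diagonal and $(1,2)$ entries and never touches the dominant $(2,1)$ entry, so the final value $|\sigma|_p$ is the same for all $\kappa$.
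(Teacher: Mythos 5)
Your proof is correct and takes essentially the same route as the paper's: the identical two-branch split from Remark~\ref{rem:paraminter}, the norm identity $\|\cR_\kappa(\s)-\mathrm{I}\|_p=|\s|_p$ on the integer branch, and the computation that every second-branch element $-\cR_\kappa(\tau)$ (including $-\mathrm{I}$) lies at distance $1$ from $\mathrm{I}$, hence outside $B_{-n}(\mathrm{I})$ for all $n\in\NN$. Your only deviations are cosmetic improvements: the uniform bound via $|\alpha_\kappa|_p\leq1$ with the dominant $(2,1)$-entry compresses the paper's case split between $\kappa=-v$ and $\kappa\in\{p,up\}$, and your explicit remarks that $D_{-n}(0)\subseteq\ZZ_p$ and that $\varphi_{(\kappa)}$ is defined on all of $B_{-n}(\mathrm{I})$ make precise two points the paper leaves implicit.
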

\begin{proof}
The condition $\cR_\kappa(\s ) \in B_{-n}(\mathrm{I})$ is equivalent to $\left\|\cR_\kappa(\s )-\mathrm{I}\right\|_p\leq p^{-n}$. If $\s \in\ZZ_p$, then $|1+\alpha_\kappa\s ^2|_p=1$ and 
\begin{align*}
\|\cR_\kappa(\s )-\mathrm{I}\|_p &= \left\|\begin{pmatrix}
\frac{1-\alpha_\kappa\s ^2}{1+\alpha_\kappa\s ^2}-1 & -\frac{2\alpha_\kappa\s }{1+\alpha_\kappa\s ^2}\\ \frac{2\s }{1+\alpha_\kappa\s ^2} & \frac{1-\alpha_\kappa\s ^2}{1+\alpha_\kappa\s ^2}-1
\end{pmatrix}\right\|_p \\
&=  \max\left\{\left|\frac{-2\alpha_\kappa\s ^2}{1+\alpha_\kappa\s ^2}\right|_p,\,\left|\frac{-2\alpha_\kappa\s }{1+\alpha_\kappa\s ^2}\right|_p,\,\left|\frac{2\s }{1+\alpha_\kappa\s ^2}\right|_p \right\}\\
& = \max\{\lvert\alpha_\kappa\rvert\lvert\s \rvert^2_p,\,\lvert\alpha_\kappa\rvert\lvert\s |_p,\,|\s |_p\}\\& = \begin{cases}
\max\{p^{-1}|\s |_p^2,p^{-1}|\s |_p,|\s |_p\}=|\s |_p,&\textup{if } \kappa\in\left\{p,up\right\},\\
\max\{|\s |_p^2,|\s |_p\}=|\s |_p,&\textup{if } \kappa=-v.
\end{cases}
\end{align*}
Hence, $\|\mathcal{R}_\kappa(\s )-\mathrm{I}\|_p\leq p^{-n}\textup{ if and only if } |\s |_p\leq p^{-n}$, that is
\beq \label{eq.91}
    \left\{\s \in\ZZ_p\midd \cR_\kappa(\s )\in B_{-n}(\mathrm{I})\right\}=\{\s \in\mathbb{Q}_p\midd|\s |_p\leq p^{-n}\}=D_{-n}(0).
\eeq 
One repeats the procedure for the set $\left\{\s \in\QQ_p\setminus\ZZ_p\midd \cR_\kappa(\s )\in B_{-n}(\mathrm{I})\right\}$, with the change of parameter $\s =-\frac{1}{\alpha_\kappa\tau}$ as in Remark~\ref{rem:paraminter}:
\begin{align}
\|\cR_\kappa(\s )-\mathrm{I}\|_p&=\|-\cR_\kappa(\tau)-\mathrm{I}\|_p = \left\|\begin{pmatrix}
    \frac{-2}{1+\alpha_\kappa\tau^2} & \frac{2\alpha_\kappa\tau}{1+\alpha_\kappa\tau^2}\\
    \frac{-2\tau}{1+\alpha_\kappa\tau^2} & \frac{-2}{1+\alpha_\kappa\tau^2}
    \end{pmatrix}\right\|_p\nonumber\\
    & =\max\left\{|-2|_p,|-2\tau|_p,|2\alpha_\kappa\tau|_p\right\}=1,
\end{align}
however $1\leq p^{-n}$ is impossible for $n\in\NN$, therefore $\left\{\s \in\QQ_p\setminus\ZZ_p\midd \cR_\kappa(\s )\in B_{-n}(\mathrm{I})\right\} =\emptyset$.
\end{proof}

Using Lemma~\ref{prop.boccoord}, normalisation~\eqref{eq:normlafacts}, and  Lemma~\ref{prop.5.1} for $k=-n<0$, the integral~\eqref{eq.78} is easily computed, eventually getting to the following result.

\begin{proposition}\label{prop:equivmeash}For every prime $p>2$ and $\kappa\in\{-v,p,up\}$, the Haar measure $\overline{\mu}_{p,\kappa}$ (cf.~\eqref{eq:defolmu}) coincides with $\widetilde{\mu}_2^{(\kappa)}$ (cf.~\eqref{normhaarmes2})  on $\so(2)_{p,\kappa}$.
\end{proposition}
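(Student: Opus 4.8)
The plan is to establish the identity by showing that the two measures agree on a family of sets that generates the Borel $\sigma$-algebra, and then to invoke uniqueness of the resulting extension. Both $\overline{\mu}_{p,\kappa}$ and $\widetilde{\mu}_2^{(\kappa)}$ are translation-invariant probability measures on the compact group $\so(2)_{p,\kappa}$, and the clopen balls $B_{-n}(\mathcal{R}_0)$ form a base for its ultrametric topology [see Eq.~\eqref{eq:discopreimage}]. It therefore suffices to match the two measures on every such ball, and by the left-invariance of both it is enough to treat the balls $B_{-n}(\mathrm{I})$ centred at the identity, for $n\in\NN$.

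First I would compute $\widetilde{\mu}_2^{(\kappa)}\big(B_{-n}(\mathrm{I})\big)$ explicitly from the integral expression~\eqref{eq.78}. Lemma~\ref{prop.boccoord} identifies the image of the ball under the coordinate map as $\varphi_{(\kappa)}\big(B_{-n}(\mathrm{I})\big)=D_{-n}(0)$, reducing the computation to an integral over a disc in $\QQ_p$. Lemma~\ref{prop.5.1}, applied with $k=-n<0$, evaluates this integral to $p^{-n}$. Dividing by the normalisation factor $\mu_2^{(\kappa)}\big(\so(2)_{p,\kappa}\big)$ given in Eq.~\eqref{eq:normlafacts}, I obtain
\beq
\widetilde{\mu}_2^{(\kappa)}\big(B_{-n}(\mathrm{I})\big) = \frac{p^{-n}}{\mu_2^{(\kappa)}\big(\so(2)_{p,\kappa}\big)} = \begin{cases} \dfrac{p^{1-n}}{p+1}, & \kappa = -v,\\ \dfrac{p^{-n}}{2}, & \kappa \in \{p, up\}. \end{cases}
\eeq
Comparing these values with those of $\overline{\mu}_{p,\kappa}$ on the same balls, recorded in Eq.~\eqref{eq:cardbocgpkk} of Proposition~\ref{prop:misbasinvlim}, I see that they coincide in all three cases. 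Hence the two measures agree on every ball of the topology base.

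To upgrade agreement on balls to agreement on all of $\mathcal{B}\big(\so(2)_{p,\kappa}\big)$, I would use the ultrametric dichotomy that any two clopen balls are either disjoint or nested. This means the balls form a $\pi$-system generating the Borel $\sigma$-algebra, and, equivalently, that every element of the generating algebra $A$ [Eq.~\eqref{eq:def.A}] is a \emph{finite disjoint} union of such balls, on which the two $\sigma$-additive measures must then agree by additivity. Since $A$ generates $\Sigma_{p,\kappa}(A)=\mathcal{B}\big(\so(2)_{p,\kappa}\big)$ by Proposition~\ref{prop:borelsalg}, and a finite measure on an algebra extends uniquely to the generated $\sigma$-algebra (Theorem~A p.~54 in~\cite{Halmos}), the two measures coincide on the entire Borel $\sigma$-algebra.

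The step I expect to require the most care is precisely this last passage from the generating family to the full $\sigma$-algebra; it is legitimate here because both measures are finite — indeed probabilities — and agree on the generating algebra, so the uniqueness of the Carathéodory–Hahn extension applies verbatim. One could also bypass the explicit ball computation entirely and argue more abstractly: both $\overline{\mu}_{p,\kappa}$ and $\widetilde{\mu}_2^{(\kappa)}$ are Haar measures on the compact group $\so(2)_{p,\kappa}$ (by Theorem~\ref{thr:HaarmSO} and by the construction in~\cite{our3rd}, respectively), so by essential uniqueness they differ by a positive constant; since both are normalised to unit total mass, that constant is forced to be $1$, whence $\overline{\mu}_{p,\kappa}=\widetilde{\mu}_2^{(\kappa)}$.
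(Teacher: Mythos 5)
Your proposal is correct and follows essentially the same route as the paper: reduce to balls $B_{-n}(\mathrm{I})$ by translation invariance, compute $\widetilde{\mu}_2^{(\kappa)}\big(B_{-n}(\mathrm{I})\big)$ via Lemma~\ref{prop.boccoord} and Lemma~\ref{prop.5.1} with $k=-n$, divide by the normalisation~\eqref{eq:normlafacts}, and match the result against Eq.~\eqref{eq:cardbocgpkk}. The only difference is presentational: you spell out the passage from agreement on the base of clopen balls to agreement on all of $\mathcal{B}\big(\so(2)_{p,\kappa}\big)$ (via the algebra $A$ and uniqueness of the extension), a step the paper leaves implicit, and your closing remark on essential uniqueness of the normalised Haar measure mirrors the observation the paper itself makes before its explicit verification.
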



\bibliographystyle{unsrt}

\begin{thebibliography}{10}

\bibitem{aniello2023}
P. Aniello, S. Mancini, V. Parisi, A $p$-adic model of quantum states and the $p$-adic qubit, Entropy 25 (1) 86 (2023).

\bibitem{our3rd} P. Aniello, S. L'Innocente, S. Mancini, V. Parisi, I. Svampa, A. Winter, Invariant measures on $p$-adic Lie groups: the $p$-adic quaternion algebra and the Haar integral on the $p$-adic rotation groups, 
Lett. Math. Phys. 114(78) (2024). \url{https://doi.org/10.1007/s11005-024-01826-8}

 
\bibitem{metricompladic}
M. F. Atiyah, I. G. MacDonald, Introduction to commutative algebra, Addison-Wesley Series in Mathematics, Addison-Wesley Publishing company, 1969.

\bibitem{Bochner}
S. Bochner, Harmonic analysis and the theory of probability, University of California Press, 1955.

\bibitem{BourInt}
N. Bourbaki, S. K. Berberian, Integration II, Chapters 7-9, Elements of Mathematics, Springer, 2004.

\bibitem{BourSet}
N. Bourbaki, Theory of sets, Elements of Mathematics, Springer, 2004.

\bibitem{BourTop}
N. Bourbaki, General topology, Chapters 1-4, Elements of Mathematics, Springer, 1995.

\bibitem{broughton}
A. Broughton, B. W. Huff, A comment on unions of sigma-fields, Am. Math. Mon. 84 (7) (1977) 553-554.

\bibitem{Cassels}
J. W. S. Cassels, Rational Quadratic Forms, London Mathematical Society Monographs Vol.\ 13, Courier Dover Publications, 2008.

\bibitem{Choksi}
J. R. Choksi, Inverse limits of measure spaces, Proc. Lond. Math. Soc. s3-8 (3) (1958) 321-342.

\bibitem{conradMULvar}
K. Conrad, A multivariable Hensel's lemma, available online at \\\url{https://kconrad.math.uconn.edu/blurbs/gradnumthy/multivarhensel.pdf}.

\bibitem{our1st}
S. Di Martino, S. Mancini, M. Pigliapochi, I. Svampa, A. Winter, Geometry of the $p$-adic special orthogonal group $\so(3)_p$, Lobachevskii J.\ Math. 44 (6)  (2023) 2135-2159.

\bibitem{Dirac}
P. A. M. Dirac, Quantised singularities in the electromagnetic field, Proc. Roy. Soc. London A 133 (1931) 60-72. 

\bibitem{fisher}
B. Fisher, A note on Hensel's lemma in several variables, Proc. Am. Math. Soc. 125 (11) (1997) 3185-3189.

\bibitem{Folland}
G. B. Folland, A course in abstract harmonic analysis, Studies in advanced mathematics Vol.\ 29, CRC Taylor and Francis, 2016.

\bibitem{Folland99}
G.B. Folland, Real analysis, Pure and Applied Mathematics: A Wiley Series of Texts, Monographs, and Tracts, John Wiley, 1999.

\bibitem{profinitem}
M. D. Fried, M. Jarden, Field arithmetic, A series of Modern Surveys in Mathematics 3 Vol.\ 11, Springer, 2005.

\bibitem{Fuchs}
L. Fuchs, Abelian groups, Springer Monographs in Mathematics, Springer, 2015.

\bibitem{Gouvea}
F. Q. Gouv\^ea, {$p$-Adic numbers: an introduction}, Universitext, Springer, 2020.

\bibitem{seminalHaar}
A. Haar, Der Massbegriff in der Theorie der Kontinuierlichen Gruppen, Annals of Mathematics (Second Series) 34 (1) (1933) 147–169.

\bibitem{Halmos}
P. R. Halmos, Measure theory, Graduate Texts in Mathematics Vol.\ 18, Springer, 1974.

\bibitem{hewitt1979}
E. Hewitt, K. A. Ross, Abstract harmonic analysis I, Grundlehren der mathematischen Wissenschaften Vol.\ 115, Springer, 1979.

\bibitem{igusa2000}
J. Igusa, An introduction to the theory of local zeta functions, Studies in Advanced Mathematics Vol.\ 14, American Mathematical Society International Press, 2000.

\bibitem{Lam}
T.Y. Lam, Introduction to quadratic forms over fields, Graduate Studies in Mathematics Vol.\ 67, American Mathematical Society, 2005.

\bibitem{Niklas}
H. Reiter, J. D. Stegeman, Classical harmonic analysis and locally compact groups, London Mathematical Society Monographs New Series Vol.\ 22, Oxford University Press, 2001.

\bibitem{profinite}
L. Ribes, P. Zalesskii, Profinite groups, A Series of Modern Surveys in Mathematics, Springer, 2010.

\bibitem{Rotman}
J. J. Rotman, An introduction to homological algebra, Universitext, Springer, 2008.

\bibitem{serre}
J.-P. Serre, A course in arithmetic, Graduate Texts in Mathematics, Vol.\ 7, Springer, 1973.

\bibitem{serre2}
J.-P. Serre, Galois cohomology, Springer Monographs in Mathematics, Springer, 1994.

\bibitem{our2nd} 
I. Svampa, S. Mancini, A. Winter, An approach to $p$-adic qubits from irreducible representations of $\so(3)_p$, J.\ Math.\ Phys. 63 (7) (2022), 072202.

\bibitem{rooij78}
A.C.M. van Rooij, Non-Archimedean functional analysis, Monographs and textbooks in pure and applied mathematics Vol.\ 51, Marcel Dekker, 1978.

\bibitem{VaradarajanCit}
V. S. Varadarajan, Supersymmetry for mathematicians: an introduction, Courant Lecture Notes 11, American Mathematical Society, 2004.

\bibitem{Varadarajan1}
V. S. Varadarajan, J. T. Virtanen, Structure, classification, and conformal symmetry of elementary particles over non-Archimedean space-time, $P$-Adic Num. Ultrametr. Anal. Appl. 2 (2) (2010) 157-174.

\bibitem{Varadarajan2}
V. S. Varadarajan, Multipliers for the symmetry groups of $p$-adic spacetime, $P$-Adic Numbers, Ultrametr. Anal. Appl. 1 (2009) 69-78.

\bibitem{Volovich2}
V. S. Vladimirov, I. V. Volovich, $p$-Adic quantum mechanics, Commun. Math. Phys. 123 (1989) 659–676.

\bibitem{VVZ} 
V. S. Vladimirov, I. V. Volovich, E. I. Zelenov, $p$-Adic analysis and mathematical physics, Series on Soviet and East European Mathematics Vol.\ 1, World Scientific, 1994.

\bibitem{Volovich1}
I. V. Volovich, $p$-Adic space-time and string theory, Theor. Math. Phys. 71(3) (1987) 574–576.

\bibitem{vonNeumann}
J. von Neumann, Functional operators, vol.\ I: measures and integrals, Annals of Mathematics Studies, Vol.\ 21, Princeton University Press, 1950.




\end{thebibliography}


\end{document}